\def\BibTeX{{\rm B\kern-.05em{\sc i\kern-.025em b}\kern-.08em
    T\kern-.1667em\lower.7ex\hbox{E}\kern-.125emX}}
\DeclareMathOperator*{\argmax}{arg\,max}
\DeclareMathOperator*{\argmin}{arg\,min}
\theoremstyle{remark}
\newtheorem{definition}{Definition}[]
\newtheorem{theorem}{Theorem}[]
\newtheorem{proposition}{Proposition}[]
\newtheorem{lemma}{Lemma}[]
\newtheorem{remark}{Remark}[]
\newtheorem{corollary}{Corollary}[]
\DeclarePairedDelimiter\floor{\lfloor}{\rfloor}
\newcommand\footnoteref[1]{\protected@xdef\@thefnmark{\ref{#1}}\@footnotemark}
\DeclarePairedDelimiter\abs{\lvert}{\rvert}
\begin{document}
\title{{\color{black}Interference Avoidance Position Planning in Dual-hop and Multi-hop
UAV Relay Networks}}

\author{Seyyedali~Hosseinalipour,~\IEEEmembership{Student~Member,~IEEE,}
        Ali~Rahmati,~\IEEEmembership{Student~Member,~IEEE,}
        and~Huaiyu~Dai,~\IEEEmembership{Fellow,~IEEE}
\thanks{S. Hosseinalipour, A. Rahmati, and H. Dai are with the Department
of Electrical and Computer Engineering, North Carolina State University, Raleigh,
NC, USA e-mail: (\{shossei3,arahmat,hdai\}@ncsu.edu).
}
\thanks{
Part of this work was presented at the 2019 IEEE international conference on communication (ICC)~\cite{ourConf}.}
}
\maketitle
\begin{abstract}
We consider unmanned aerial vehicle (UAV)-assisted wireless communication employing UAVs as relays to increase the throughput between a pair of transmitter and receiver. We focus on developing effective methods to position the UAV(s) in the presence of interference in the environment, the existence of which makes the problem non-trivial and our methodology different from the current art. We study the optimal position planning, which aims to maximize the (average) {\color{black}signal-to-interference-ratio} (SIR) of the system, in the presence of: i) one major source of interference, ii) stochastic interference.  For each scenario, we first consider utilizing a single UAV in the dual-hop relay mode and determine its optimal position. Afterward, multiple UAVs in the multi-hop relay mode are considered, for which we investigate two novel problems concerned with determining the optimal number of required UAVs and developing an optimal distributed position alignment method. Subsequently, we propose a cost-effective method that simultaneously minimizes the number of UAVs and determines their optimal positions so as to guarantee a certain (average) SIR of the system. Alternatively, for a given number of UAVs, we develop a fully distributed placement algorithm along with its performance guarantee. Numerical simulations are provided to evaluate the performance of our proposed methods.  
\end{abstract}
\begin{IEEEkeywords}
UAV, relay networks, interference avoidance, position planning, interference mitigation. 
\end{IEEEkeywords}
\vspace{-2mm}
\section{Introduction}\label{sec:intro}
\noindent \IEEEPARstart{R}{ecently},  unmanned aerial vehicles (UAVs) have  been considered  as
a promising solution  for a variety of critical applications
such as environmental surveillance, public safety, disaster
relief, search and rescue, and purchase delivery~\cite{hayat2016survey}. Considering \textit{relaying} as one of the most elegant data transmission techniques in wireless communications~\cite{RelayImp1,RelayImp2,RelayImp3}, one of the recent applications of UAVs is utilizing them as relays in wireless networks~\cite{mag,7577063,mag2}.
Constructing a UAV communication network for
such  applications is a non-trivial task since there is no
regulatory and pre-allocated spectrum
band for the UAVs. As a result, this network usually coexists with other communication networks, e.g., cellular networks~\cite{saleem2015integration, rahmati2019energy}.
Thus, studying the problem of \textit{interference avoidance/mitigation} for the UAV communication network is critical, where the inherent mobility feature of the UAVs can be deployed as an interference evasion mechanism. This fact is the main motivation behind this work.

 In most of the related literature, the position planning for a single UAV, which is considered either as a gateway between a set of sensors and a ground node or as a relay node between a pair of transmitter and receiver, is developed~\cite{ladosz2016optimal,jiang2018power,nagubandi2018rasi,zhan2006wireless,ono2016wireless,wang2017improving,zeng2016throughput,zhang2018joint,8116613,chen2018local,faqir2018energy}. 
In~\cite{ladosz2016optimal}, the optimal position of a set of UAV relays is studied to improve  the  network connectivity  and  communication  performance  of  a  team  of ground   nodes/vehicles, where there is no communications among the UAVs themselves.
In~\cite{jiang2018power}, a UAV is employed as a mobile relay to ferry data between two disconnected ground nodes. This work aims to maximize the end-to-end throughput of the system by optimizing the source/relay power allocation and the UAV's trajectory.
In~\cite{nagubandi2018rasi}, UAV-assisted relay networks are studied in the context of cyber-physical systems, where a relay-based secret-key generation technique between two UAVs are proposed.
 In \cite{zhan2006wireless}, optimal deployment of a
UAV in a wireless relay communication
system is studied in order to improve the quality of communications between
two obstructed access points, while the symbol
error rate  is kept below a certain threshold.
 In \cite{ono2016wireless},  UAVs are utilized as  moving relays among the ground stations
with disconnected communication links in the event of disasters, where a variable-rate relaying approach is proposed to optimize the outage probability and information
rate.
In \cite{wang2017improving}, UAV-enabled mobile relaying in the context of the wiretap channel is proposed to facilitate secure
wireless communications, the goal of which is to maximize the secrecy rate of the system.
 In \cite{zeng2016throughput}, considering the usage of a UAV as relay between a pair of transmitter and receiver,  an end-to-end throughput
maximization problem  is formulated to optimize the
relay trajectory and the source/relay power allocations
in a finite time horizon.  
In \cite{zhang2018joint}, a UAV works as an
amplify-and-forward relay between a base station and a mobile device. The trajectory and the transmit power of the UAV and the transmit power of the mobile device are obtained aiming to minimize 
the outage probability of the system.
In \cite{8116613}, the placement of a UAV in
both static and mobile relaying 
 schemes is investigated to maximize the reliability of the network, for which the total power loss,
 the overall outage, and the overall bit error rate are used as
 reliability measures. Also, it is shown that 
that decode-and-forward relaying is better than amplify-and-forward relaying in terms of reliability.
In~\cite{chen2018local}, position planning of a UAV relay is studied to provide connectivity or a capacity boost for the ground users in a  dense  urban  area, where a nested segmented propagation  model  is  proposed  to  model  the  propagation from the UAV to the ground user that might be blocked by obstacles.
 In \cite{faqir2018energy}, the optimization of propulsion and transmission energies of a UAV relay is considered, where the problem is studied as an optimal control problem for energy minimization based on dynamic models for both transmission and mobility.
 Studying the UAV placement planning in the multi-hop relay communication context, in which multiple UAVs can be utilized between the transmitter and the receiver, is a new topic studied in~\cite{challita2017network,zhang2018trajectory,li2018placement,chen2018multiple}. The aim of these works is similar to the aforementioned literature; however, data transmission through multiple UAVs  makes their methodology different. Moreover, there are some similar works in the literature of sensor networks, among which the most relevant ones are~\cite{roh2010optimal, {chattopadhyay2017deploy}}. In \cite{roh2010optimal}, the two-dimensional (2-D) placement of relays is investigated aiming to increase the achievable transmission rate. In \cite{chattopadhyay2017deploy}, the \textit{impromptu} (as-you-go) placement of the relay nodes between a pair of source and sink node is addressed considering the distance between those nodes as a random variable, where the space is restricted to be (1-D). 
 
 Nevertheless, none of the aforementioned works consider the placement of UAV(s) in the presence of interference in the environment. This work can be broken down into two main parts. In the first part, we aim to go one step beyond the current literature and investigate the UAV-assisted wireless communication paradigm in the presence of a major source of interference (MSI), which refers to the source of interference with the dominant effect in the environment. Considering different interpretations for the MSI, e.g., a primary transmitter in UAV cognitive radio networks \cite{saleem2015integration,rahmati2019dynamic}, an eNodeB in UAV-assisted LTE-U/WiFi public safety networks \cite{athukoralage2016regret}, a malicious user in drone delivery application, or a base station in surveillance application, our paper can be adapted to multiple real-world scenarios. Given the intractability of direct analysis upon having multiple sources of interference in the network, we later show that the interference caused by multiple sources of interference with known locations can be modeled as the interference of a single hypothetical MSI, making our framework and analysis applicable to a wider range of applications. In the second part, we consider a distinct scenario, in which, due to the limited knowledge of the positions of the sources of interference or the time varying nature of the environment, we model the interference as a stochastic phenomenon. For each part, we study the optimal placement planning upon having a single UAV, i.e., dual-hop single link scheme, and multiple UAVs, i.e., multi-hop single link scheme, acting as relays between the transmitter and the receiver. The existence of interference renders our methodology different compared to the current literature; however, the previously derived results can be considered as especial cases in our model by assuming that the MSI is located too far away or it possesses an insignificant transmitting power. Hence, the methodology proposed in this work can motivate multiple follow up works revisiting the previously studied problems  considering the presence of interference in their models. Moreover, compared with the relevant literature on multi-hop UAV-assisted relay communication, e.g.,~\cite{challita2017network,zhang2018trajectory,li2018placement,chen2018multiple}, mostly focused on obtaining the optimal location/trajectory of the UAVs, in addition to incorporating the interference into our model, we introduce and investigate two new problems: i) determining the minimum required number of UAVs and their locations so as to satisfy a desired (average) {\color{black}signal-to-interference-ratio (SIR)}, or equivalently data rate, of the system; ii) developing a distributed placement algorithm, which requires message passing only among adjacent UAVs to maximize the (average) SIR of the system.
 \subsection{Contributions}

 1) We investigate the problem of optimal UAV position planning considering the effect of interference in the environment in the decode-and-froward relay communication context for both the dual-hop and the multi-hop relay settings. We pursue the problem considering i) the existence of an MSI in the network, and ii) stochastic interference. Moreover, we propose and investigate two novel problems in the multi-hop relay setting: i) determining the minimum required number of UAVs and their optimal positions, and ii) developing a distributed position alignment algorithm.
 
 2) Considering a single UAV and an MSI, we develop a theoretical approach to identify the optimal position of the UAV to maximize the SIR of the system in the dual-hop setting.  We also address the position planning for a single UAV upon having stochastic interference.
 
 3)  In the multi-hop relay context, considering the existence of an MSI, we develop a theoretical framework that simultaneously determines the minimum required number of UAVs and their optimal positions so as to satisfy a predetermined/desired SIR of the system. We also develop a similar framework considering the stochastic interference in the environment and investigate the optimally of our approach upon having independent and identically distributed (i.i.d.) and non-i.i.d interference along the horizontal axis.
 
 4)  In the multi-hop relay context, considering the existence of an MSI and given the number of UAVs, we propose an optimal distributed algorithm that achieves the maximum attainable SIR of the system, which only requires message exchange among the adjacent UAVs. We also propose a distributed position planning considering stochastic interference and investigate its optimality upon having i.i.d. and non-i.i.d interference along the horizontal axis.
 
 \section{Preliminaries}
\noindent We consider data transmission between a pair of transmitter (Tx) and receiver (Rx) co-existing with a major source of interference (MSI). We consider a \textit{left-handed coordination system} $(x,y,h)$, where the Tx, the Rx, and the MSI are assumed to be on the ground plane defined as $h=0$. The locations of the Tx, the Rx, and the MSI are assumed to be $(0,0,0)$, $(D,0,0)$, and $(X_{_{\textrm{MSI}}},Y_{_{\textrm{MSI}}},0)$, respectively. {\color{black} In practice, the location of the MSI can be estimated using jammer localization techniques (see \cite{jammerLoc} and references therein).} We assume $0\leq X_{_{\textrm{MSI}}}\leq D$ for simplicity, which can be readily generalized with minor modification. The transmission powers of the Tx, the UAV, and the MSI are denoted by $p_t$, $p_u$, and $p_{_{\textrm{MSI}}}$, respectively. To improve the transmission data rate, it is desired to place a UAV or a set of UAVs, each of which acting as a relay, between the Tx and the Rx. To have tractable solutions, we assume that the UAVs are placed at $y=0$ plane. While such a constraint impose certain limitations to our study, it allows us to obtain some first analytical results that provide insightful guidance for practical design in general and also are meaningful for some specific application scenarios. Also, considering legal regulations, we confine the altitude of the UAVs to $h\in[h_{min},h_{max}]$. 

 \begin{table*}[hbt!]
\centering
   \caption{Major notations.}\label{table:1}
 \begin{tabular}{|c|c|} 
 \hline
 $X_{_{\textrm{MSI}}}$& The horizontal position of the MSI in $h=0$ plane \\ 
\hline
$Y_{_{\textrm{MSI}}}$ & The vertical position of the MSI in $h=0$ plane  \\ 
\hline
 $h$& Altitude of the UAV\\ \hline
 $D$& The distance between the Tx and the Rx\\ \hline
  $p_t$& Transmission power of the Tx \\\hline
    $p_u$& Transmission power of the UAVs \\\hline
    $p_{_{\textrm{MSI}}}$& Transmission power of the MSI \\\hline
    $\textrm{SIR}_S$ & SIR of the system\\\hline
    $d_i$& The distance between node $i$ and $i-1$ in the multi-hop setting ($1\leq i\leq N+1$)\\\hline
    $N$& The number of UAVs in the multi-hop setting \\\hline
    $I_x$& A random variable denoting the power of interference at horizontal position $x$\\\hline
     $M_{I_{x}}$& The moment generating function of $I_x$\\\hline
    $\Bar{I}_{x}$ & A random variable denoting the normalized power of interference at horizontal position $x$\\\hline
 \end{tabular}
\end{table*}

 We consider the line-of-sight (LoS) and the non-line-of-sight (NLoS) channel models, for which the path-loss is given by:
\begin{equation}
    L^{\textrm{LoS}}_{i,j}= \mu_{_{\textrm{LoS}}} d_{i,j}^{\alpha},\;\;L^{\textrm{NLoS}}_{i,j}=\mu_{_{\textrm{NLoS}}} d_{i,j}^{\alpha},
\end{equation}
where $\mu_{_{\textrm{LoS}}}\triangleq C_{_{\textrm{LoS}}}\left(4\pi f_c/c\right)^\alpha$, $\mu_{_{\textrm{NLoS}}}\triangleq C_{_{\textrm{NLoS}}}\left(4\pi f_c/c\right)^\alpha$, $C_{_{\textrm{LoS}}}$ ($C_{_{\textrm{NLoS}}}$) is the excessive path loss factor incurred by shadowing, scattering, etc., in the LoS (NLoS) link, $f_c$ is the carrier frequency, $c$ is the speed of light, $\alpha=2$ is the path-loss exponent\footnote{The LOS model is used for the air-to-air channel between the UAVs, for which $\alpha=2$ is a well-known choice. In some scenarios, the value of $\alpha$ for the NLOS link is assumed to be greater than $2$. This leads to straightforward modifications in the derived results.}, and $d_{i,j}$ is the Euclidean distance between node $i$ and node $j$. The link between two UAVs (air-to-air) is modeled using the LoS model, while the link between the MSI and the Rx (ground-to-ground) is modeled based on the NLoS model. To model the link between a UAV and the Rx/Tx/MSI (air-to-ground and ground-to-air) either the LoS or the NLoS model~\cite{chen2018multiple,zhang2018trajectory} or a weighted average between the LoS model and the NLoS model~\cite{moza:internet,channel2,channel3} can be used. In this paper, we consider a general case and denote the path loss between a UAV $i$ and node $j$ located on the ground by $\eta_{_{\textrm{NLoS}}}d_{ij}^{2}$. We assume that $\eta_{_{\textrm{NLoS}}}$ is constant in the range $h\in[h_{min},h_{max}]$, and thus $\eta_{_{\textrm{NLoS}}}\triangleq g(\mu_{_{\textrm{LoS}}},\mu_{_{\textrm{NLoS}}},h_{min},h_{max})$, where $g$ is a function. Further discussions on obtaining the $g$ in different environments can be found in~\cite{moza:internet,channel2,channel3}.  Due to
the geographical limitations, direct communication between
the Tx and Rx is not considered, which is a valid assumption especially when the Tx and the Rx are far away or there are obstacles between them~\cite{chen2018multiple,zhang2018trajectory}.

\section{Position Planning for a Single UAV Considering an MSI}\label{sec:singleUAV}
\noindent Let $\textrm{SIR}_1$, $\textrm{SIR}_2$ denote the SIR at the UAV located at $(x,0,h)$ and the SIR at the Rx, respectively (see Fig.~\ref{fig:single}), which are given by:
\begin{equation*}
\hspace{-5mm}
\resizebox{0.9\hsize}{!}{$
      \textrm{SIR}_1(x,h)\hspace{-1mm}=\hspace{-1mm}\frac{p_t\hspace{-.14mm}/\hspace{-.15mm}(\hspace{-.15mm}\eta_{_{\textrm{NLoS}}} d_{_{\textrm{UAV},\textrm{Tx}}}^{2}\hspace{-.15mm})}{p_{_{\textrm{MSI}}}\hspace{-.15mm}/\hspace{-.15mm}(\hspace{-.15mm}\eta_{_{\textrm{NLoS}}} d_{_{\textrm{UAV},\textrm{MSI}}}^{2}\hspace{-.15mm})}\hspace{-1mm}=\hspace{-1mm}\frac{p_t \hspace{-.12mm} \left(\hspace{-.15mm}\left(\hspace{-.15mm}x-X_{_{\textrm{MSI}}}\hspace{-.15mm}\right)^2\hspace{-.15mm}+\hspace{-.14mm}Y_{_{\textrm{MSI}}}^2\hspace{-.15mm}+\hspace{-.12mm}h^2\hspace{-.15mm}\right)}{p_{_{\textrm{MSI}}}\left(x^2+h^2\right)},$
      }
\end{equation*}
\begin{equation}\label{eq:SIRexp}
\hspace{-0.0mm}
\resizebox{0.9\hsize}{!}{$
      \textrm{SIR}_2\hspace{-.1mm}(\hspace{-.15mm}x,h\hspace{-.15mm})\hspace{-1mm}= \hspace{-.199mm}\frac{\hspace{-.19mm}p_u\hspace{-.15mm}/\hspace{-.15mm}(\hspace{-.15mm}\eta_{_{\textrm{NLoS}}} \hspace{-.15mm}d_{_{\textrm{UAV},\textrm{Rx}}}^{2}\hspace{-.15mm})}{\hspace{-0.17mm}p_{_{\textrm{MSI}}}\hspace{-.15mm}/\hspace{-.15mm}(\hspace{-.15mm}\mu_{_{\textrm{NLoS}}} \hspace{-.15mm}d_{_{\textrm{Rx},\textrm{MSI}}}^{2}\hspace{-.15mm})}\hspace{-1mm}=
      \hspace{-1mm} \frac{\hspace{-.15mm}p_u\hspace{-.1mm} \left(\hspace{-.15mm}Y^2_{_{\textrm{MSI}}}\hspace{-.15mm}+\hspace{-.0mm}(\hspace{-.13mm}D-X_{_{\textrm{MSI}}}\hspace{-.13mm})^2\hspace{-.13mm}\right)}{\hspace{-0.1mm}p_{_{\textrm{MSI}}}\hspace{-0.15mm}\left(\left(\hspace{-0.14mm}D\hspace{-0.12mm}-\hspace{-0.12mm}x\hspace{-0.13mm}\right)\hspace{-.12mm}^2\hspace{-0.12mm}+\hspace{-0.2mm}h\hspace{-.2mm}^2\right)\hspace{-0.0mm} \left(\frac{\hspace{-0.15mm}\eta_{_{\textrm{NLoS}}}}{\hspace{-0.15mm}\mu_{_{\textrm{NLoS}}}}\hspace{-.17mm}\right)}.
      $}
\end{equation}
Considering the conventional \textit{decode-and-forward} relay mode, the SIR of the system $\textrm{SIR}_S$ is given by~\cite{chen2018multiple}:
 \begin{equation}\label{eq:SIR_s}
     \textrm{SIR}_S(x,h)=\min \big\{\textrm{SIR}_1(x,h),\textrm{SIR}_2(x,h)\big\} \;\;\forall x,h.
\end{equation}
Assuming equal bandwidths for both links, maximizing the data rate between the Tx and the Rx is equivalent to maximizing the $\textrm{SIR}_S$ by tuning the location of the UAV described as:
\begin{equation}\label{mainOpt}
    (x^*,h^*) = \argmax_{x\in [0,D], h\in [h_{min},h_{max}]} \textrm{SIR}_S(x,h).
\end{equation}
The presence of an MSI renders our approach different from most of the works mentioned in Section~\ref{sec:intro} mainly due to its effect on the SIR expressions making them non-convex with respect to (w.r.t) the position of the UAV(s), which leads to the inapplicability of the conventional optimization techniques. In this work, we exploit \textit{geometry} and \textit{functional analysis} to obtain the subsequent derivations. In the following, we propose two lemmas, which are later used to derive the main results.
\begin{figure}[t]
\includegraphics[width=8.9cm,height=2.3cm]{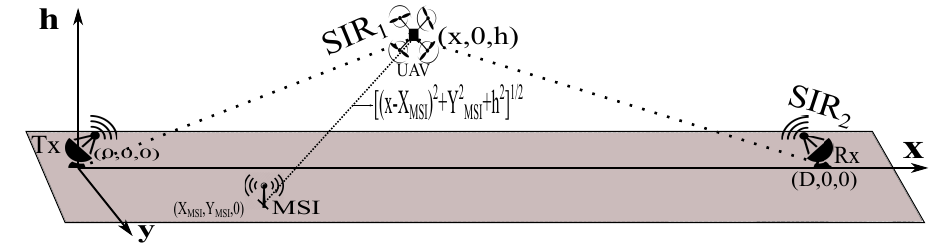}
		\caption{A single UAV acting as a relay between a pair of Tx and Rx coexisting with an MSI (dual-hop single link).}
		 \label{fig:single}
		 \end{figure}
\begin{definition}
In geometry, a \textit{locus} is the set of all points satisfying the same conditions or possessing the same properties.
\end{definition}
\begin{lemma}\label{th:main}
The locus of the points satisfying $\textrm{SIR}_1(x,h)=\textrm{SIR}_2(x,h)$ is given by the following expression\footnote{In this work, $+$ and $-$ superscripts always denote the larger and the smaller solution, respectively.}:
\begin{equation}\label{eq:locus}
    {h^{\pm}}=\sqrt{\Lambda^{\pm}(x)},
\end{equation}
with $\Lambda^{\pm}(x)\triangleq \big[-B(x)\pm \sqrt{B^2(x)-4A(x)C(x)}\big]/\left(2A(x)\right)$, where $A(x)$, $B(x)$, and $C(x)$ are given by~\eqref{eq:lambda}.
\begin{table*}[t]
\begin{minipage}{0.99\textwidth}
\begin{equation}\label{eq:lambda}
\begin{aligned}
     &A(x)=p_t, B(x)= p_t\left(X_{_{\textrm{MSI}}}-x\right)^2+p_t(D-x)^2-p_u       \left(\frac{\mu_{_{\textrm{NLoS}}}}{\eta_{_{\textrm{NLoS}}}}\right)\left(D-X_{_{\textrm{MSI}}}\right)^2+Y_{_{\textrm{MSI}}}^2\left(p_t-p_u       \left(\frac{\mu_{_{\textrm{NLoS}}}}{\eta_{_{\textrm{NLoS}}}}\right)\right), \\
     &C(x)= p_t\big[(D-x)^2\left(\left(X_{_{\textrm{MSI}}}-x\right)^2+Y_{_{\textrm{MSI}}}^2\right) \big]- p_u       \left(\frac{\mu_{_{\textrm{NLoS}}}}{\eta_{_{\textrm{NLoS}}}}\right)\Big[x^2\left(Y_{_{\textrm{MSI}}}^2+ (D-X_{_{\textrm{MSI}}})^2\right) \Big]
     \end{aligned}
\end{equation}
\hrulefill
\end{minipage}
\end{table*}
\end{lemma}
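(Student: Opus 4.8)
The plan is to treat the defining equation $\textrm{SIR}_1(x,h)=\textrm{SIR}_2(x,h)$ as an algebraic relation and reduce it to a quadratic in the single variable $h^2$, from which the stated locus follows directly by the quadratic formula.

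First I would equate the two expressions in \eqref{eq:SIRexp} and cancel the common factor $p_{_{\textrm{MSI}}}$ appearing in both denominators. Cross-multiplying to clear the remaining denominators $x^2+h^2$ and $(D-x)^2+h^2$ produces a polynomial identity in $h$. The key structural observation is that every term involving the altitude enters only through $h^2$ (never through $h$ by itself), so this identity is in fact a quadratic in $h^2$. Writing the left-hand side as $p_t\big((x-X_{_{\textrm{MSI}}})^2+Y_{_{\textrm{MSI}}}^2+h^2\big)\big((D-x)^2+h^2\big)$ and the right-hand side as $p_u(\mu_{_{\textrm{NLoS}}}/\eta_{_{\textrm{NLoS}}})\big(Y_{_{\textrm{MSI}}}^2+(D-X_{_{\textrm{MSI}}})^2\big)(x^2+h^2)$ makes this transparent.

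Next I would expand both sides and collect terms according to the powers $(h^2)^2$, $h^2$, and $1$. The $(h^2)^2$ coefficient comes solely from the left side and equals $p_t$, giving $A(x)=p_t$. The $h^2$ coefficient combines the $(P+Q)$ term from the left with the $R$ term from the right, and the constant term combines the $PQ$ product with the $Rx^2$ term, where I abbreviate $P\triangleq(x-X_{_{\textrm{MSI}}})^2+Y_{_{\textrm{MSI}}}^2$, $Q\triangleq(D-x)^2$, and $R\triangleq Y_{_{\textrm{MSI}}}^2+(D-X_{_{\textrm{MSI}}})^2$. Grouping the $Y_{_{\textrm{MSI}}}^2$ contributions that appear on both sides then yields precisely the coefficients $B(x)$ and $C(x)$ reported in \eqref{eq:lambda}.

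Finally, since $A(x)=p_t>0$, applying the quadratic formula to $A(x)(h^2)^2+B(x)h^2+C(x)=0$ gives $h^2=\Lambda^{\pm}(x)$, and taking the nonnegative square root produces $h^{\pm}=\sqrt{\Lambda^{\pm}(x)}$, with the $+$ and $-$ superscripts naming the larger and smaller roots in accordance with the stated convention. I do not anticipate a genuine obstacle in this derivation: the entire argument is a reduction to a quadratic, and the only place demanding care is the coefficient bookkeeping in the second step, where the $Y_{_{\textrm{MSI}}}^2$ cross terms and the $(x-X_{_{\textrm{MSI}}})^2$ expansion must be tracked so that the $B(x)$ and $C(x)$ groupings come out exactly as in \eqref{eq:lambda}. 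Any questions of feasibility, namely whether $\Lambda^{\pm}(x)$ is real, nonnegative, and lands in $[h_{min}^2,h_{max}^2]$, are not part of this purely geometric characterization and would be handled in the subsequent optimization steps.
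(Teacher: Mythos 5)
Your proposal is correct and is precisely the "algebraic manipulations" the paper omits: cross-multiplying the two SIR expressions, recognizing the result as a quadratic in $h^2$ (since $h$ appears only through $h^2$), and collecting coefficients indeed reproduces $A(x)=p_t$, $B(x)=p_t(P+Q)-p_u(\mu_{_{\textrm{NLoS}}}/\eta_{_{\textrm{NLoS}}})R$, and $C(x)=p_tPQ-p_u(\mu_{_{\textrm{NLoS}}}/\eta_{_{\textrm{NLoS}}})Rx^2$ exactly as in \eqref{eq:lambda}. Your closing remark that feasibility of $\Lambda^{\pm}(x)$ is deferred to the later optimization steps is also consistent with how the paper uses this lemma.
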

\begin{proof}
The proof can be carried out using algebraic manipulations, which is omitted due to the limited space.
\end{proof}
\begin{lemma}\label{lemma:lem1}
For $\textrm{SIR}_1$, the \textit{stationary point}~\cite{ref:stat} with respect to $x$, $\Psi^x$, is given by: 
\begin{equation}
    \begin{aligned}
  &\Psi^x= \frac{Y_{_{\textrm{MSI}}}^2+X_{_{\textrm{MSI}}}^2+\sqrt{(Y_{_{\textrm{MSI}}}^2+X_{_{\textrm{MSI}}}^2)^2+4X_{_{\textrm{MSI}}}^2h^2}}{2X_{_{\textrm{MSI}}}}.
\end{aligned}
\end{equation}
Also, $\textrm{SIR}_1$ has no stationary point with respect to $h$ when $h\in (h_{min},h_{max})$. With $\Psi^h\triangleq \frac{Y_{_{\textrm{MSI}}}^2+X_{_{\textrm{MSI}}}^2}{2X_{_{\textrm{MSI}}}}$, we have
\begin{equation}
    \begin{cases}
          \frac{\partial \textrm{SIR}_1(x,h)}{\partial x}\geq0 \;\textrm{if}\; x\geq \Psi^x,\\
          
          \frac{\partial \textrm{SIR}_1(x,h)}{\partial x}<0 \;\; \textrm{O.W.},
    \end{cases}
\;\;
    \begin{cases}
          \frac{\partial \textrm{SIR}_1(x,h)}{\partial h}\geq0\;\textrm{if}\;x\geq \Psi^h,\\
          \frac{\partial \textrm{SIR}_1(x,h)}{\partial h}<0\;\; \textrm{O.W}.
    \end{cases}
\end{equation}
Moreover,
\begin{equation}
   \hspace{-9mm}\resizebox{0.99\hsize}{!}{$ \displaystyle\max_{\hspace{9mm}x\in [0,D], h\in [h_{min},h_{max}]} \hspace{-6mm} \textrm{SIR}_1(x,h)=\frac{p_t(X_{_{\textrm{MSI}}}^2+Y_{_{\textrm{MSI}}}^2+h_{min}^2)}{p_{_{\textrm{MSI}}} h_{min}^2}.$}
\end{equation}
On the other hand, $\textrm{SIR}_2$ has no stationary point when $x\in (0,D), h\in (h_{min},h_{max})$ and
\begin{align}
         & \hspace{-.5mm}\frac{\partial \textrm{SIR}_2(x,h)}{\partial h}\leq0, \;\;
          \frac{\partial \textrm{SIR}_2(x,h)}{\partial x}\geq 0,\nonumber\\
          &\forall x\in [0,D], h\in [h_{min},h_{max}],
          \end{align}
          and
          \begin{equation}
    \hspace{-10mm}\max_{\hspace{10mm} x\in [0,D], h\in [h_{min},h_{max}]}\hspace{-14mm} \textrm{SIR}_2(x,h) =\hspace{-0.5mm}\frac{p_u\mu_{_{\textrm{NLoS}}}(Y_{_{\textrm{MSI}}}^2+(D-X_{_{\textrm{MSI}}})^2)}{p_{_{\textrm{MSI}}}\eta_{_{\textrm{NLoS}}}h_{min}^2}.
\end{equation}
\end{lemma}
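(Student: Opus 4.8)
The plan is to treat $\textrm{SIR}_1$ and $\textrm{SIR}_2$ separately, in each case differentiating the rational SIR expression with the quotient rule and reading off the sign of the resulting numerator, since every denominator that appears is strictly positive.

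For $\textrm{SIR}_1(x,h)=\frac{p_t}{p_{_{\textrm{MSI}}}}\cdot\frac{(x-X_{_{\textrm{MSI}}})^2+Y_{_{\textrm{MSI}}}^2+h^2}{x^2+h^2}$, I would first compute $\partial \textrm{SIR}_1/\partial x$; after clearing $(x^2+h^2)^2$, the sign is governed by the quadratic $X_{_{\textrm{MSI}}}\,x^2-(X_{_{\textrm{MSI}}}^2+Y_{_{\textrm{MSI}}}^2)\,x-X_{_{\textrm{MSI}}}\,h^2$. This parabola opens upward and has product of roots $-h^2<0$, so it possesses a single nonnegative root, which is exactly $\Psi^x$; the sign pattern (negative between the roots, positive outside) then yields the stated dichotomy for $\partial \textrm{SIR}_1/\partial x$ about $\Psi^x$. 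The analogous manipulation for the $h$-derivative collapses the numerator to $2h\bigl(2X_{_{\textrm{MSI}}}\,x-X_{_{\textrm{MSI}}}^2-Y_{_{\textrm{MSI}}}^2\bigr)$, whose sign for $h>0$ flips precisely at $x=\Psi^h$, giving the second dichotomy; since this expression vanishes only at $h=0\notin(h_{min},h_{max})$, no interior stationary point in $h$ exists.

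Next, to obtain $\max \textrm{SIR}_1$, I would combine the two monotonicity facts. As $\textrm{SIR}_1$ decreases then increases in $x$ with minimum at $\Psi^x>0$, its maximum over $x\in[0,D]$ sits at an endpoint. At $x=0$ one has $0<\Psi^h$, so $\textrm{SIR}_1$ is decreasing in $h$ and the best altitude is $h_{min}$, producing the candidate $\frac{p_t(X_{_{\textrm{MSI}}}^2+Y_{_{\textrm{MSI}}}^2+h_{min}^2)}{p_{_{\textrm{MSI}}}\,h_{min}^2}$. The remaining step is to show this dominates every value on the edge $x=D$: writing both in the normalized form $\frac{p_t}{p_{_{\textrm{MSI}}}}\bigl(1+r\bigr)$ with an appropriate remainder $r$ and invoking $0\le X_{_{\textrm{MSI}}}\le D$, the $x=0$ remainder is larger (bigger numerator, smaller denominator $h_{min}^2\le D^2+h^2$), so the claimed maximum follows. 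Finally, $\textrm{SIR}_2$ is immediate because its numerator is constant and its denominator is a positive multiple of $(D-x)^2+h^2$: differentiation gives $\partial \textrm{SIR}_2/\partial h\propto -2h\le 0$ and $\partial \textrm{SIR}_2/\partial x\propto 2(D-x)\ge 0$ on $[0,D]$, simultaneous vanishing in the interior is impossible (it would force $h=0$), and the maximum occurs where the denominator is smallest, at $x=D,\;h=h_{min}$, yielding the stated value.

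The main obstacle I anticipate is not the differentiation but the final boundary comparison for $\textrm{SIR}_1$: monotonicity alone only excludes interior maxima, so pinning the global optimum to $(0,h_{min})$ requires the explicit edge-versus-edge estimate, and this is precisely where the modeling assumption $X_{_{\textrm{MSI}}}\le D$ is needed.
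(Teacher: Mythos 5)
Your proof is correct and follows essentially the same route as the paper's (sketched) proof: differentiate the explicit rational SIR expressions, read off the sign of the resulting numerators to locate $\Psi^x$ and $\Psi^h$, and settle the maxima by boundary inspection. The only substantive differences are that your first-derivative sign analysis renders the paper's second-order (Hessian) check superfluous, and your explicit edge-versus-edge comparison between $x=0$ and $x=D$ supplies the boundary argument the paper leaves implicit.
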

\begin{proof}
The proof can be carried out by taking the following steps: (i) Analysis of $\frac{\partial \textrm{SIR}_1}{\partial x},\frac{\partial \textrm{SIR}_1}{\partial h},\frac{\partial \textrm{SIR}_2}{\partial x},\frac{\partial \textrm{SIR}_2}{\partial h}$ to obtain the stationary points. (ii) Examining the signs of $\frac{\partial^2 \textrm{SIR}_1}{\partial x^2} \frac{\partial^2 \textrm{SIR}_1}{\partial h^2}-\left(\frac{\partial^2 \textrm{SIR}_1}{\partial x \partial h}\right)^2$, $\frac{\partial^2 \textrm{SIR}_1}{\partial x^2}$, and $\frac{\partial^2 \textrm{SIR}_1}{\partial h^2}$ at the stationary points. (iii) Inspecting the behavior of the SIR expressions at the boundary points. 
\end{proof}

In practice, one of the following scenarios may occur: (i) The UAV position is vertically fixed and horizontally adjustable \cite{8449221,chen}. This may arise in urban applications, in which there is a desired altitude for the UAVs to avoid collision with other flying objects. (ii) The UAV position is horizontally fixed and vertically adjustable.  This happens specially in the surveillance and information gathering applications, in which the position of the UAV is fixed in the desired horizontal position and only the altitude can be tuned \cite{8531711}. (iii) The UAV position is neither vertically nor horizontally fixed, which is practical in non-urban areas with a few flying objects. In the following, we tackle these scenarios in order. 
Henceforth, whenever we refer to the roots of an equation or the points in the locus, the feasible space is confined to $x\in[0,D]$ and $h\in[h_{min},h_{max}]$.

\subsubsection{Finding the optimal horizontal position $x^*$ of the UAV for a given altitude $h=\hat{h}$}
In this case, we first analyze the result of Lemma~\ref{th:main} using Lemma~\ref{lemma:lem1} in the following corollary, based on which the optimal placement of the UAV is derived in Theorem~\ref{th:givenh}. 
\begin{table*}[t]
\begin{minipage}{0.99\textwidth}
\begin{equation}\label{eq:quartic}
\resizebox{0.95\hsize}{!}{$
p_t\left(x-X_{_{\textrm{MSI}}}\right)^2\left((D-x)^2+\hat{h}^2\right)+p_t( Y_{_{\textrm{MSI}}}^2+\hat{h}^2)(D-x)^2-p_u\left(\frac{\mu_{_{\textrm{NLoS}}}}{\eta_{_{\textrm{NLoS}}}}\right) x^2(Y_{_{\textrm{MSI}}}^2+(D-X_{_{\textrm{MSI}}})^2)+p_t \hat{h}^2(Y_{_{\textrm{MSI}}}^2+\hat{h}^2)-p_u\left(\frac{\mu_{_{\textrm{NLoS}}}}{\eta_{_{\textrm{NLoS}}}}\right) \hat{h}^2(Y_{_{\textrm{MSI}}}^2+(D-X_{_{\textrm{MSI}}})^2)=0
   $ }
\end{equation}
\end{minipage}
\end{table*}
\begin{corollary}\label{cor:1}
Given a fixed altitude $h=\hat{h}$, the horizontal positions satisfying \eqref{eq:locus} can be obtained by solving the \textit{quartic} equation given in~\eqref{eq:quartic}, where the characteristic of this equation considering $x\in [0,D]$ is described as follows:

\textbf{Case 1)} $\textrm{SIR}_1(0,\hat{h})< \textrm{SIR}_2(0,\hat{h})$: In this case, the quartic equation has no solution. With some algebraic manipulations, this case can be represented as the following constraint:
  \begin{equation}\label{con1}
     \frac{p_t}{p_u} < \frac{\mu_{_{\textrm{NLoS}}}(Y_{_{\textrm{MSI}}}^2+(D-X_{_{\textrm{MSI}}})^2)\hat{h}^2}{\eta_{_{\textrm{NLoS}}}(X_{_{\textrm{MSI}}}^2+Y_{_{\textrm{MSI}}}^2+\hat{h}^2)(D^2+\hat{h}^2)} \triangleq C_1.
  \end{equation}
Therefore, the necessary condition to have at least a feasible solution for~\eqref{eq:quartic} is $\textrm{SIR}_1(0,\hat{h})\geq \textrm{SIR}_2(0,\hat{h})$, which can be represented as $p_t/p_u\geq C_1$.

 \textbf{Case 2)} $p_t/p_u\geq C_1$ and $\Psi^x\geq D$ and $\textrm{SIR}_2(D,\hat{h}) \geq \textrm{SIR}_1(D,\hat{h})$: In this case, the quartic equation has one solution $x_{sol}$, which can be numerically obtained. This case can be represented by the following conditions: 
  \begin{equation}
     \begin{aligned}
          &\hspace{-0mm}\frac{Y_{_{\textrm{MSI}}}^2\hspace{-.5mm}+X_{_{\textrm{MSI}}}^2\hspace{-.5mm}+\hspace{-0mm}\sqrt{(Y_{_{\textrm{MSI}}}^2+X_{_{\textrm{MSI}}}^2)^2+4X_{_{\textrm{MSI}}}^2\hat{h}^2}}{2X_{_{\textrm{MSI}}}}\geq D,\\
          &\hspace{-0mm}C_1\leq \frac{p_t}{p_u} \leq C_2,\\
          &\hspace{-0mm}C_2\triangleq \frac{\mu_{_{\textrm{NLoS}}}(Y_{_{\textrm{MSI}}}^2+(D-X_{_{\textrm{MSI}}})^2)(D^2+\hat{h}^2)}{\eta_{_{\textrm{NLoS}}}\hat{h}^2((D-X_{_{\textrm{MSI}}})^2+Y_{_{\textrm{MSI}}}^2+\hat{h}^2)} .
          \end{aligned}
     \end{equation}
   
    \textbf{ Case 3)} $p_t/p_u\geq C_1$ and $\Psi^x\geq D$ and $\textrm{SIR}_2(D,\hat{h}) < \textrm{SIR}_1(D,\hat{h})$: In this case, the quartic equation has no solution.  This case can be represented by the following conditions:
    \begin{equation}
     \begin{aligned}
          &\frac{Y_{_{\textrm{MSI}}}^2\hspace{-.5mm}+\hspace{-.5mm}X_{_{\textrm{MSI}}}^2\hspace{-.5mm}+\hspace{-.5mm}\sqrt{(Y_{_{\textrm{MSI}}}^2+X_{_{\textrm{MSI}}}^2)^2+4X_{_{\textrm{MSI}}}^2\hat{h}^2}}{2X_{_{\textrm{MSI}}}}>D,\\
          & \frac{p_t}{p_u}> \max \big\{C_1,C_2\big\}.
          \end{aligned}
     \end{equation}
     
  \textbf{  Case 4) }$p_t/p_u\geq C_1$ and $\Psi^x<D$ and $\textrm{SIR}_1(\Psi^x,\hat{h})\leq \textrm{SIR}_2(\Psi^x,\hat{h})$: In this case, the quartic equation has at least a feasible solution. This condition can be represented as follows:
  \begin{equation}
     \begin{aligned}
          &\frac{Y_{_{\textrm{MSI}}}^2+X_{_{\textrm{MSI}}}^2+\sqrt{(Y_{_{\textrm{MSI}}}^2+X_{_{\textrm{MSI}}}^2)^2+4X_{_{\textrm{MSI}}}^2\hat{h}^2}}{2X_{_{\textrm{MSI}}}}<D,\\
          &C_1\leq \frac{p_t}{p_u} \leq C_3,
          \end{aligned}
     \end{equation}
     where $C_3$ is defined in~\eqref{C_3}.
     \begin{table*}
        \begin{minipage}{0.99\textwidth}
\begin{equation}\label{C_3}
    C_3\triangleq \frac{\mu_{_{\textrm{NLoS}}}\left(\left(\frac{Y_{_{\textrm{MSI}}}^2+X_{_{\textrm{MSI}}}^2+\sqrt{(Y_{_{\textrm{MSI}}}^2+X_{_{\textrm{MSI}}}^2)^2+4X_{_{\textrm{MSI}}}^2h^2}}{2X_{_{\textrm{MSI}}}} \right)^2+h^2\right)(Y_{_{\textrm{MSI}}}^2+(D-X_{_{\textrm{MSI}}})^2) }{\eta_{_{\textrm{NLoS}}}\left(\left(\left(\frac{Y_{_{\textrm{MSI}}}^2+X_{_{\textrm{MSI}}}^2+\sqrt{(Y_{_{\textrm{MSI}}}^2+X_{_{\textrm{MSI}}}^2)^2+4X_{_{\textrm{MSI}}}^2h^2}}{2X_{_{\textrm{MSI}}}} \right)-X_{_{\textrm{MSI}}}\right)^2+Y_{_{\textrm{MSI}}}^2+h^2\right)\left(\left(D-\frac{Y_{_{\textrm{MSI}}}^2+X_{_{\textrm{MSI}}}^2+\sqrt{(Y_{_{\textrm{MSI}}}^2+X_{_{\textrm{MSI}}}^2)^2+4X_{_{\textrm{MSI}}}^2h^2}}{2X_{_{\textrm{MSI}}}}\right)^2+h^2\right)}
\end{equation}
\begin{minipage}{1.0065\textwidth}
  \begin{tabularx}{1\textwidth}{Xp{6cm}X}
  \begin{equation}\label{eq:case1}
  \hspace{-15.0mm}
\resizebox{0.90\hsize}{!}{$
        \textit{p}_u/p_t \leq \max \Bigg\{\frac{\eta_{_{\textrm{NLoS}}}\left(\left(D-X_{_{\textrm{MSI}}}\right)^2+Y_{_{\textrm{MSI}}}^2+h_{max}^2 \right)h_{min}^2}{\mu_{_{\textrm{NLoS}}}\left( Y_{_{\textrm{MSI}}}^2+\left(D-X_{_{\textrm{MSI}}}\right)^2\right)\left(D^2+h_{max}^2\right)},\frac{\eta_{_{\textrm{NLoS}}}\left(\left(D-X_{_{\textrm{MSI}}}\right)^2+Y_{_{\textrm{MSI}}}^2+h_{min}^2 \right)h_{min}^2}{\mu_{_{\textrm{NLoS}}}\left( Y_{_{\textrm{MSI}}}^2+\left(D-X_{_{\textrm{MSI}}}\right)^2\right)\left(D^2+h_{min}^2\right)} \Bigg\}\hspace{-2.400em}$}
    \end{equation}
      &
  \begin{equation}\label{eq-b}
    \hspace{-5.55mm}
    \resizebox{0.82\hsize}{!}{ $
   {\textit{p}_t}/{p_u}\leq \frac{\mu_{_{\textrm{NLoS}}} h^2_{min}\left(Y_{_{\textrm{MSI}}}^2+\left(D-X_{_{\textrm{MSI}}}\right)^2 \right)}{\hspace{-2.59mm}\eta_{_{\textrm{NLoS}}}\left(X_{_{\textrm{MSI}}}^2+Y_{_{\textrm{MSI}}}^2+h^2_{min} \right)\left(D^2+h^2_{min} \right)}
  \hspace{-2.5mm} $} 
  \end{equation}
  \end{tabularx}
\end{minipage}
\end{minipage}
\end{table*}

      \textbf{   Case 5)} $p_t/p_u\geq C_1$ and $\Psi^x<D$ and $\textrm{SIR}_1(\Psi^x,\hat{h})> \textrm{SIR}_2(\Psi^x,\hat{h})$: In this case, the quartic equation may or may not have a feasible solution. This condition can be expressed as follows:
     \begin{equation}\label{conL}
         \begin{aligned}
            &\frac{Y_{_{\textrm{MSI}}}^2+X_{_{\textrm{MSI}}}^2+\sqrt{(Y_{_{\textrm{MSI}}}^2+X_{_{\textrm{MSI}}}^2)^2+4X_{_{\textrm{MSI}}}^2\hat{h}^2}}{2X_{_{\textrm{MSI}}}}<D,\\
            &\frac{p_t}{p_u}\geq \max\{C_1,C_3\}.
         \end{aligned}
     \end{equation}
\end{corollary}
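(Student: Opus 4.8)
The plan is to turn the quartic~\eqref{eq:quartic}, which specializes the locus of Lemma~\ref{th:main} to $h=\hat h$, into a one-dimensional root-counting problem for $f(x)\triangleq\textrm{SIR}_1(x,\hat h)-\textrm{SIR}_2(x,\hat h)$ on $[0,D]$, and to resolve it with the monotonicity supplied by Lemma~\ref{lemma:lem1}. Since both SIR denominators in~\eqref{eq:SIRexp} are strictly positive for every real $x$ (as $\hat h\ge h_{min}>0$), clearing them to obtain~\eqref{eq:quartic} is an equivalence; hence the feasible roots of~\eqref{eq:quartic} are exactly the zeros of $f$ in $[0,D]$, with none spurious. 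By Lemma~\ref{lemma:lem1}, at the fixed altitude $\hat h$ the map $\textrm{SIR}_1(\cdot,\hat h)$ is strictly decreasing for $x\le\Psi^x$ and strictly increasing for $x\ge\Psi^x$ (a single interior minimum at $\Psi^x$), whereas $\textrm{SIR}_2(\cdot,\hat h)$ is strictly increasing throughout $[0,D]$. Therefore $f$ is strictly decreasing on $[0,\min\{\Psi^x,D\}]$, and this is the structural fact that drives the count.

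The second step is to read off the thresholds. I would introduce the crossing ratio $R(x)\triangleq \dfrac{\mu_{_{\textrm{NLoS}}}\big(Y_{_{\textrm{MSI}}}^2+(D-X_{_{\textrm{MSI}}})^2\big)(x^2+\hat h^2)}{\eta_{_{\textrm{NLoS}}}\big((x-X_{_{\textrm{MSI}}})^2+Y_{_{\textrm{MSI}}}^2+\hat h^2\big)\big((D-x)^2+\hat h^2\big)}$, for which $\textrm{SIR}_1(x,\hat h)=\textrm{SIR}_2(x,\hat h)\iff p_t/p_u=R(x)$, so that a feasible solution of~\eqref{eq:quartic} exists iff the horizontal line of height $p_t/p_u$ meets the graph of $R$ over $[0,D]$. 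A direct evaluation then identifies $C_1=R(0)$, $C_2=R(D)$, and $C_3=R(\Psi^x)$, and the case hypotheses $\textrm{SIR}_1\lessgtr\textrm{SIR}_2$ at $x\in\{0,D,\Psi^x\}$ become, after routine algebra, precisely the stated inequalities between $p_t/p_u$ and $C_1,C_2,C_3$.

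With these identifications, the count follows from the intermediate value theorem applied to the strictly decreasing $f$. If $\Psi^x\ge D$ (Cases~2--3), $f$ is strictly decreasing on all of $[0,D]$ with $f(0)\ge0$, so there is exactly one root when $f(D)\le0$, i.e.\ $p_t/p_u\le C_2$ (Case~2), and none when $f(D)>0$, i.e.\ $p_t/p_u>\max\{C_1,C_2\}$ (Case~3). If $\Psi^x<D$ and $f(\Psi^x)\le0$ (Case~4, $p_t/p_u\le C_3$), strict monotonicity of $f$ on $[0,\Psi^x]$ together with $f(0)\ge0\ge f(\Psi^x)$ yields at least one root; the qualifier ``at least'' absorbs possible further crossings on the increasing branch $[\Psi^x,D]$. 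If $\Psi^x<D$ and $f(\Psi^x)>0$ (Case~5, $p_t/p_u\ge\max\{C_1,C_3\}$), then $f>0$ on $[0,\Psi^x]$, so every root must lie on $[\Psi^x,D]$, where $f$ is a difference of two increasing functions and the outcome is genuinely indeterminate---whence ``may or may not.''

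Finally, Case~1 ($p_t/p_u<C_1$) is the only case that needs more than the decreasing branch, and it is the part I expect to be the crux. The claim of \emph{no} root, valid even when $\Psi^x<D$, is equivalent to $R$ attaining its minimum over $[0,D]$ at $x=0$. I would establish this directly: on $[0,D]$ the factor $(D-x)^2+\hat h^2$ is maximized at $x=0$, and the bound $\hat h^2\big((x-X_{_{\textrm{MSI}}})^2+Y_{_{\textrm{MSI}}}^2+\hat h^2\big)\le(x^2+\hat h^2)\big(X_{_{\textrm{MSI}}}^2+Y_{_{\textrm{MSI}}}^2+\hat h^2\big)$ holds because its two sides differ by the manifestly nonnegative quantity $x^2\big(X_{_{\textrm{MSI}}}^2+Y_{_{\textrm{MSI}}}^2\big)+2\hat h^2X_{_{\textrm{MSI}}}x$; multiplying these two comparisons gives $R(x)\ge R(0)=C_1$ for all $x\in[0,D]$. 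Hence $p_t/p_u<C_1$ forces $p_t/p_u<R(x)$, i.e.\ $\textrm{SIR}_1<\textrm{SIR}_2$, everywhere, so no crossing occurs. The genuine difficulty is thus confined to the increasing branch $[\Psi^x,D]$: proving the global-minimum bound that underpins Case~1, and recognizing that on this branch no monotonicity of $f$ survives, which is exactly why Cases~4 and~5 can only be stated as ``at least one'' and ``may or may not.''
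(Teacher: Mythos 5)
Your proposal is correct and follows essentially the same route as the paper's (very terse) proof: exploit the monotonicity of $\textrm{SIR}_2(\cdot,\hat h)$ and the decreasing-then-increasing shape of $\textrm{SIR}_1(\cdot,\hat h)$ from Lemma~\ref{lemma:lem1}, compare the two expressions at $x\in\{0,\Psi^x,D\}$ to recover $C_1,C_2,C_3$, and count crossings via the intermediate value theorem. Your explicit verification that $R(x)\ge R(0)=C_1$ on all of $[0,D]$ is a genuine and necessary addition: the paper's two monotonicity facts alone do not rule out a crossing on the increasing branch $[\Psi^x,D]$ in Case~1, and your factorized bound closes exactly that gap.
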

\begin{proof}
  For a fixed altitude, according to Lemma~\ref{lemma:lem1}:  (i) $\textrm{SIR}_2$ is a monotone increasing function w.r.t $x$, and
    (ii) depending on the value of the stationary points of $\textrm{SIR}_1$, $\textrm{SIR}_1$ is a monotone decreasing function w.r.t $x$ in the interval $x\in [0,\Psi^x)$ and a non-decreasing function w.r.t $x$ in the interval $x\in[\Psi^x,D]$. This corollary is a result of these two facts combined with the usage of functional analysis. 
\end{proof}

In the following theorem, we use the results of Corollary~\ref{cor:1} to determine the optimal position of the UAV. However, the above corollary also provides a practical guide to design the $p_t$ and $p_u$ w.r.t the position of the MSI, which can be obtained through calculation of $C_1$ through $C_3$, and the conditions given on the ratio of these two variables in~\eqref{con1}-\eqref{conL}. Similarly, it discloses useful guides for the malicious user to effectively place the MSI. Nonetheless, we leave these interpretations as future work since they are not the focus of this paper. 
\begin{theorem}\label{th:givenh}
Given a fixed altitude $h=\hat{h}$, the optimal horizontal position of the UAV $x^*$ for the cases defined in Corollary~\ref{cor:1} is as follows: In case 1, $x^*=0$. In case 2, $x^*=x_{sol}$. In case 3, $x^*=D$. In case 4, let $x_{sol}$ denote the smallest solution of the quartic equation~\eqref{eq:quartic}, if $\textrm{SIR}_1(x_{sol},\hat{h})\geq \textrm{SIR}_1(D,\hat{h})$ then $x^*=x_{sol}$; otherwise, $x^*=D$.
In case 5, $x^*=D$.
\end{theorem}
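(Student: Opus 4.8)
The plan is to maximize $\textrm{SIR}_S(x,\hat h)=\min\{\textrm{SIR}_1(x,\hat h),\textrm{SIR}_2(x,\hat h)\}$ over $x\in[0,D]$ by reducing everything to the one-dimensional monotonicity structure supplied by Lemma~\ref{lemma:lem1}: for fixed $h=\hat h$, (i) $\textrm{SIR}_2$ is strictly increasing on $[0,D]$, and (ii) $\textrm{SIR}_1$ is strictly decreasing on $[0,\Psi^x]$ and strictly increasing on $[\Psi^x,D]$, i.e.\ it is U-shaped with its unique stationary point a minimum. Two elementary facts about the pointwise minimum will carry most of the argument: \textbf{(a)} the minimum of two increasing functions is itself increasing, hence maximized at the right endpoint; and \textbf{(b)} on an interval where one function increases and the other decreases, $\min$ increases up to their unique crossing and decreases after it, so it is maximized exactly at that crossing. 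Corollary~\ref{cor:1} already records, in each case, how many times the two curves meet on $[0,D]$ and the sign of $\textrm{SIR}_1-\textrm{SIR}_2$ at the relevant endpoints and at $\Psi^x$, so the task is to translate that crossing bookkeeping into the location of the maximizer.

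Before the case analysis I would record the auxiliary inequality $\textrm{SIR}_1(0,\hat h)\ge \textrm{SIR}_1(D,\hat h)$, valid for every $0\le X_{_{\textrm{MSI}}}\le D$. A short reduction shows
\begin{equation*}
\textrm{SIR}_1(0,\hat h)-\textrm{SIR}_1(D,\hat h)=\frac{p_t}{p_{_{\textrm{MSI}}}}\cdot\frac{(X_{_{\textrm{MSI}}}^2+Y_{_{\textrm{MSI}}}^2)D^2+2DX_{_{\textrm{MSI}}}\hat h^2}{\hat h^2(D^2+\hat h^2)}\ge 0,
\end{equation*}
so among the two endpoints the U-shaped $\textrm{SIR}_1$ is always larger at $x=0$. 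This is exactly what is needed in Case~1: there the quartic has no root and $\textrm{SIR}_1(0,\hat h)<\textrm{SIR}_2(0,\hat h)$, so by continuity $\textrm{SIR}_1<\textrm{SIR}_2$ on all of $[0,D]$ and $\textrm{SIR}_S=\textrm{SIR}_1$; maximizing the U-shaped $\textrm{SIR}_1$ then places the optimum at the larger endpoint $x^*=0$.

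Next I would dispatch the three ``monotone'' cases. In Cases~2 and~3, $\Psi^x\ge D$ makes $\textrm{SIR}_1$ decreasing throughout $[0,D]$ while $\textrm{SIR}_2$ increases. Case~2 has $\textrm{SIR}_1\ge\textrm{SIR}_2$ at $x=0$ and $\textrm{SIR}_2\ge\textrm{SIR}_1$ at $x=D$, so by fact \textbf{(b)} there is a single crossing $x_{sol}$ and $x^*=x_{sol}$. Case~3 keeps $\textrm{SIR}_1>\textrm{SIR}_2$ even at $x=D$; since $\textrm{SIR}_1-\textrm{SIR}_2$ is decreasing and positive at the right endpoint it is positive throughout, so $\textrm{SIR}_S=\textrm{SIR}_2$ is increasing and $x^*=D$. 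Case~5 is handled by fact \textbf{(a)}: there $\textrm{SIR}_1(\Psi^x)>\textrm{SIR}_2(\Psi^x)$, and because $\textrm{SIR}_1-\textrm{SIR}_2$ is decreasing on $[0,\Psi^x]$ and positive at $\Psi^x$ it stays positive, so $\textrm{SIR}_S=\textrm{SIR}_2$ on $[0,\Psi^x]$; on $[\Psi^x,D]$ both SIRs increase, hence their minimum increases too. Thus $\textrm{SIR}_S$ is increasing on the whole interval and $x^*=D$.

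The main obstacle is Case~4, where the quartic may have more than one root and $\textrm{SIR}_1$ dips below $\textrm{SIR}_2$ at $\Psi^x$ before climbing again. My plan is to split $[0,D]$ at $\Psi^x$. On $[0,\Psi^x]$, $\textrm{SIR}_1$ decreases from $\textrm{SIR}_1(0)\ge\textrm{SIR}_2(0)$ to $\textrm{SIR}_1(\Psi^x)\le\textrm{SIR}_2(\Psi^x)$, so $\textrm{SIR}_1-\textrm{SIR}_2$ is strictly decreasing with a single sign change, giving exactly one crossing---the smallest root $x_{sol}$---and by fact \textbf{(b)} the best point on this sub-interval is $x_{sol}$, with value $\textrm{SIR}_1(x_{sol})$. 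On $[\Psi^x,D]$ both SIRs increase, so by fact \textbf{(a)} their minimum increases and is maximized at $x=D$, reducing the global comparison to $x_{sol}$ versus $D$. The last step is to justify comparing $\textrm{SIR}_1(x_{sol})$ against $\textrm{SIR}_1(D)$ rather than against $\textrm{SIR}_S(D)=\min\{\textrm{SIR}_1(D),\textrm{SIR}_2(D)\}$: since $\textrm{SIR}_1(x_{sol})=\textrm{SIR}_2(x_{sol})\le\textrm{SIR}_2(D)$, whenever $\textrm{SIR}_2(D)\le\textrm{SIR}_1(D)$ the endpoint $D$ already dominates $x_{sol}$ and the stated test $\textrm{SIR}_1(x_{sol})\ge\textrm{SIR}_1(D)$ correctly returns $x^*=D$; and whenever $\textrm{SIR}_1(D)\le\textrm{SIR}_2(D)$ we have $\textrm{SIR}_S(D)=\textrm{SIR}_1(D)$, so the comparison is literally against $\textrm{SIR}_1(D)$. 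Either way the rule ``$x^*=x_{sol}$ if $\textrm{SIR}_1(x_{sol})\ge\textrm{SIR}_1(D)$, else $x^*=D$'' is exactly right, which completes the argument.
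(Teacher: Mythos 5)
Your proof is correct and follows the same route the paper intends: its own proof is a one-line appeal to Corollary~\ref{cor:1} and the monotonicity structure of Lemma~\ref{lemma:lem1}, and your case analysis (min of two increasing functions is increasing; min of an increasing and a decreasing function peaks at their crossing) is precisely the omitted detail, including the correct justification in Case~4 for why comparing against $\textrm{SIR}_1(D,\hat h)$ rather than $\textrm{SIR}_S(D,\hat h)$ suffices. The auxiliary inequality $\textrm{SIR}_1(0,\hat h)\ge \textrm{SIR}_1(D,\hat h)$ you verify is a useful addition the paper leaves implicit for Case~1.
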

\begin{proof}
The proof is an immediate result of Corollary~\ref{cor:1} considering the behaviors of the SIR expressions given in Lemma~\ref{lemma:lem1}.
\end{proof}

\subsubsection{Finding the optimal vertical position $h^*$ of the UAV for a given horizontal position $x=\hat{x}$}
In this case, the vertical positions (altitudes) satisfying~\eqref{eq:locus} can be easily derived since $\Lambda^{\pm}(x)$ on the right hand side of the equation is known. Using Lemma~\ref{lemma:lem1}, we obtain the following theorem to identify the optimal position of the UAV.
\begin{theorem}\label{th:givenX}
Given a fixed horizontal position $x=\hat{x}$, the optimal altitude $h^*$ of the UAV is given by:

\textbf{Case 1)} $\hat{x}\leq \Psi^h$: $h^*=h_{min}$. 
 
\textbf{Case 2)} $\hat{x}> \Psi^h$ and~\eqref{eq:locus} has a feasible solution (either $h^+$ or $h^-$ belong to $[h_{min},h_{max}]$): $h^*$ is the same as the feasible solution of~\eqref{eq:locus}.
 
\textbf{Case 3)} $\hat{x}>\Psi^h$ and~\eqref{eq:locus} has no feasible solution: $h^*$ can be derived by solely inspecting the boundary positions:
     \begin{equation}
         h^*=\argmax_{h\in \{h_{min},h_{max}\}} \textrm{SIR}_S(\hat{x},h).
     \end{equation}
\end{theorem}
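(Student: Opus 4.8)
The plan is to fix $x=\hat x$ and reduce the two-dimensional behaviour of $\textrm{SIR}_S$ to a one-dimensional monotonicity analysis in $h$, reading off everything I need from Lemma~\ref{lemma:lem1}. First I would record the two governing facts. (i) $\textrm{SIR}_2(\hat x,\cdot)$ is non-increasing in $h$ on the whole interval, since its $h$-derivative is $\le 0$ everywhere. (ii) Because $\textrm{SIR}_1$ has no interior stationary point in $h$ on $(h_{min},h_{max})$, its monotonicity type in $h$ is constant on $[h_{min},h_{max}]$ and is determined solely by the sign of $\hat x-\Psi^h$: non-increasing when $\hat x\le \Psi^h$ and (strictly) increasing when $\hat x>\Psi^h$. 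With these in hand the three cases of the theorem correspond exactly to the three possible qualitative shapes of $\textrm{SIR}_S(\hat x,\cdot)=\min\{\textrm{SIR}_1,\textrm{SIR}_2\}$.

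For Case 1 ($\hat x\le \Psi^h$), both $\textrm{SIR}_1(\hat x,\cdot)$ and $\textrm{SIR}_2(\hat x,\cdot)$ are non-increasing in $h$, so their pointwise minimum is non-increasing as well; hence it attains its maximum at the left endpoint and $h^*=h_{min}$. (The borderline $\hat x=\Psi^h$, where $\textrm{SIR}_1$ is constant in $h$, is covered by the same argument.) For Cases 2 and 3 ($\hat x>\Psi^h$), $\textrm{SIR}_1(\hat x,\cdot)$ is strictly increasing while $\textrm{SIR}_2(\hat x,\cdot)$ is strictly decreasing. I would then split on whether the two curves cross inside $[h_{min},h_{max}]$, noting that these crossings are precisely the feasible solutions of $\textrm{SIR}_1=\textrm{SIR}_2$ described by the locus of Lemma~\ref{th:main}.

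If a crossing lies in the interval (Case 2), then to its left $\textrm{SIR}_1<\textrm{SIR}_2$, so $\textrm{SIR}_S=\textrm{SIR}_1$ is increasing, and to its right $\textrm{SIR}_1>\textrm{SIR}_2$, so $\textrm{SIR}_S=\textrm{SIR}_2$ is decreasing; thus $\textrm{SIR}_S(\hat x,\cdot)$ is unimodal and its unique maximizer is exactly that crossing, giving $h^*$ equal to whichever of $h^{+},h^{-}$ is feasible. If no crossing lies in the interval (Case 3), one of $\textrm{SIR}_1,\textrm{SIR}_2$ dominates throughout, so $\textrm{SIR}_S(\hat x,\cdot)$ is monotone on $[h_{min},h_{max}]$ and its maximum is attained at an endpoint; comparing the two endpoints yields $h^*=\argmax_{h\in\{h_{min},h_{max}\}}\textrm{SIR}_S(\hat x,h)$.

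The main obstacle is the single-crossing claim together with the identification, in Case 2, of which branch of the minimum is active on each side of the crossing. I would settle both at once by observing that, for $\hat x>\Psi^h$, the difference $\textrm{SIR}_1(\hat x,\cdot)-\textrm{SIR}_2(\hat x,\cdot)$ is strictly increasing in $h$ (sum of a strictly increasing and the negative of a strictly decreasing function), hence has at most one zero and changes sign from negative to positive there. This simultaneously guarantees that at most one of $h^{+},h^{-}$ is a genuine feasible crossing—resolving the apparent ambiguity between the two locus roots—and pins down the sign of $\textrm{SIR}_1-\textrm{SIR}_2$ on either side of it, which is exactly what the unimodality argument in Case 2 requires. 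The remaining steps, namely the endpoint comparison in Case 3 and the boundary verification at $\hat x=\Psi^h$ in Case 1, are routine.
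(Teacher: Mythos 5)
Your proposal is correct and follows essentially the same route as the paper: the paper's proof of Theorem~\ref{th:givenX} simply asserts that the result is immediate from the monotonicity behaviors in Lemma~\ref{lemma:lem1}, and your argument is precisely the fleshed-out version of that appeal (fixed-$x$ monotonicity of $\textrm{SIR}_1$ and $\textrm{SIR}_2$ in $h$, followed by the single-crossing/unimodality case split). Your additional observation that $\textrm{SIR}_1-\textrm{SIR}_2$ is strictly increasing in $h$ when $\hat{x}>\Psi^h$, which disambiguates between the two locus roots $h^{\pm}$, is a useful detail the paper leaves implicit.
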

\begin{proof}
The proof is an immediate result of studying the behaviors of the SIR expressions given in Lemma~\ref{lemma:lem1}.
\end{proof}
\subsubsection{Finding the optimal position when both $h$ and $x$ of the UAV are adjustable}
In the previous scenarios, the locus defined in~\eqref{eq:locus} reduces to an equation since one variable (either $h$ or $x$) is given, which is not the case here. In this case, the optimal position of the UAV is identified in the following theorem. 

\begin{theorem}\label{th:var}
Let $\Lambda$ denote the set of all the feasible solutions of the locus described in~\eqref{eq:locus}. The optimal position of the UAV $(x^*,h^*)$ is given by:

\textbf{Case \hspace{-.1mm}1)} If the Locus has no solution, the optimal position can be derived by solely examining the boundary positions:
  \begin{equation}
     (x^*,h^*)=\displaystyle \hspace{-15mm}\argmax_{(x,h)\in\{(0,h_{min}),(0,h_{max}),(D,h_{min}),(D,h_{max})\}}\hspace{-20mm} \textrm{SIR}_S(x,h).
    \end{equation}
    
\textbf{Case \hspace{-.1mm}2)} Upon having at least a feasible solution for the locus, if $\textrm{SIR}_2(D,h_{min})\leq \max \{ \textrm{SIR}_1(D,h_{max}),\textrm{SIR}_1(D,h_{min}) \}$ described by~\eqref{eq:case1}, $(x^*,h^*)=(D,h_{min})$. Also, if $\textrm{SIR}_1(0,h_{min})\leq \textrm{SIR}_2(0,h_{min})$ described by~\eqref{eq-b}, $(x^*,h^*)=(0,h_{min})$. Otherwise, let $(\tilde{x},\tilde{h}) = \argmax_{(x,h)\in \Lambda} \textrm{SIR}_S(x,h)$, then the optimal position of the UAV is given as follows:
\begin{itemize}
      \item If $\psi^x\geq D$: $x^*=\tilde{x}$ and $h^*$ can be derived using Theorem~\ref{th:givenX} considering $\hat{x}=\tilde{x}$.
    \item  If $\psi^x< D$ and $\tilde{x} \geq \psi^x$: $x^*=D$ and $h^*$ can be derived using Theorem~\ref{th:givenX} considering $\hat{x}=D$.
       \item If $\psi^x<  D$ and $\psi^h\leq \tilde{x} <\psi^x$ and $\textrm{SIR}_1(\tilde{x},\tilde{h})\geq\textrm{SIR}_1(D,h_{max})$: $(x^*,h^*)=(\tilde{x},\tilde{h})$.
     \item If $\psi^x<  D$ and $\psi^h\leq \tilde{x} <\psi^x$ and $\textrm{SIR}_1(\tilde{x},\tilde{h})<\textrm{SIR}_1(D,h_{max})$: $x^*=D$ and $h^*$ can be derived using Theorem~\ref{th:givenX} considering $\hat{x}=D$.
        \item If $\psi^x< D$ and $\tilde{x} <\psi^h$ and  $\textrm{SIR}_1(\tilde{x},h_{min})\geq \textrm{SIR}_1(D,h_{max})$: $(x^*,h^*)=(\tilde{x},h_{min})$.
         \item If $\psi^x< D$ and $\tilde{x} <\psi^h$ and  $\textrm{SIR}_1(\tilde{x},h_{min})< \textrm{SIR}_1(D,h_{max})$: $x^*=D$ and $h^*$ can be derived using Theorem~\ref{th:givenX} considering $\hat{x}=D$.
      \end{itemize}
\end{theorem}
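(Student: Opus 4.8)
The plan is to read \eqref{mainOpt} as a max--min problem and to lean entirely on the piecewise monotonicity of the two objectives established in Lemma~\ref{lemma:lem1}. First I would partition the feasible box $[0,D]\times[h_{min},h_{max}]$ along the locus $\Lambda$ of Lemma~\ref{th:main}: on one side $\textrm{SIR}_1\le\textrm{SIR}_2$ (call it $R_1$, where $\textrm{SIR}_S=\textrm{SIR}_1$) and on the other $\textrm{SIR}_2\le\textrm{SIR}_1$ (call it $R_2$, where $\textrm{SIR}_S=\textrm{SIR}_2$), with $\Lambda$ the common boundary where the two agree. Then $\max\textrm{SIR}_S=\max\{\max_{R_1}\textrm{SIR}_1,\ \max_{R_2}\textrm{SIR}_2\}$, and by the definition \eqref{eq:SIR_s} of the minimum, any global maximizer must be either an unconstrained box-maximizer of the locally active function or a point of $\Lambda$.

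Case~1 is the degenerate one. If $\Lambda$ has no feasible point, then $\textrm{SIR}_1-\textrm{SIR}_2$ keeps a constant sign on the connected box, so $\textrm{SIR}_S$ coincides throughout with a single one of the two functions. By Lemma~\ref{lemma:lem1}, $\textrm{SIR}_2$ is increasing in $x$ and decreasing in $h$, while $\textrm{SIR}_1$ is U-shaped in $x$ (minimized at $\Psi^x$) and monotone in $h$ on either side of $\Psi^h$; hence each function attains its box-maximum at a corner. Since we do not know a priori which function is active, evaluating $\textrm{SIR}_S$ at the four corners $(0,h_{min}),(0,h_{max}),(D,h_{min}),(D,h_{max})$ suffices, which is exactly the claim.

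For Case~2 the separating locus is nonempty. The first two branches isolate the corners at which the active function already reaches its global cap. Because $\textrm{SIR}_2$ is maximized over the box at $(D,h_{min})$, if that corner's $\textrm{SIR}_2$ value is dominated by $\textrm{SIR}_1$ at $x=D$---the algebraic form being \eqref{eq:case1}---then $\textrm{SIR}_S(D,h_{min})$ equals the global $\textrm{SIR}_2$ cap, and since $\textrm{SIR}_S\le\textrm{SIR}_2\le\textrm{SIR}_2(D,h_{min})$ everywhere, $(D,h_{min})$ is optimal; the symmetric comparison at $x=0$, written as \eqref{eq-b}, yields the $(0,h_{min})$ branch. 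When neither corner condition fires, the decomposition above forces the optimum onto $\Lambda$, so I would set $(\tilde{x},\tilde{h})=\argmax_{(x,h)\in\Lambda}\textrm{SIR}_S$ and then reconcile this best locus point with the two monotonicity thresholds $\Psi^x$ and $\Psi^h$. If $\Psi^x\ge D$, then $\textrm{SIR}_1$ never turns upward on $[0,D]$, so the horizontal coordinate is pinned at $\tilde{x}$ and $h^*$ is recovered by the already-proved Theorem~\ref{th:givenX} at $\hat{x}=\tilde{x}$; if instead $\Psi^x<D$ and $\tilde{x}\ge\Psi^x$, both $\textrm{SIR}_1$ (past its vertex) and $\textrm{SIR}_2$ increase in $x$, so the optimum is pushed to $x=D$ with $h^*$ again from Theorem~\ref{th:givenX}. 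The remaining sub-branches, where $\tilde{x}$ sits below $\Psi^x$, compare the locus value $\textrm{SIR}_1(\tilde{x},\tilde{h})$---or $\textrm{SIR}_1(\tilde{x},h_{min})$ when $\tilde{x}<\Psi^h$, since $\textrm{SIR}_1$ is then decreasing in $h$---against the competing $x=D$ candidate captured by $\textrm{SIR}_1(D,h_{max})$, retaining whichever is larger.

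The hard part is the bookkeeping in the \emph{Otherwise} branch of Case~2: after excluding the two corner optima, one must verify that no interior or edge point off $\Lambda$ can beat both the best locus point and the $x=D$ candidate. This is exactly where the sign conditions of Lemma~\ref{lemma:lem1} do the work---along $\Lambda$ the two functions coincide, so moving off $\Lambda$ strictly lowers the one that becomes the minimum, which collapses the global search to $\Lambda$ together with the two corners. I would be especially careful with the corner tests \eqref{eq:case1}--\eqref{eq-b}, whose two-term $\max$ structure encodes the sign of $\partial\textrm{SIR}_1/\partial h$ at $x=D$ (resp. $x=0$), i.e. whether $D$ lies above or below $\Psi^h$; choosing the correct endpoint ($h_{min}$ versus $h_{max}$) in each comparison is the subtle, error-prone step. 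Concretely, I would stress-test the $(D,h_{min})$ branch on a geometry with the MSI near the Rx (so that $\Psi^h<D$ and $\textrm{SIR}_1$ is increasing in $h$ at $x=D$), to confirm that the nominated corner is genuinely optimal rather than dominated by a locus crossing at $x=D$ where $\min\{\textrm{SIR}_1(D,h),\textrm{SIR}_2(D,h)\}$ peaks at an interior altitude.
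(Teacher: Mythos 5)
Your proposal follows essentially the same route as the paper: the paper's entire proof is a single sentence deferring to the behavior of the SIR expressions in Lemma~\ref{lemma:lem1}, and your partition of the box along the locus, the reduction of Case~1 to the four corners, and the reconciliation of the best locus point with the thresholds $\Psi^x$ and $\Psi^h$ are exactly the elaboration that one-liner has in mind. The subtlety you flag about the two-term $\max$ in \eqref{eq:case1} (whether $\textrm{SIR}_S(D,h_{min})$ can be dominated by an interior-altitude crossing on the edge $x=D$) is a genuine delicate point, but it concerns the theorem's corner test itself rather than a gap your argument introduces, and the paper's proof offers no resolution of it either.
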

\begin{proof}
The proof is an immediate result of studying the behaviors of the SIR expressions given in Lemma~\ref{lemma:lem1}.
\end{proof}
\subsection{Special Case}
Our derived expressions can be simplified to provide insights for various special situations. For example, suppose that the MSI is located on the segment between the Tx and the Rx ($Y_{_{\textrm{MSI}}}=0$), $\eta_{_{\textrm{NLoS}}}=\mu_{_{\textrm{NLoS}}}$,  and $p_t=p_u$. Considering~\eqref{eq:locus}, we get:
\begin{equation}
     B^2(x)-4A(x)C(x)=p_t^2
    4x^2(D-X_{_{\textrm{MSI}}})^2
  \geq 0.
\end{equation}
Normalizing the $p_t$ to $1$, the $\Lambda$ defined in~\eqref{eq:locus} is given by:
\begin{equation}
\begin{aligned}
    &\Lambda^+(x)= -x^2+2xD-DX_{_{\textrm{MSI}}},\\
    &\Lambda^-(x)=-x^2+2x X_{_{\textrm{MSI}}}-DX_{_{\textrm{MSI}}}.
    \end{aligned}
\end{equation}
The existence of a solution for~\eqref{eq:locus} requires $\Lambda^+(x)\geq 0$ or $\Lambda^-(x)\geq 0$, which is equivalent to:
\begin{align}
  &\hspace{-2.5mm} D-\sqrt{D^2-D X_{_{\textrm{MSI}}}} \leq x \leq D+ \sqrt{D^2-D X_{_{\textrm{MSI}}}}, \nonumber \\
  &\hspace{-2.5mm} X_{_{\textrm{MSI}}}\hspace{-.5mm}-\hspace{-.5mm}\sqrt{X_{_{\textrm{MSI}}}^2-D X_{_{\textrm{MSI}}}} \leq x \leq X_{_{\textrm{MSI}}}+\hspace{-.5mm} \sqrt{\hspace{-.5mm}X_{_{\textrm{MSI}}}^2-\hspace{-.5mm}D X_{_{\textrm{MSI}}}}.
\end{align}
It can be seen that the position of the MSI has a significant impact on these intervals and subsequently the placement of the UAV, especially if $X_{_{\textrm{MSI}}}\uparrow D$, it imposes $x\uparrow D$, and subsequently $h=\Lambda^+(x)=\Lambda^-(x)\downarrow 0$, which implies no feasible/practical solution for~\eqref{eq:locus}.\footnote{The notations $\uparrow$ and $\downarrow$ are used to denote approaching the limiting value from the left and the right, respectively.} Consequently, the optimal position is identified based on Case 1 of Theorem~\ref{th:var}.

Also, assuming that the source of interference is placed far away or it has a negligible transmission power, the interference will not play a key role in the design anymore. In this case, the SIR expressions in~\eqref{eq:SIRexp} will be replaced with signal-to-noise-ratio (SNR) expressions, which are much easier to handle compared to SIR expressions since they are monotone functions w.r.t both $h$ and $x$. In this case, a similar approach can be followed to obtain the optimal position of the UAV, which will result in simplified versions of Theorem~\ref{th:givenh}, \ref{th:givenX}, \ref{th:var}. The same philosophy holds for the following discussion on position planning for multiple UAVs.

\section{Position Planning for Multiple UAVs  Considering an MSI}\label{sec:multiple}
\noindent  We investigate the placement planning upon utilizing
multiple UAVs from two different points of view. First, we consider a cost effective design, in which the network designer aims to identify the minimum required number of utilized UAVs and determine their positions so as to satisfy a predetermined SIR of the system. Second, we assume that the network designer is provided with a set of UAVs, and endeavors to configure their positions so as to maximize the SIR of the system. {\color{black} We assume that the UAVs utilize the same frequency but different time slots to avoid mutual interference among the UAVs.}
\begin{figure}[t]
\includegraphics[width=8.8cm,height=2.1cm]{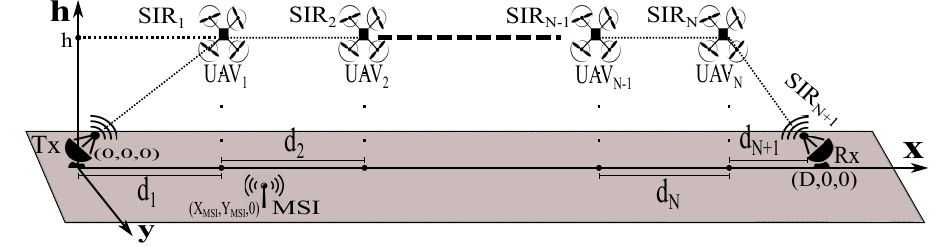}
		\caption{Multiple UAVs acting as relays between a pair of Tx and Rx coexisting with an MSI (multi-hop single link).}
		 \label{fig:multiple}
		 \end{figure}
 \vspace{-0mm}
\subsection{Network Design to Achieve a Desired SIR}\label{subsec:1}
Let $\gamma$ denote the desired SIR of the system and assume that $N$ is the minimum number of UAVs needed to satisfy the SIR constraint, which will be derived later. We index the Tx node by $0$, the UAVs between the Tx and the Rx from $1$ to $N$, and the Rx node by $N+1$. We denote the horizontal distance between two consecutive nodes $i-1$ and $i$ by $d_i$, $ 1\leq i\leq N+1$, and consider $\mathbf{d}=[d_1,\cdots,d_{N+1}]$. {\color{black}To have tractable derivations, we assume that all the UAVs have the same altitude $h$ (see Section~\ref{sec:multipleheight} for more details).} The model is depicted in Fig.~\ref{fig:multiple}. Let $\textrm{SIR}_k$ denote the SIR at the $k^{th}$ node, which can be obtained as:
\vspace{-0mm}
\begin{equation}\label{eq:SIRs}
\hspace{-1mm}
\begin{aligned}
  & \textrm{SIR}_1(\mathbf{d},h)= \frac{p_t \eta^{-1}_{_{\textrm{NLoS}}}\left(\sqrt{d_1^2+h^2}\right)^{-2}}{p_{_{\textrm{MSI}}}\eta^{-1}_{_{\textrm{NLoS}}}\left(\sqrt{(X_{_{\textrm{MSI}}}-d_1)^2+Y^2_{\textrm{MSI}}+h^2}\right)^{-2}},\\ 
  \vspace{-15mm} &\textrm{SIR}_2(\mathbf{d},h)=\frac{p_u \mu^{-1}_{_{\textrm{LoS}}}\left(\sqrt{d_2^2}\right)^{-2}}{\hspace{-3mm}p_{_{\textrm{MSI}}}\eta^{-1}_{_{\textrm{NLoS}}}\left(\sqrt{(X_{_{\textrm{MSI}}}\hspace{-1mm}-d_2-d_1)^2+\hspace{-1mm}Y^2_{\textrm{MSI}}\hspace{-1mm}+h^2}\right)^{-2}},\\\vspace{-1mm}
    \vdots\\\vspace{-1mm}
  & \textrm{SIR}_N(\mathbf{d},h)=\hspace{-0.5mm}\frac{p_u \mu^{-1}_{_{\textrm{LoS}}} \left(\sqrt{d_N^2}\right)^{-2}}{\hspace{-1.5mm}p_{_{\textrm{MSI}}}\eta^{-1}_{_{\textrm{NLoS}}}\hspace{-0.5mm}\left(\hspace{-0.5mm}\sqrt{\hspace{-0.5mm}(X_{_{\textrm{MSI}}}\hspace{-1.0mm}-\hspace{-0.5mm}\sum_{i=1}^{N} d_i)^2\hspace{-0.5mm}+\hspace{-0.5mm}Y^2_{\textrm{MSI}}\hspace{-0.5mm}+\hspace{-0.5mm}h^2\hspace{-0.5mm}}\right)^{\hspace{-0.5mm}-2}},\\ \vspace{-1mm}
  & \textrm{SIR}_{N+1}(\mathbf{d},h)=\frac{p_u\eta^{-1}_{_{\textrm{NLoS}}} \left(\sqrt{d_{N+1}^2+h^2}\right)^{-2}}{p_{_{\textrm{MSI}}}\mu^{-1}_{_{\textrm{NLoS}}}\left(\sqrt{(X_{_{\textrm{MSI}}}-D)^2+Y^2_{\textrm{MSI}}}\right)^{-2}}.
    \end{aligned}
 \end{equation}  
Similar to the single UAV scenario, $\textrm{SIR}_S $ is given by:
 \begin{equation}\label{eq:SIR_s2}
    \textrm{SIR}_S(\mathbf{d},h)=\min \big\{\textrm{SIR}_1(\mathbf{d},h),\cdots,\textrm{SIR}_{N+1}(\mathbf{d},h)\big\},~~\;\forall \mathbf{d},h.
\end{equation}

\subsubsection{The SIR expressions and the feasibility constraints}
From~\eqref{eq:SIRs}, it can be observed that achieving any desired $\textrm{SIR}_S$ ($\gamma$) may not be feasible. To derive the feasibility conditions for the $\gamma$, we need to analyze the links between the Tx and $\textrm{UAV}_1$, among the adjacent UAVs, and from $\textrm{UAV}_{N}$ to the Rx.

Analysis of the links between the Tx and $\textrm{UAV}_1$ ($\textrm{SIR}_1$) and between $\textrm{UAV}_{N}$ and the Rx ($\textrm{SIR}_{N+1}$) is similar to the discussion provided in Section~\ref{sec:singleUAV} (see Lemma~\ref{lemma:lem1}). Hence, we skip them and consider the SIR at $\textrm{UAV}_i$, $ 2\leq i\leq N$. For this UAV, the stationary point $\Phi_i^d$ of the SIR expression is given by:
\begin{equation}\label{eq:28}
    \Phi_i^d=\frac{h^2+Y_{_{\textrm{MSI}}}^2+(X_{_{\textrm{MSI}}}-\sum_{j=1}^{i-1} d_j)^2}{X_{_{\textrm{MSI}}}-\sum_{j=1}^{i-1} d_j},
\end{equation}
using which it can be validated that:
\begin{equation}\label{eq:dmiiidds}
\resizebox{0.89\hsize}{!}{$
\hspace{-15.5mm}
\displaystyle\max_{\hspace{13mm} x\in [0,D], h\in [h_{min},h_{max}]} \hspace{-11.5mm}\textrm{SIR}_i(x,h) \hspace{-.5mm}\leq\hspace{-.5mm} \frac{p_u\mu^{-1}_{_{\textrm{LoS}}}\hspace{-.5mm}\left(\max(X^2_{_{\textrm{MSI}}},(D-X_{_{\textrm{MSI}}})^2)+ Y_{_{\textrm{MSI}}}^2\hspace{-.5mm}+\hspace{-.5mm}h^2\right)}{p_{_{\textrm{MSI}}}\eta^{-1}_{_{\textrm{NLoS}}} d^2_{min} },\;$}
\end{equation}
\begin{equation}\label{eq:30}
\begin{cases}
    \frac{\partial \textrm{SIR}_i(x,h)}{\partial d_i}\hspace{-.5mm}\geq \hspace{-.5mm}0 \; \textrm{if} \; d_i\hspace{-.5mm}\geq\hspace{-.5mm}\Phi_i^d,\\
    \frac{\partial \textrm{SIR}_i(x,h)}{\partial d_i}\hspace{-.5mm}< \hspace{-.5mm}0 \; \textrm{O.W.},
    \end{cases}
\end{equation}
where $d_{min}$ is the minimum feasible distance between two UAVs considering the mechanical constrains. It can be verified that with tuning the locations of the middle UAVs any value for SIR$_i$ is achievable among the UAVs, when $\textrm{SIR}_i\leq \frac{p_u\mu^{-1}_{_{\textrm{LoS}}}\hspace{-.5mm}\left(Y_{_{\textrm{MSI}}}^2\hspace{-.5mm}+h^2\right)}{p_{_{\textrm{MSI}}}\eta^{-1}_{_{\textrm{NLoS}}} d^2_{min} }$. Combining these derivations with those in Section~\ref{sec:singleUAV}, we obtain the feasibility condition declared in~\eqref{eq:conditioforSIR}.
\begin{table*}[t]
\begin{minipage}{0.99\textwidth}
\begin{equation}\label{eq:conditioforSIR}
    \gamma \leq \min \Bigg\{\frac{p_t\left(X^2_{\textrm{MSI}}+Y_{\textrm{MSI}}^2+h_{min}^2 \right)}{p_{_{\textrm{MSI}}} h_{min}^2},\frac{p_u\mu^{-1}_{_{\textrm{LoS}}}\hspace{-.5mm}\left( Y_{_{\textrm{MSI}}}^2\hspace{-.5mm}+\hspace{-.5mm}h^2\right)}{p_{_{\textrm{MSI}}}\eta^{-1}_{_{\textrm{NLoS}}} d^2_{min} } , \frac{p_u\eta^{-1}_{_{\textrm{NLoS}}}\left(Y_{_{\textrm{MSI}}}^2+(D-X_{_{\textrm{MSI}}})^2\right)}{p_{_{\textrm{MSI}}}\mu^{-1}_{_{\textrm{NLoS}}}h_{min}^2}\Bigg\}
\end{equation}
\end{minipage}
\vspace{3mm}
\begin{minipage}{0.99\textwidth}
\vspace{3mm}
\begin{equation}\label{eq:d_1}
\begin{aligned}
  & d^+_1,d^-_1,=\frac{p_t X_{_{\textrm{MSI}}}\pm\sqrt{p^2_t X_{_{\textrm{MSI}}}^2-(p_t-\gamma p_{_{\textrm{MSI}}})\left(p_t\left(X_{_{\textrm{MSI}}}^2+Y_{_{\textrm{MSI}}}^2\right)+h^2(p_t-p_{_{\textrm{MSI}}}\gamma)\right)}}{p_t-\gamma p_{_{\textrm{MSI}}}}
    \end{aligned}
\end{equation}
\end{minipage}
\end{table*}
\subsubsection{Design Methodology}
To derive the minimum number of needed UAVs and their optimal positions so as to satisfy a desired $\textrm{SIR}_S$, we aim to maximize the distance between the UAVs while satisfying the desired SIR of the system. Our approach can be described by the following three main steps: (i) Considering $\textrm{SIR}_1$, for $\textrm{UAV}_1$, we obtain the maximum distance from the Tx (toward the Rx) ${d}^*_1$ that satisfies the SIR constraint and place the first UAV at the obtained location. (ii) Considering $\textrm{SIR}_{N+1}$, for $\textrm{UAV}_{N}$, we obtain the maximum distance from the Rx (toward the Tx)  ${d}^{max}$ that satisfies the desired SIR. (iii) Consider the segment with length $D-{d}^*_1-{d}^{max}$, we use the SIR expressions of the remaining UAVs to maximize the distance between the adjacent UAVs to cover the distance while satisfying the desired $\textrm{SIR}_S$. In the following, we explain these steps in more detail.

Considering $\textrm{SIR}_1$, we solve $\textrm{SIR}_1=\gamma$, the answer of which is given by~\eqref{eq:d_1}.
Then, using Lemma~\ref{lemma:lem1}, $d^*_1$ is given by:
\begin{equation}\label{eq:D1}
    d^*_1=\begin{cases}
              d_1^- \;\; \textrm{if}\;\; d_1^+> D,\\
            d_1^+ \;\; \textrm{if}\;\; d_1^+< \Psi^x \textrm{ and } d_1^+\leq D,\\
            D \;\; \textrm{O.W}.
    \end{cases}
\end{equation}
In the last case of \eqref{eq:D1}, the optimal number of UAVs is $1$, and the UAV should be placed at $x=D$. Assuming $d^*_1< D$, using Lemma~\ref{lemma:lem1}, {\color{black}the maximum distance between the last UAV and the Rx to satisfy the SIR constraint  $d^{max}$ can be obtained as:}
\begin{equation}\label{eq:DNplus1}
  d^{max}=\sqrt{\frac{p_u\eta^{-1}_{_{\textrm{NLoS}}}\left((X_{_{\textrm{MSI}}}-D)^2+Y_{_{\textrm{MSI}}}^2\right)}{\gamma p_{_{\textrm{MSI}}}\mu^{-1}_{_{\textrm{NLoS}}}}-h^2}.
\end{equation}
{\color{black}Note that using Lemma~\ref{lemma:lem1} it can be verified that if the distance between UAV$_{N}$ and Rx is less than $d^*_{N+1}$ the SIR constraint at the Rx is always met.}
Afterward, we solve $\textrm{SIR}_{k}=\gamma$ and use \eqref{eq:28}-\eqref{eq:30} to obtain $d^*_k$, $2\leq k \leq N$, given by:
\begin{equation}\label{eq:ds}
 d^*_k=\begin{cases}
      d_k^- \;\; \textrm{if}\;\; d_k^+> D-\sum_{j=1}^{k-1} d^*_j,\\
      d_k^+ \;\; \textrm{if}\;\; d_k^+< \Phi_k^d \textrm{ and } d_k^+\leq D-\sum_{j=1}^{k-1} d^*_j,\\
        D-d^{max} \;\; \textrm{O.W.},
  \end{cases}
\end{equation}
where $d^-_k,d^+_k$ are given in~\eqref{eq:d_k}.
\begin{table*}[t]
\begin{minipage}{0.99\textwidth}
\begin{equation}\label{eq:d_k}
\resizebox{0.95\hsize}{!}{$
          d^+_k,d^-_k=\frac{p_u \mu^{-1}_{_{\textrm{LoS}}}\left(X_{_{\textrm{MSI}}}-\sum_{i=1}^{k-1}d^*_i \right)\pm\sqrt{p^2_u \mu^{-2}_{\textrm{LoS}}\left(X_{_{\textrm{MSI}}}-\sum_{i=1}^{k-1} d^*_i \right)^2-p_u \mu^{-1}_{_{\textrm{LoS}}}\left(p_u \mu^{-1}_{_{\textrm{LoS}}}-\gamma p_{_{\textrm{MSI}}}\eta^{-1}_{_{\textrm{NLoS}}}\right)\left(h^2+Y_{_{\textrm{MSI}}}^2+\left(X_{_{\textrm{MSI}}}-\sum_{i=1}^{k-1} d^*_i \right)^2\right)}}{p_u \mu^{-1}_{_{\textrm{LoS}}}-\gamma p_{_{\textrm{MSI}}}\eta^{-1}_{_{\textrm{NLoS}}}}$}
\end{equation}
\begin{equation}\label{eq:sourecDefiner}
    x_H,y_H,p_H = \argmin_{x_H\in [0,D], y_H \in R^+, p_H\in R^+ } \int_{h_{min}}^{h_{max}} \int_{0}^{D} \abs[\Big]{\frac{p_H}{(x-x_H)^2+y^2_H+h^2} -\sum_{i=1}^{|I|}\frac{p_i}{(x-x_i)^2+y^2_i+h^2}} dx dh.
\end{equation}
\hrulefill
\end{minipage}
\end{table*}
Finally, the minimum number of required UAVs $N$ is given by:
\begin{equation}\label{eq:Opt}
    N=\argmin_{n\in\mathbb{N}} \sum _{k=2}^{n} d^*_k\geq D-d^*_1-d^{max}.
\end{equation}
According to~\eqref{eq:ds} and~\eqref{eq:d_k}, calculation of each $d^*_k$ only requires the knowledge of $d^*_{k'}$, $\forall k' < k$. Hence, the solution of \eqref{eq:Opt} can be easily obtained by initially assuming $n=2$ and increasing the value of $n$ by $1$ until the constraint in the right hand side of the equation is met. Given the distances $d^*_1,\cdots,d^*_{N}$ from~\eqref{eq:Opt}, we have $d^*_{N+1}=D-\sum_{k=1}^{N}d^*_k$, for which $d^*_{N+1}\leq d^{max}$.
\subsection{Distributed Position Planning for a Given Number of UAVs}\label{subDist}
In this case, there exist multiple UAVs dedicated as relays to the network, which are expected to be positioned to maximize the SIR of the system. To this end, 
an algorithm can be immediately proposed based on our results in the previous subsection, which considers the number of UAVs as given and slowly increases the SIR ($\gamma$) starting from $\gamma=0$ to find the maximum value of $\gamma$ for which $N$  in~\eqref{eq:Opt} becomes equal to the number of given UAVs. Afterward, the positions of the UAVs can be obtained as discussed before. Nevertheless, this is a centralized approach. In the following, we propose a distributed algorithm for the same purpose, where the UAVs locally compute their positions based on the knowledge of the positions of their adjacent neighbors, which can be obtained through simple message passing.
The following fact is an immediate consequence of examining \eqref{eq:d_k}: with a known $d_1$ and a (hypothetically) given value for the $\textrm{SIR}_S$ ($\gamma$), starting with $\textrm{UAV}_1$, the distance between the subsequent UAVs can be obtained up to $\textrm{UAV}_{N}$ in a \textit{forward propagation}, by which each UAV transmits its position to the adjacent UAV located toward the Rx (see~\eqref{eq:ds}, \eqref{eq:d_k}). Note that in the mentioned propagation, no message is exchanged between the last UAV and the Rx, and thus the SIR at the Rx might be less than $\gamma$. {\color{black}For this purpose the last UAV uses~\eqref{eq:DNplus1} to derive the maximum distance from the Rx for which the SIR constraint is met; and thus the last UAV can immediately verify the satisfaction of the SIR constraint given its current location.} We propose a distributed algorithm for position planning of multiple UAVs, the pseudo code of which is given in Algorithm~\ref{alg:fulldist}. In this algorithm, we first locate the first UAV above the Tx and the last UAV above the Rx and derive the initial desired $\textrm{SIR}_S$ ($\gamma^{(0)}$); subsequently, we set the position of the first UAV to have $\gamma^{(0)}$ as the SIR of the first link (lines~\ref{alg1line1}-\ref{alg1line3}). Afterward, using forward propagation, the UAVs locally obtain their positions w.r.t the position of their adjacent UAV (lines~\ref{alg1line4}) so as to satisfy the desired $\textrm{SIR}_S$. Then, the satisfaction of the SIR at the Rx is inspected via UAV$_N$ using the current distance from UAV$_N$ to the Rx (line~\ref{alg1line5}).  If this SIR satisfies the desired SIR of the system at the current iteration, the algorithm stops; otherwise, it starts over with a new desired value for $\textrm{SIR}_S$ and readjusts the locations of UAVs in the next iteration (lines~\ref{alg1line6}-\ref{alg1line7}).\footnote{{\color{black}We admit that in reality precise small adjustments of the locations of the UAVs may not be feasible due to physical hovering system limitations. However, this assumption is just needed to prove the optimality of the algorithm in theory. In reality, the error in the movements of the UAVs will result in a sub-optimal solution, which is unavoidable.}}
 \begin{algorithm}[h]
 	\caption{{Distributed position planning for multiple UAVs in the presence of a major source of interference}}\label{alg:fulldist}
 	\SetKwFunction{Union}{Union}\SetKwFunction{FindCompress}{FindCompress}
 	\SetKwInOut{Input}{input}\SetKwInOut{Output}{output}
 	 	{\small
 	\Input{Step size $\epsilon$.}
  $i=0$,
  $d^{(i)}_{1}=0,d^{(i)}_{N+1}=0$.\label{alg1line1}\\
  $\gamma^{(i)}=\min \big\{\textrm{SIR}_1(d^{(i)}_{1},h),\textrm{SIR}_{N+1}(d^{(i)}_{N+1},h)\big\}$.\label{alg1line2}\\	
  Derive $d^{(i)}_{1}$ for the target SIR $\gamma^{(i)}$ using~\eqref{eq:D1}.\label{alg1line3}\\
  Given $d^{(i)}_{1}$, obtain $d^{(i)}_{2},\cdots,d^{(i)}_{N}$ using forward propagation based on~\eqref{eq:ds} using $\gamma^{(i)}$.\label{alg1line4}\\
  Obtain the maximum distance $d^{max}$ from the Rx to satisfy the SIR constraint $\gamma^{(i)}$ using~\eqref{eq:DNplus1} at UAV$_N$.
  \label{alg1line5}\\
  \uIf{$D-\sum_{k=1}^{N} d^{(i)}_{k} > d^{max}$}{\label{alg1line6}
  Derive the next target SIR: $\gamma^{(i+1)}=\gamma^{(i)}-\epsilon$.\\
  $i=i+1$ and go to line~\ref{alg1line3}.
  }\Else{
  Fix the UAVs at their current positions.\\
  }\label{alg1line7}
  }
  \end{algorithm}
\subsubsection{Computational Complexity and Convergence Analysis}
At each iteration of our proposed distributed algorithm, in forward propagation mode, each UAV obtains its next location using a simple message passing with its adjacent UAV, through which the location of the adjacent UAV is exchanged, and calculation of a closed form expression (using~\eqref{eq:ds}) is performed. Hence, at each iteration, the computational complexity of the tasks performed at each UAV is $O(1)$. Also, it is obvious that, given a step size $\epsilon\leq \gamma^{(0)}$, the algorithm performs at most $\floor{\gamma^{(0)}/\epsilon}$ iterations. Thus, the worst computational complexity of our algorithm is $O(\gamma^{(0)}/\epsilon)$ at each UAV.
\begin{proposition}\label{prop:1}
 For any given number of UAVs and a sufficiently small size of the step size $\epsilon$, assuming the same altitudes for the UAVs, Algorithm~\ref{alg:fulldist} always converges to the maximum achievable value of $\textrm{SIR}_S$.
\end{proposition}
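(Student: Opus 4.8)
The plan is to reduce the convergence claim to three facts: (i) the forward-propagation step of Algorithm~\ref{alg:fulldist} reproduces exactly the centralized greedy placement of Section~\ref{subsec:1}; (ii) for a fixed target SIR the greedy placement uses the fewest possible UAVs, so the test in line~\ref{alg1line6} correctly decides whether the target is attainable with the given fleet size; and (iii) attainability of a target SIR is monotone, so the set of attainable targets is an interval $[0,\gamma^{\star}]$, where $\gamma^{\star}$ is the maximum achievable $\textrm{SIR}_S$. Granting these, the algorithm is simply a grid search that decreases $\gamma$ from an over-estimate toward $\gamma^{\star}$, and convergence follows by driving $\epsilon\to 0$.

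First I would establish the equivalence in (i). Since the closed forms~\eqref{eq:ds}--\eqref{eq:d_k} for $d_k^{*}$ depend only on $\gamma$, $h$, and the partial sum $\sum_{j<k} d_j^{*}$, each UAV can evaluate its own hop length as soon as it receives the position of its predecessor; thus the distributed forward sweep computes precisely the same sequence $d_1^{*},\dots,d_N^{*}$ as the centralized rule, and the last UAV's comparison against $d^{max}$ from~\eqref{eq:DNplus1} reproduces the feasibility condition underlying~\eqref{eq:Opt}. Next, for (ii) I would argue that greedily maximizing every hop is optimal: I would show by induction on $k$ that if any feasible placement reaches horizontal coordinate $\xi_k$ with $k$ UAVs while meeting $\textrm{SIR}_S\ge\gamma$, then the greedy placement reaches a coordinate at least as large, invoking the monotonicity of $\textrm{SIR}_i$ in the hop length from Lemma~\ref{lemma:lem1} and~\eqref{eq:30} together with the claim that advancing node $k-1$ toward the Rx does not shrink the admissible reach of node $k$. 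Consequently the greedy sweep minimizes the UAV count $N(\gamma)$, and line~\ref{alg1line6} returns ``infeasible'' for the given fleet size iff $\gamma>\gamma^{\star}$.

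Monotonicity (iii) is then immediate: any placement certifying $\textrm{SIR}_S\ge\gamma$ also certifies $\textrm{SIR}_S\ge\gamma'$ for every $\gamma'<\gamma$, so the attainable set is downward closed, i.e.\ $[0,\gamma^{\star}]$; equivalently $N(\gamma)$ is non-decreasing. For the convergence step I would verify that the initial target over-estimates the optimum, $\gamma^{(0)}\ge\gamma^{\star}$: because $\textrm{SIR}_1$ and $\textrm{SIR}_{N+1}$ are, in the operating regime of Lemma~\ref{lemma:lem1} (hop lengths below the stationary point $\Psi^x$), decreasing in their hop lengths, they are maximized at $d_1=d_{N+1}=0$, so $\gamma^{(0)}=\min\{\textrm{SIR}_1(0,h),\textrm{SIR}_{N+1}(0,h)\}$ upper-bounds every link SIR and hence $\textrm{SIR}_S$. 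Starting from this over-estimate and decrementing $\gamma$ by $\epsilon$, the loop halts at the largest attainable grid value, which lies in $(\gamma^{\star}-\epsilon,\gamma^{\star}]$; letting $\epsilon\to 0$ yields convergence to $\gamma^{\star}$.

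The main obstacle is the optimality of the greedy sweep in step (ii). The subtlety is that $\textrm{SIR}_i$ depends on the \emph{absolute} horizontal position of the receiving node relative to the MSI, not merely on the hop length, so ``maximize every hop'' is not obviously globally optimal; the induction must show that pushing a node toward the Rx never reduces the farthest feasible position of the next node. I expect this monotone-reach property to follow from the sign analysis of $\partial \textrm{SIR}_i/\partial d_i$ in~\eqref{eq:30} and the structure of the stationary point $\Phi_i^d$ in~\eqref{eq:28}, but verifying it uniformly across the case distinctions of~\eqref{eq:D1} and~\eqref{eq:ds}, in particular the boundary cases where a hop is capped at $D-\sum_{j<k}d_j^{*}$, is where the real work lies.
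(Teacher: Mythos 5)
Your proposal follows essentially the same route as the paper's proof: start from the over-estimating initial target $\gamma^{(0)}$ obtained from the maxima of $\textrm{SIR}_1$ and $\textrm{SIR}_{N+1}$, decrement on an $\epsilon$-grid, and justify the feasibility test by the optimality of the greedy maximum-separation placement (the paper's ``if the proposed maximum separation scheme does not find a configuration for $\gamma_{max}$, no other scheme would''). The monotone-reach subtlety you flag as the real work is precisely the step the paper also asserts without detailed verification, so your account is, if anything, more explicit about where the argument rests.
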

\begin{proof}
Assume that the number of UAVs is $N$ and the maximum achievable $\textrm{SIR}_S$ is $\gamma_{max}$. This implies the existence of a configuration of the UAVs in the sky corresponding to $\textrm{SIR}_S=\gamma_{max}$. Note that the proposed algorithm starts with the maximum attainable value for SIR$_S$, since the maximum value of SIR$_{N+1}$ is used to determine $\gamma^{(0)}$ and SIR$_S$ is the minimum SIR of all the links. Also, the algorithm makes small changes according to $\epsilon$ in the value of target SIR at each iteration. Thus, for $\epsilon\rightarrow 0$ there always exists an iteration $j$,  in which $\gamma^{(j)}=\gamma_{max}$. Note that achievability of $\gamma_{max}$ can be proved by contradiction. In other words, if the proposed maximum separation scheme does not find a configuration for the target SIR of $\gamma_{max}$, no other scheme would be able to do so. This is a direct result of forcing the maximum distance between the UAVs and the behavior of SIR expressions explained earlier. The iteration $j$ is indeed the last iteration of the algorithm since the constraint in line 9 of the algorithm will be met. 
\end{proof}
 {\color{black}
\subsection{Further adjustment of the altitudes and the horizontal distances of  UAVs}\label{sec:multipleheight}
The assumption made on the altitudes of the UAVs in the previous two subsections, i.e., the same altitude for all the UAVs, is common in current literature~\cite{chen2018multiple,sameAlt1,zhang2018trajectory,sameAlt2,sameAlt3}. Nevertheless, it can be predicted that adjustment of the altitude of each UAV can further increase the SIR of the system. Note that due to the non-convexity of the SIR expressions, proposing an analytical solution that jointly optimizes the altitudes and the horizontal positions of the UAVs in the multi-hop setting is hard to achieve. With this consideration, we propose a heuristic algorithm in Algorithm~\ref{alg:obtainHinDist}, which takes the horizontal distances of the UAVs obtained in the previous two subsections as the input and further adjusts the horizontal positions and altitudes of the UAVs.  
\begin{algorithm}[h]
  {\color{black}
 	\caption{Heuristic joint altitude and horizontal distance adjustments  for multiple UAVs in the presence of a major source of interference}\label{alg:obtainHinDist}
 	\SetKwFunction{Union}{Union}\SetKwFunction{FindCompress}{FindCompress}
 	\SetKwInOut{Input}{input}\SetKwInOut{Output}{output}
 	 	{\footnotesize
 	\Input{Vertical exploration $\epsilon_h$, initial horizontal distances of the UAVs $\mathbf{d}^{(0)}=[d^{(0)}_1,d^{(0)}_2,\cdots,d^{(0)}_{N+1}]$, where $d^{(0)}_i$ is the initial horizontal distance of UAV$_i$, initial altitudes of the UAVs $\mathbf{h}^{(0)}=[{h}^{(0)}_{(1)},\cdots,{h}^{(0)}_{(N)}]$, where ${h}^{(0)}_{(i)}$ is the initial altitude of UAV$_i$, number of iterations $Iter_{max}$, safe-guard distance between UAVs $d_{min}$.}
  $k=0$.\\
  \For{$k=1$ to $Iter_{max}$}{
  \For{$i=1$ to $N$}{
  For UAV$_i$, consider the two local links passing through the UAV and obtain their SIRs for the current network configuration, i.e., $\textrm{SIR}^{(k)}_{i}(\mathbf{d}^{(k)},\mathbf{h}^{(k)}),\textrm{SIR}^{(k)}_{i+1}(\mathbf{d}^{(k)},\mathbf{h}^{(k)})$ using the SIR expressions given below: 
\begin{equation}
\hspace{-5.5mm}
\begin{aligned}
  & \textrm{SIR}_1(\mathbf{d},\mathbf{h})= \frac{p_t \eta^{-1}_{_{\textrm{NLoS}}}\left(\sqrt{d_1^2+h_{1}^2}\right)^{-2}}{p_{_{\textrm{MSI}}}\eta^{-1}_{_{\textrm{NLoS}}}\left(\sqrt{(X_{_{\textrm{MSI}}}-d_1)^2+Y^2_{\textrm{MSI}}+h_{1}^2}\right)^{-2}},\\ 
  &    \textrm{SIR}_i(\mathbf{d},\mathbf{h})=\hspace{-0.5mm}\frac{p_u \mu^{-1}_{_{\textrm{LoS}}} \left(\sqrt{d_i^2+(h_{i}-h_{i-1})^2}\right)^{-2}}{\hspace{-1.5mm}p_{_{\textrm{MSI}}}\eta^{-1}_{_{\textrm{NLoS}}}\hspace{-0.5mm}\left(\hspace{-0.5mm}\sqrt{\hspace{-0.5mm}(X_{_{\textrm{MSI}}}\hspace{-1.0mm}-\hspace{-0.5mm}\sum_{i=1}^{N} d_i)^2\hspace{-0.5mm}+\hspace{-0.5mm}Y^2_{\textrm{MSI}}\hspace{-0.5mm}+\hspace{-0.5mm}h^2_i\hspace{-0.5mm}}\right)^{\hspace{-0.5mm}-2}},~2\leq i\leq N,\\ \vspace{-1mm}
  & \textrm{SIR}_{N+1}(\mathbf{d},\mathbf{h})=\frac{p_u\eta^{-1}_{_{\textrm{NLoS}}} \left(\sqrt{d_{N+1}^2+h_{N}^2}\right)^{-2}}{p_{_{\textrm{MSI}}}\mu^{-1}_{_{\textrm{NLoS}}}\left(\sqrt{(X_{_{\textrm{MSI}}}-D)^2+Y^2_{\textrm{MSI}}}\right)^{-2}}.
    \end{aligned}
 \end{equation}  
\label{line4Alg2}\\
  Define the local SIR at UAV\_{i} as: $\textrm{SIR}^{Local}_{i}(\mathbf{d}^{(k)},\mathbf{h}^{(k)})\triangleq \min\{\textrm{SIR}^{(k)}_{i}(\mathbf{d}^{(k)},\mathbf{h}^{(k)}),\textrm{SIR}^{(k)}_{i+1}(\mathbf{d}^{(k)},\mathbf{h}^{(k)})\}$. \label{line5Alg2}\\
  Confine the horizontal location of UAV$_{i}$ between the previous node and the next node of the network and obtain the best horizontal location $d^{*}_i$ and altitude $h^{*}_i$ of UAV$_{i}$ as: $(d^{*}_i,h^{*}_i)=\hspace{-1cm}\displaystyle\argmax_{0\leq d \leq d^{(k)}_i+d^{(k)}_{i+1},\;\; h^{(k)}_i-\epsilon_h \leq h \leq h^{(k)}_i+\epsilon_h} {\textrm{SIR}^{Local}_{i}(\mathbf{\tilde{d}}^{(k)},\mathbf{\tilde{h}}^{(k)})}$, where $\mathbf{\tilde{d}}^{(k)}$ and $\mathbf{\tilde{h}}^{(k)}$ are vectors $\mathbf{{d}}^{(k)}$ and $\mathbf{{h}}^{(k)}$ with their $i$-th element replaced by the variables under optimization, i.e., $d$ and $h$, respectively; also $i+1$-th element of $\mathbf{\tilde{d}}^{(k)}$ is replaced with ${d}^{(k)}_{i+1}+{d}^{(k)}_{i}-d$. Note that $\sqrt{d^2+(h-h^{(k)}_{i-1})^2}\geq d_{min}$ and $\sqrt{(d^{(k)}_{i+1}+{d}^{(k)}_{i}-d)^2+(h-h^{(k)}_{i+1})^2}\geq d_{min}$ are inherently assumed to avoid violating the required safe-guard distance between the UAV and the previous and the next node.\label{line6Alg2}\\
  Consider vectors $\mathbf{\bar{d}}^{(k)}$ and $\mathbf{\bar{h}}^{(k)}$ as vectors $\mathbf{{d}}^{(k)}$ and $\mathbf{{h}}^{(k)}$ with their $i$-th element replaced by the solution of the optimization, i.e., $d^*_i$ and $h^*_i$, respectively; where $i+1$-th element of $\mathbf{\bar{d}}^{(k)}$ is ${d}^{(k)}_{i+1}+{d}^{(k)}_{i}-d^*_i$.\\
  \uIf{$\textrm{SIR}^{Local}_{i}(\mathbf{\bar{d}}^{(k)},\mathbf{\bar{h}}^{(k)})>\textrm{SIR}^{Local}_{i}(\mathbf{{d}}^{(k)},\mathbf{{h}}^{(k)})$}{
  $d^{(k)}_i=d^{*}_i,\;\; h^{(k)}_i=h^{*}_i$\label{line7Alg2}
  }
  }
    $\mathbf{h}^{(k+1)}=\mathbf{h}^{(k)}$, $\mathbf{d}^{(k+1)}=\mathbf{d}^{(k)}$.
  }
  }
  }
  \end{algorithm}

The basic idea behind our algorithm is to increase the minimum SIR of all the adjacent pairs of links, which connect the Tx to the Rx, at each iteration. Considering~\eqref{eq:SIR_s2}, this results in increasing the SIR of the system at each iteration. At each iteration of  Algorithm~\ref{alg:obtainHinDist}, each UAV aims to increase the minimum SIR of the two local links, i.e., the link connecting the previous node of the network to the UAV and the link connecting the UAV to the next node, that pass through it (lines~\ref{line4Alg2},\ref{line5Alg2}). To this end, at each iteration, starting from the first UAV, each UAV explores the horizontal distances between itself and the two adjacent nodes and the range of altitudes specified by a the vertical exploration parameter $\epsilon_h$ while the location of all the other UAVs are fixed  (line~\ref{line6Alg2}). In other words, each UAV, $u_i$, with height $h_i$ explores a rectangle area in the space, the width of which is the distance between its two adjacent nodes and the height of which is $2\epsilon_h$ ($h$ is in the range $[h_i-\epsilon_h,h_i+\epsilon_h]$). Then, it changes its location to the best found position corresponding to the maximum increase in the minimum SIR of its local links (line~\ref{line7Alg2}). Afterward, the next UAV aims to achieve the same goal considering the modified position of the previous UAV and the process continues until the last UAV adjusts its location, which completes one iteration. The following three facts are immediate: i) the algorithm is convergent. ii) the SIR of the system achieved at each iteration of the algorithm is non-decreasing w.r.t the iteration number. In other words, the algorithm always increases the SIR of the system by a value that is greater than or equal to zero from one iteration to the next. iii) Given the fact that through the position adjustment process, each UAV requires the locations of the two adjacent nodes and only conducts internal calculations, the algorithm can be implemented in a distributed fashion. 
  \section{Extension to Multiple Sources of Interference}
Pursuing a similar approach to the previous two sections, i.e., obtaining the positions of the UAVs w.r.t the position of the MSI, upon existence of multiple sources of interference (SI) is highly challenging. Nevertheless, in the following, we demonstrate that the results derived in the previous sections for a single MSI can be easily extended to this scenario to provide approximate solutions. Let $I=\{\textrm{SI}_1,\textrm{SI}_2,\cdots,\textrm{SI}_{|I|}\}$ denote the set of SIs. Let $(x_i,y_i,0)$ denote the position of $\textrm{SI}_i$, $\forall i$. Assuming that the interference is superimposed, at any given point in the sky, $h>0$, the total interference power at point $(x,y,h)$, $\kappa(x,y,h)$, is given by:
\begin{equation}\label{eq:effectMultiple}
    \kappa (x,y,h)=\sum_{i=1}^{|I|}\frac{p_i/\eta_{_{\textrm{NLoS}}}}{(x-x_i)^2+y^2_i+h^2}.
\end{equation}
We model the effect of all the SIs as a single hypothetical MSI. For this purpose, we assume that the hypothetical MSI is placed at $(x_H,y_H,0)$ with power $p_H$. The hypothetical MSI should exhibit a similar interference effect as the SIs in the sky. Hence, we formulate the problem of obtaining $x_H,y_H,p_H$ as minimizing the approximation error described in~\eqref{eq:sourecDefiner} or the discretized version of it, i.e., replacing the integrals in~\eqref{eq:sourecDefiner} with summations, both of which can be solved numerically. Another approach is to approximate~\eqref{eq:effectMultiple} and the interference power expression of the hypothetical MSI using their Taylor expansions and obtain $x_H,y_H,p_H$ accordingly. Note that the approximation error depends on the positions and transmitting powers of the SIs. In general, the approximation error is lower when the SIs are closer to each other and have more homogeneous transmitting powers. {\color{black} Nevertheless, when the mentioned criterion  is not satisfied, the positions of the UAVs can be obtained by measuring the interference power in the environment, which is further discussed in the next section.
}
\section{Stochastic Interference}
In some scenarios, the position of the MSI or the SIs are not known. Also, the number of interference sources, their positions, and their {\color{black} transmission powers} may change over time, and thus not be fixed. In these scenarios, the interference can be considered as a random variable at each point in the sky. Deriving the distribution of this random variable at any point requires measuring the interference power at that point for a long time. In the following discussion, to have tractable derivations, we fix the altitude for both the single UAV and multiple UAV scenarios. {\color{black} Note that for both dual-hop and multi-hop settings a similar approach proposed in Algorithm~\ref{alg:obtainHinDist} can be used to further adjust the altitude(s) and the horizontal position(s) of the UAV(s) upon having stochastic interference.} In the following, we study the position planning for a single UAV and multiple UAVs considering stochastic interference in order.

  }

\subsection{Position planning for a single UAV considering stochastic interference}
Assuming the altitude of the UAV to be $h$, let random variable $I_{a}$ denote the power of interference at horizontal position $x=a$ with the corresponding probability density function (pdf) $f_{I_{a}}(y)$ and the \textit{moment generating function} $M_{I_{a}}(y)= E (\exp(yI_{a}))$.\footnote{Note that due to fixed altitude, for a better readability, index $h$ is omitted from the interference related terms such as $I$ and later $\Upsilon$ since the interference is assumed to be measured in that altitude.} Throughout, we assume that the $I_{.}$-s at different horizontal positions are independent. In this case, the SIR expressions given in~\eqref{eq:SIRexp} will become random variables defined as follows:
\begin{equation}\label{SIRstoc}
\begin{aligned}
&\textrm{SIR}_1(x,h)\hspace{-1mm}=\hspace{-1mm}\frac{p_t\hspace{-.14mm}/\hspace{-.15mm}\left(\hspace{-.15mm}\eta_{_{\textrm{NLoS}}} (x^2+h^2)\hspace{-.15mm}\right)}{I_x},\\
    &  \textrm{SIR}_2\hspace{-.1mm}(\hspace{-.15mm}x,h\hspace{-.15mm})\hspace{-1mm}= \hspace{-.199mm}\frac{\hspace{-.19mm}p_u\hspace{-.15mm}/\hspace{-.15mm}\left(\hspace{-.15mm}\eta_{_{\textrm{NLoS}}} \hspace{-.15mm}\left((D-x)^2+h^2\right)\hspace{-.15mm}\right)}{I_D}.
      \end{aligned}
\end{equation}


As a reasonable extension of~\eqref{eq:SIR_s}, we opt to work with the expected value of the SIR expressions: 
\begin{equation}
\label{eq:SIR_s3}
  \hspace{-4mm}   \overline{\textrm{SIR}_S}(x,h)= \min \big\{E(\textrm{SIR}_1(x,h)),E(\textrm{SIR}_2(x,h))\big\}, \;\forall x,h. \hspace{-6mm} 
\end{equation}
Considering~\eqref{SIRstoc}, to derive the expected value of the SIR expressions, we first derive the expected value of the inverse of the interference random variable as:
\begin{equation}\label{eq:expectedCalc}
 \begin{aligned}
     &E(\frac{1}{I_{a}})=\int_{0}^{\infty} \frac{1}{x} f_{I_{a}}(x) dx\\&=\int_{0}^{\infty} \hspace{-2mm}\int_{0}^{\infty} \exp(-yx) f_{I_{a}}(x)dy  dx =\int_{0}^{\infty} M_{I_{a}}(-y)  dy.
     \end{aligned}
 \end{equation}
Define $\Upsilon_a \triangleq \int_{0}^{\infty} M_{I_a}(-y)  dy$, whose calculation is deferred to Section~\ref{sec:betaDistribution}. Then the expected value of the SIR expressions are given by:
 \begin{equation}\label{eqImp}
\begin{aligned}
&E(\textrm{SIR}_1(x,h))=\Upsilon_x p_t/\left(\eta_{_{\textrm{NLoS}}} (x^2+h^2)\right),\\
    &  E(\textrm{SIR}_2(x,h))=\Upsilon_D p_u/\left(\eta_{_{\textrm{NLoS}}}\left((D-x)^2+h^2\right)\right).
      \end{aligned}
\end{equation}
Similar to Lemma~\ref{lemma:lem1}, it can be verified that $E(\textrm{SIR}_2(x,h))$ is a monotone increasing function w.r.t $x$, $x\in(0,D)$. Nevertheless, $E(\textrm{SIR}_1(x,h))$ exhibits different behaviors for different moment generating functions. 
 Due to this fact, deriving the analytic optimal solution in this case is intractable. Instead, we propose the following iterative approach to solve the problem. 
 Assume that we need to obtain the position of the UAV to satisfy a given $\overline{\textrm{SIR}_S}(x,{h})$ denoted by $\gamma$. Using the $E(\textrm{SIR}_2(x,h))$  expression in~\eqref{eqImp}, the corresponding value of $x$ is the feasible answer, i.e., belonging to $[0,D]$, of the following equation:
 \begin{equation}\label{eq:fixedhStoc}
     x= D - \sqrt{\frac{\Upsilon_D p_u}{\eta_{_{\textrm{NLoS}}}\gamma}-{h}^2}.
 \end{equation}
 Using the facts that the expression under the square root should be positive and $x$ should belong to $[0,D]$, the feasible values of the $\gamma$ are bounded by:
 \begin{equation}\label{eq:fixedhStoccc}
  \frac{\Upsilon_D p_u}{\eta_{_{\textrm{NLoS}}}(D^2+{h}^2)}   \leq  \gamma \leq \frac{\Upsilon_D p_u}{\eta_{_{\textrm{NLoS}}}{h}^2}.
 \end{equation}
 Define $\gamma_{min}\triangleq \frac{\Upsilon_D p_u}{\eta_{_{\textrm{NLoS}}}(D^2+{h}^2)} $ and $ \gamma_{max} \triangleq \frac{\Upsilon_D p_u}{\eta_{_{\textrm{NLoS}}}{h}^2}$.  At each iteration, $i$, of our iterative algorithm described in Algorithm~\ref{alg:fulldistStochastic2}, the algorithm derives the horizontal position of the UAV, $x^{(i)}$, to satisfy a given $\overline{\textrm{SIR}_S}$, i.e., $\gamma^{(i)}$, by solely considering the $E(\textrm{SIR}_2(x^{(i)},{h}))$ using \eqref{eq:fixedhStoc}. Afterward, it checks the $\overline{\textrm{SIR}_S}$ constraint using $E(\textrm{SIR}_1(x^{(i)},{h}))$. If the chosen $\gamma^{(i)}$ was unfeasible, the algorithm decreases the value of the targeted $\overline{\textrm{SIR}_S}$ by a tunnable step size ($\epsilon$) for the next iteration. {\color{black}It can be verified that the algorithm converges in at most $\floor{(\gamma_{max}-\gamma_{min})/\epsilon}$ iterations.}
 
{\color{black} \begin{proposition}\label{prop:optAlog2}
 For sufficiently small values of $\epsilon$, Algorithm~\ref{alg:fulldistStochastic2} identifies the optimal position of the UAV.
\end{proposition}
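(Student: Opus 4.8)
The plan is to mirror the reasoning behind Proposition~\ref{prop:1}: use the monotonicity of one branch of the expected SIR to collapse the search to a one-dimensional scan over the target value $\gamma$, and then show that the first $\gamma$ the scan certifies as feasible coincides (as $\epsilon\to 0$) with $\max_{x}\overline{\textrm{SIR}_S}(x,h)$.

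First I would record the facts already in hand. Since $E(\textrm{SIR}_2(x,h))$ is strictly increasing in $x$ on $[0,D]$, for every $\gamma\in[\gamma_{min},\gamma_{max}]$ there is a unique probe point $x_\gamma$ with $E(\textrm{SIR}_2(x_\gamma,h))=\gamma$, namely the closed form~\eqref{eq:fixedhStoc}; and for every placement $\overline{\textrm{SIR}_S}(x,h)=\min\{E(\textrm{SIR}_1(x,h)),E(\textrm{SIR}_2(x,h))\}\le E(\textrm{SIR}_2(x,h))\le\gamma_{max}$, so $\gamma_{max}$ caps the attainable objective. At the probe point one has $\overline{\textrm{SIR}_S}(x_\gamma,h)=\min\{E(\textrm{SIR}_1(x_\gamma,h)),\gamma\}$, which equals $\gamma$ exactly when the feasibility test $E(\textrm{SIR}_1(x_\gamma,h))\ge\gamma$ passes. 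Hence the algorithm, which starts at $\gamma=\gamma_{max}$ and decreases $\gamma$ by $\epsilon$ until the test first succeeds, terminates at the largest grid value $\gamma^*$ with $E(\textrm{SIR}_1(x_{\gamma^*},h))\ge\gamma^*$ and realizes $\overline{\textrm{SIR}_S}=\gamma^*$ at $x_{\gamma^*}$.

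Next I would prove optimality. Let $x^*$ be a global maximizer and $\gamma_{\mathrm{opt}}=\overline{\textrm{SIR}_S}(x^*,h)$. The decisive structural claim is that $E(\textrm{SIR}_1(x^*,h))\ge E(\textrm{SIR}_2(x^*,h))$, i.e.\ at the optimum $E(\textrm{SIR}_2)$ is the binding (smaller) branch; then $\gamma_{\mathrm{opt}}=E(\textrm{SIR}_2(x^*,h))$ and $x^*=x_{\gamma_{\mathrm{opt}}}$. Granting the claim, the test succeeds at $\gamma_{\mathrm{opt}}$ because $E(\textrm{SIR}_1(x_{\gamma_{\mathrm{opt}}},h))=E(\textrm{SIR}_1(x^*,h))\ge\gamma_{\mathrm{opt}}$, so the terminal value satisfies $\gamma^*\ge\gamma_{\mathrm{opt}}$ up to one step $\epsilon$; conversely $x_{\gamma^*}$ attains $\overline{\textrm{SIR}_S}=\gamma^*$, giving $\gamma^*\le\gamma_{\mathrm{opt}}$. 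Hence $\gamma^*\to\gamma_{\mathrm{opt}}$ as $\epsilon\to 0$, and the finite iteration bound $\lfloor(\gamma_{max}-\gamma_{min})/\epsilon\rfloor$ secures termination at the optimum.

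The hard part is the structural claim that the maximizer lies on the $E(\textrm{SIR}_2)$ branch. When the interference is i.i.d.\ along the horizontal axis, $\Upsilon_x\equiv\Upsilon$ is constant, so $E(\textrm{SIR}_1(x,h))=\Upsilon p_t/(\eta_{_{\textrm{NLoS}}}(x^2+h^2))$ is strictly decreasing while $E(\textrm{SIR}_2)$ is strictly increasing; the two curves meet at most once, and $\max_x\min\{E(\textrm{SIR}_1),E(\textrm{SIR}_2)\}$ is attained either at that crossing or at $x=D$, in both cases with $E(\textrm{SIR}_2)$ binding, so the claim is immediate and the scan is exact. For non-i.i.d.\ interference $\Upsilon_x$ varies with $x$ and $E(\textrm{SIR}_1)$ may be non-monotone; a local peak of $E(\textrm{SIR}_1)$ lying to the right of the rightmost crossing can then host the true optimum with $E(\textrm{SIR}_1)$ as the binding branch, a point the $E(\textrm{SIR}_2)$-driven scan cannot reach. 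I therefore expect the proof either to presume the i.i.d.\ case (or $\Upsilon_x$ non-increasing in $x$) or to add a caveat here; this monotonicity of $E(\textrm{SIR}_1)$ is the single property on which the whole optimality argument turns.
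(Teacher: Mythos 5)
Your argument follows the same skeleton as the paper's proof---initialize at $\gamma_{max}$, which caps $\overline{\textrm{SIR}_S}$ because it is the maximum of $E(\textrm{SIR}_2)$, and use the strict monotonicity of $E(\textrm{SIR}_2)$ in $x$ to turn the $\gamma$-sweep into a sweep of $x$ over $[0,D]$---but you push it further than the paper does. The paper's proof stops at the observation that the sweep ``explores all the possible locations of the UAV'' and that $E(\textrm{SIR}_2)$ is monotone increasing; it never engages with what you correctly isolate as the decisive structural claim, namely that the stopping rule only certifies positions at which $E(\textrm{SIR}_2)$ is the binding branch, so optimality requires the maximizer of $\min\{E(\textrm{SIR}_1),E(\textrm{SIR}_2)\}$ to lie in the set where $E(\textrm{SIR}_1)\ge E(\textrm{SIR}_2)$ (or in the complementary regime where the test never passes, which the fallback in Algorithm~\ref{alg:fulldistStochastic2} catches by taking the argmax over all visited probe points).

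Your caveat about the non-i.i.d.\ case identifies a genuine gap in the proposition as stated, not a defect in your proof. If $\Upsilon_x$ is non-monotone, $E(\textrm{SIR}_1)$ can sit strictly below $E(\textrm{SIR}_2)$ on all of $(x_R,D]$, where $x_R$ is the rightmost crossing, while climbing back up toward $E(\textrm{SIR}_2)$ on that interval; the algorithm then exits early at $x_R$ with value $E(\textrm{SIR}_2(x_R,h))$ even though $\min\{E(\textrm{SIR}_1),E(\textrm{SIR}_2)\}=E(\textrm{SIR}_1)$ near $x=D$ can be arbitrarily close to $\gamma_{max}$. The probe points to the right of $x_R$ are visited but only tested for feasibility, never scored. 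This is exactly the distinction the paper itself draws for the multi-UAV Algorithm~\ref{alg:fulldistStochastic}, where optimality is claimed only for i.i.d.\ interference and a weaker ``best feasible grid value'' guarantee is given otherwise; Proposition~\ref{prop:optAlog2} should carry the same qualifier (or Algorithm~\ref{alg:fulldistStochastic2} should always complete the sweep and return the argmax, as its own fallback branch already does). One small omission on your side: in the i.i.d.\ analysis you list the crossing and $x=D$ as the only candidates, overlooking the no-crossing configuration with $E(\textrm{SIR}_1)<E(\textrm{SIR}_2)$ everywhere and optimum at $x=0$---but that is precisely the case routed to the fallback branch, so your conclusion survives.
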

\begin{proof}
The initial point of Algorithm~\ref{alg:fulldistStochastic2} is $\gamma_{max}$, i.e., the maximum value of $E(\textrm{SIR}_2(x^{(i)},{h}))$, which coincides with the maximum attainable value of $\overline{\textrm{SIR}_S}(x,h)= \min \big\{E(\textrm{SIR}_1(x,h)),E(\textrm{SIR}_2(x,h))\big\}$. For sufficiently small values of $\epsilon$, the algorithm gradually decreases the desired $\gamma$ and explores all the possible locations of the UAV, which corresponds to the interval of $\gamma$, where $\gamma_{min}\leq  \gamma \leq \gamma_{max}$. The proof is complete since $E(\textrm{SIR}_2(x,h))$ is a monotone increasing function w.r.t $x$ and for the values of $\gamma$-s outside of this interval $E(\textrm{SIR}_2(x,h))$ is not feasible.
\end{proof}}
 
 \begin{algorithm}[t]
 	\caption{Iterative approach to obtain the optimal horizontal position of the UAV under the presence of stochastic interference}\label{alg:fulldistStochastic2}
 	\SetKwFunction{Union}{Union}\SetKwFunction{FindCompress}{FindCompress}
 	\SetKwInOut{Input}{input}\SetKwInOut{Output}{output}
 	 	{\footnotesize
 	\Input{Step size $\epsilon$.}
  $i=0$,
  Choose $\gamma^ {(0)}=\gamma_{max}$.\\
  Derive the position of the UAV, i.e., $x^{(i)}$,  using~\eqref{eq:fixedhStoc} with $\gamma^ {(i)}$. \label{lineinAlg1Stoc2}\\
  \uIf{$E(\textrm{SIR}_1(x^{(i)},{h})) < \gamma^ {(i)}$}{
  $\gamma^ {(i+1)}=\gamma^ {(i)}-\epsilon$\\
  \uIf{$\gamma^ {(i+1)}\leq \gamma_{min}$}{
    Among the previously investigated horizontal positions, fix the UAV at position ${{x^{(i')}}^*}$, where ${{(i')}^*}=\displaystyle \argmax_{i'\in \{1,2,\cdots,i \}} \min \{\gamma^{(i')}, E(\textrm{SIR}_1(x^{(i')},{h}))\}$
  }\Else{
  $i=i+1$ and go to line~\ref{lineinAlg1Stoc2}.
  }
  }\Else{
  Fix the UAVs at its current position.~\label{lineinAlg1Stoc3}
  }
  }
  \end{algorithm}

\subsection{Distributed Position planning for Multiple UAVs considering stochastic interference}
Assuming multiple UAVs as described in Section~\ref{sec:multiple}, upon having a stochastic interference, the SIR expressions given by~\eqref{eq:SIRs} will be random variables described as follows:
\begin{equation}\label{eq:SIRsStoc}
\hspace{-1mm}
\begin{aligned}
  & \textrm{SIR}_1(\mathbf{d},h)= \frac{p_t \eta^{-1}_{_{\textrm{NLoS}}}\left(\sqrt{d_1^2+h^2}\right)^{-2}}{I_{d_1}},\\ 
  \vspace{-15mm} 
    \vdots\\\vspace{-1mm}
  & \textrm{SIR}_N(\mathbf{d},h)=\hspace{-0.5mm}\frac{p_u \mu^{-1}_{_{\textrm{LoS}}} \left(\sqrt{d_N^2}\right)^{-2}}{I_{\sum_{i=1}^{N} d_i}},\\ \vspace{-1mm}
  & \textrm{SIR}_{N+1}(\mathbf{d},h)=\frac{p_u\eta^{-1}_{_{\textrm{NLoS}}} \left(\sqrt{d_{N+1}^2+h^2}\right)^{-2}}{I_{D}}.
    \end{aligned}
 \end{equation}  
 In this case,
  \begin{equation}
   \hspace{-.1mm}\overline{\textrm{SIR}_S(\mathbf{d},h)}=
   \min \big\{E (\textrm{SIR}_1(\mathbf{d},h)),\cdots, E (\textrm{SIR}_{N+1}(\mathbf{d},h))\big\},~~\forall \mathbf{d},h.\hspace{-2mm}
\end{equation}
 Assume that the random variable $I_{\sum_{i=1}^{k} d_i}$ follows the distribution $f_{_{I_{\sum_{i=1}^{k} d_i}}}(y)$ with moment generating function $M_{_{I_{\sum_{i=1}^{k} d_i}}}(y)= E (\exp(yI_{\sum_{i=1}^{k} d_i}))$. Considering~\eqref{eq:expectedCalc}, the expected value of the SIR expressions can be obtained as:

\begin{equation}\label{eqImp2}
\begin{aligned}
&E(\textrm{SIR}_1(\mathbf{d},h))=\Upsilon_{d_1}p_t/\left(\eta_{_{\textrm{NLoS}}} \left(d_1^2+h^2\right)\right),\\
&\vdots\\
&E(\textrm{SIR}_{N}(\mathbf{d},h))=\Upsilon_{\sum_{i=1}^{N} d_i} p_u/\left(\mu_{_{\textrm{LoS}}} d_{N}^2 \right),\\
    &  E(\textrm{SIR}_{N+1}(\mathbf{d},h))=\Upsilon_D p_u/\left(\eta_{_{\textrm{NLoS}}}\left(d_{N+1}^2+h^2\right)\right),
      \end{aligned}
\end{equation} 
In the following, we investigate the two problems pursued in Sections \ref{subsec:1} and \ref{subDist} considering the stochastic interference.

\subsubsection*{\textbf{Problem 1: Determining the minimum required number of UAVs and their locations to achieve a desired $\overline{\textrm{SIR}_S}$}} Considering~\eqref{eqImp2}, upon having non-identical interference distributions along the $x$-axis, the SIR expressions can exhibit different behaviors in various horizontal positions. Hence,  in general, the exact analysis of this problem is intractable in this case, and thus it can only be solved by exhaustive search, which can be computationally prohibitive. Considering this fact, pursuing a similar approach to Section~\ref{subsec:1}, we propose a sub-optimal approach based on maximizing the distances between the UAVs to cover the span between the Tx and the Rx that guarantees achieving the desired $\overline{\textrm{SIR}_S}$ ($\gamma$), and obtains the required number of UAVs and their locations. Note that, in general, the obtained number of UAVs using our approach may not always meet the minimum number of UAVs needed to satisfy the desired $\overline{\textrm{SIR}_S}$ determined by the exhaustive search. However, we will show in Proposition~\ref{prop:minNum} that these two numbers collide when the interference is i.i.d. along the $x$-axis.

Let $N$ denote the minimum number of needed UAVs, and, correspondingly, $d^*_1,\cdots,d^*_{N}$ denote the distances between the UAVs, both of which will later be derived. Let $\mathcal{D}_1$ denote the set of solutions of  $E\left(\textrm{SIR}_1(\mathbf{d},h)\right)\geq \gamma$ given by:
 \begin{equation}\label{d1Stoc}
     {d}^2_1\gamma - p_t \eta^{-1}_{_{\textrm{NLoS}}} \Upsilon_{d_1}+h^2\gamma\leq 0, ~~\forall d_1\in \mathcal{D}_1,
 \end{equation}
 which can be numerically obtained. Note that $d^*_1\in\mathcal{D}_1$. Also, {\color{black} the maximum distance between the Rx and UAV$_{N}$ to satisfy the SIR constraint $d^{max}$} is the following closed form expression obtained from solving $E\left(\textrm{SIR}_{N+1}(\mathbf{d},h)\right)= \gamma$:
  \begin{equation}\label{dkClose}
d^{max} = \sqrt{ \frac{p_u\eta^{-1}_{_{\textrm{NLoS}}}}{\gamma} \Upsilon_D - h^2}.
 \end{equation}
If $\exists d_1\in\mathcal{D}:  d^{max}\geq D-d_1$, using a UAV is enough to achieve the desired SIR$_S$. Otherwise, the position of the middle UAVs, $d^*_{k}$, $2\leq k\leq N$, is the largest solution to the following equation:
 \begin{equation}\label{eq:num1}
{d^*}^2_{k} \gamma-p_u \mu^{-1}_{_{\textrm{LoS}}}\Upsilon_{\sum_{i=1}^{k} d^*_i} =0.
 \end{equation}
 It can be seen that obtaining the distances using the sequence $d^*_1\rightarrow d^*_2\rightarrow d^*_3 ...$ requires solving~\eqref{eq:num1} numerically since $d^*_{k}$ appears in the argument of the $\Upsilon$, which may significantly reduce the speed of computations. Nevertheless, except for $d^*_1$, all the distances between UAVs can be obtained in a closed form expression by pursuing the following backward approach. Since $\sum_{i=1}^{k} d^*_i=D-\sum_{i=k+1}^{N+1}d^*_i$, considering $d^*_{N+1}=d^{max}$ given by the closed form expression in~\eqref{dkClose}, with replacing $\sum_{i=1}^{k} d^*_i$ by $D-\sum_{i=k+1}^{N+1} d^*_i$ in~\eqref{eq:num1}, we can obtain the position of the middle UAVs in a backward order $d^*_{N+1}\rightarrow d^*_{N}\rightarrow d^*_{N-1} ...$ using the following expression:
 \begin{equation}\label{eq:num}
d^*_{k}=\sqrt{p_u \gamma^{-1}\mu^{-1}_{_{\textrm{LoS}}}\Upsilon_{D-\sum_{i=k+1}^{N+1} d^*_i}}\;,\;\; 2\leq k\leq N.
 \end{equation}
Finally, the minimum required number of UAVs $N$ using our approach is given by:
\begin{equation}\label{eq:Opt2}
    N=\min \big\{n\in\mathbb{N}:~\exists d_1\in\mathcal{D}, ~\sum _{k=2}^{n} d^*_k= D-d_1-d^{max} \big\},
\end{equation}
which can be solved similarly to~\eqref{eq:Opt}. Since for all $ d^*_{N+1}$, where $d^*_{N+1}\leq d^{max}$, the SIR constraint at the Rx is met, if the above equation has no solution, its right hand side can be replaced by $D-d_1-(d^{max}-\rho)$, where $d^{max}-\rho\geq 0$ determines the position of the last UAV, i.e., $d^*_{N+1}=d^{max}-\rho$. Then, starting form $\rho=0$, the problem can be solved assuming small increments in the value of $\rho$ until a solution is found.
\begin{proposition}\label{prop:minNum}
 The determined number of UAVs $N$ using our approach given in \eqref{eq:Opt2} coincides with the minimum required number of UAV to satisfy a desired $\overline{\textrm{SIR}_S}$, when the interference is i.i.d. through the $x$-axis.
\end{proposition}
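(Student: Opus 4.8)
The plan is to exploit the fact that i.i.d.\ interference along the horizontal axis makes $\Upsilon_x$ independent of $x$, say $\Upsilon_x\equiv\Upsilon$, which renders every per-link constraint \emph{translation invariant}: each constraint then depends only on the \emph{local} distance of the corresponding link, not on the absolute horizontal location of the UAVs. Concretely, from~\eqref{eqImp2} the constraint $E(\textrm{SIR}_1)\ge\gamma$ becomes a bound $d_1\le d^*_1$ on the first gap, the constraint $E(\textrm{SIR}_{N+1})\ge\gamma$ becomes the bound $d_{N+1}\le d^{max}$ with $d^{max}$ in closed form from~\eqref{dkClose}, and, crucially, every middle constraint $E(\textrm{SIR}_k)\ge\gamma$ reduces to the \emph{same} bound $d_k\le d_{\mathrm{mid}}\triangleq\sqrt{p_u\Upsilon/(\gamma\mu_{_{\textrm{LoS}}})}$, because the argument of $\Upsilon$ in~\eqref{eq:num} no longer varies with $k$. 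Since each of $E(\textrm{SIR}_1),E(\textrm{SIR}_k),E(\textrm{SIR}_{N+1})$ is strictly decreasing in its respective distance, these bounds are exactly the maximal feasible gaps.

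Under this reduction, both the proposed scheme and the true minimum-count configuration become instances of the same one-dimensional covering problem: choose positions $0\le x_1<\cdots<x_M\le D$ with $x_1\le d^*_1$, with $x_k-x_{k-1}\le d_{\mathrm{mid}}$ for $2\le k\le M$, and with $D-x_M\le d^{max}$, while minimizing $M$. The proposed maximum-separation scheme produces the greedy solution $g_1=d^*_1$, $g_k=g_{k-1}+d_{\mathrm{mid}}$, and~\eqref{eq:Opt2} returns the first index $N$ at which $g_N\ge D-d^{max}$; this greedy configuration is feasible by construction, so the minimum count is at most $N$.

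I would then establish the matching lower bound by a standard ``greedy stays ahead'' induction. For any feasible configuration $f_1<\cdots<f_M$, I would show $f_k\le g_k$ for all $k$: the base case is $f_1\le d^*_1=g_1$, and the inductive step follows from $f_k\le f_{k-1}+d_{\mathrm{mid}}\le g_{k-1}+d_{\mathrm{mid}}=g_k$. Hence if $M<N$, then since $N$ is the smallest greedy index reaching $D-d^{max}$ we have $g_M<D-d^{max}$, so $f_M\le g_M<D-d^{max}$ contradicts feasibility of the last link. Therefore $M\ge N$ for every feasible configuration, and combined with feasibility of the greedy solution this shows that the count $N$ of~\eqref{eq:Opt2} equals the minimum required number of UAVs.

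The main obstacle is not the covering argument itself, which is routine, but pinning down precisely where the i.i.d.\ hypothesis enters: the entire reduction hinges on $\Upsilon_x$ being constant so that the middle-gap bound $d_{\mathrm{mid}}$ is a single shared constant and the feasibility region is translation invariant. In the non-i.i.d.\ case the argument of $\Upsilon$ in~\eqref{eq:num} depends on the cumulative position, the per-link maximal gaps depend on where the preceding UAVs sit, and the clean stays-ahead comparison breaks down, so the care in the proof lies in isolating the strict monotonicity of each $E(\textrm{SIR}_k)$ in its local distance and the equal-gap property as the exact consequences of i.i.d.\ that make the greedy maximum-separation scheme optimal.
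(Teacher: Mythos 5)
Your proof is correct and follows essentially the same route as the paper: i.i.d.\ interference makes $\Upsilon_x$ constant, so each $E(\textrm{SIR}_k)$ is strictly decreasing in its local gap, the per-link constraints become translation-invariant maximal-gap bounds, and maximum separation is therefore optimal. The only difference is that you explicitly carry out the greedy-stays-ahead induction that the paper's proof leaves as ``immediate,'' which is a welcome tightening rather than a different argument.
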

\begin{proof}
For i.i.d. interference through $x$-axis, $\Upsilon_x$ is identical for $x\in[0,D]$. Thus, considering~\eqref{eqImp2}, $E(\textrm{SIR}_{i}(x,h))$ is a monotone decreasing function w.r.t $d_i$, $\forall i$. Hence, the distance between the UAVs obtained using our method coincides with the maximum distance between UAVs to satisfy the desired $\overline{\textrm{SIR}_S}$, given which the proposition result is immediate.
\end{proof}
\subsubsection*{\textbf{Problem 2: Obtaining a distributed algorithm to increase $\overline{\textrm{SIR}_S}$ for a given number of UAVs in the system}} Considering our methodology in the previous problem and the method described in Section~\ref{subDist}, we propose Algorithm~\ref{alg:fulldistStochastic} to increase the $\overline{\textrm{SIR}_S}$, which only uses message passing between adjacent UAVs to exchange their current location. As can be seen, the proposed algorithm is similar to the proposed algorithm in Subsection~\ref{subDist}. The only difference is that, instead of forward propagation, Algorithm~\ref{alg:fulldistStochastic} uses \textit{backward propagation}, by which each UAV transmits its position to the adjacent UAV located toward the Tx, which is used to exploit the closed form expression given by~\eqref{eq:num}. Consequently, to examine the satisfaction of the targeted $\overline{\textrm{SIR}_S}$, the average value of the SIR at the first UAV is measured. Similar to Algorithm~\ref{alg:fulldist}, Algorithm~\ref{alg:fulldistStochastic} converges in at most $\floor{\gamma^{(0)}/\epsilon}$ iterations, and its worst computational complexity is $O(\gamma^{(0)}/\epsilon)$ at each UAV. In the following, we obtain the performance guarantee of our proposed algorithm for both i.i.d. and non-i.i.d interference through the horizontal axis. 

\begin{proposition}
 For any given number of UAVs and a sufficiently small step size $\epsilon$, assuming the same altitudes for the UAVs, if $I_x$-s are i.i.d. for $x\in[0,D]$, then Algorithm~\ref{alg:fulldistStochastic} is guaranteed to converge to the maximum achievable value of $\overline{\textrm{SIR}_S}$.
\end{proposition}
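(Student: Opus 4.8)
The plan is to mirror the argument used for Proposition~\ref{prop:1} in the deterministic setting, exploiting the fact that under i.i.d. interference the expected SIR expressions recover the same monotonicity that rendered the maximum-separation scheme optimal there. First I would record the crucial consequence of the i.i.d. assumption: by the reasoning in the proof of Proposition~\ref{prop:minNum}, $\Upsilon_x$ is identical for every $x\in[0,D]$, so from~\eqref{eqImp2} each $E(\textrm{SIR}_i(\mathbf{d},h))$ is a monotone decreasing function of the corresponding hop length $d_i$. This single property is what every subsequent step relies on.

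Next I would characterize the initialization. Algorithm~\ref{alg:fulldistStochastic} begins with the largest attainable target $\gamma^{(0)}$, obtained by placing the boundary UAVs so that the relevant boundary link attains its maximum expected SIR; since $\overline{\textrm{SIR}_S}$ is the minimum over all links, $\gamma^{(0)}$ upper-bounds the achievable system SIR. Let $N$ be the given number of UAVs and let $\gamma_{max}$ denote the maximum achievable $\overline{\textrm{SIR}_S}$; by definition there exists a feasible placement realizing it, and therefore $\gamma_{max}\le\gamma^{(0)}$.

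The heart of the proof is to show that the backward maximum-separation recursion~\eqref{eq:num} used in the algorithm is optimal for covering the span $[0,D]$ with $N$ hops at any fixed target $\gamma$. Because each $E(\textrm{SIR}_i)$ is monotone decreasing in $d_i$, for a fixed target $\gamma$ the distance that makes a link satisfy $E(\textrm{SIR}_i)=\gamma$ is precisely the maximum admissible hop length; hence~\eqref{eq:num} assigns to each hop the longest length consistent with the constraint. A standard exchange argument then shows that if this maximum-separation assignment cannot cover $[0,D]$ with $N$ hops at target $\gamma$, then no admissible assignment can: shortening any hop below the value given by~\eqref{eq:num} only reduces the total covered distance without relaxing any constraint. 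Consequently, feasibility of $\gamma_{max}$ guarantees that the maximum-separation scheme also succeeds at target $\gamma_{max}$.

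Finally I would close the loop between the discrete sweep in $\epsilon$ and the continuous optimum. The algorithm decreases the target by $\epsilon$ per iteration starting from $\gamma^{(0)}\ge\gamma_{max}$, so for sufficiently small $\epsilon$ some iterate $\gamma^{(j)}$ lands within $\epsilon$ of $\gamma_{max}$; by continuity of the placement produced by~\eqref{eq:num} in $\gamma$, the coverage/stopping constraint (checked via the measured SIR at the first UAV) is satisfied there, so the algorithm terminates at a configuration whose $\overline{\textrm{SIR}_S}$ equals $\gamma_{max}$ as $\epsilon\to0$. I expect the optimality-of-maximum-separation step to be the main obstacle: it must be argued rigorously that the greedy backward assignment maximizes total coverage, and this argument is valid only because the i.i.d. assumption makes $\Upsilon_x$ constant and thereby restores monotonicity of each expected SIR in its hop length; without it, a non-constant $\Upsilon_x$ could let a shorter earlier hop admit a longer subsequent hop, breaking the exchange argument and exactly the reason the non-i.i.d. case is treated separately.
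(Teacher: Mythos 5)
Your proposal is correct and follows essentially the same route as the paper: the authors likewise reduce the claim to the proof of Proposition~\ref{prop:1} by noting that i.i.d. interference makes $\Upsilon_x$ constant so that each $E(\textrm{SIR}_i)$ is monotone decreasing in $d_i$, that the algorithm starts from the maximum attainable target and sweeps down by $\epsilon$, and that the maximum-separation placement is optimal by contradiction (if it fails to cover $[0,D]$ at target $\gamma$, no placement can). Your write-up merely makes the exchange argument and the role of the i.i.d. assumption more explicit than the paper does.
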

\begin{proof}
Assume that given the number of UAVs $N$, $\gamma^{max}$ is the maximum achievable $\overline{\textrm{SIR}_S}$. The proof is similar to the proof of Proposition~\ref{prop:1} considering the following facts: i) for i.i.d. interference through $x$-axis, $\Upsilon_x$ will be identical for $x\in[0,D]$; ii) considering~\eqref{eqImp2}, in this case, $E(\textrm{SIR}_{i}(x,h))$ will be a monotone decreasing function w.r.t $d_i$, $\forall i$; iii) the algorithm aims to achieve the target SIR at each iteration via creating the maximum distance between the UAVs.
\end{proof}

\begin{proposition}
 For a given number of UAVs, a sufficiently small step size $\epsilon$, assuming the same altitudes for the UAVs, if $I_x$-s are non-i.i.d. for $x\in[0,D]$,  then Algorithm~\ref{alg:fulldistStochastic} is guaranteed to converge to the maximum value of sequence $\{ \gamma^{(i)}\}_{i=0}^{\floor{\gamma^{(0)}/\epsilon}}$, where $\gamma^{(i)}=\gamma^{(0)}-i\times\epsilon $, for the target SIR$_S$, for which covering the distance between the Tx and the Rx using backward propagation and maximum distance between the UAVs obtained in \eqref{eq:num} is feasible.
\end{proposition}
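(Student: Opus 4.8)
The plan is to mirror the proof of Proposition~\ref{prop:1}, but to track carefully where the i.i.d. hypothesis was used so that the weaker conclusion emerges naturally. First I would record that, by construction, Algorithm~\ref{alg:fulldistStochastic} generates the strictly decreasing arithmetic sequence $\gamma^{(i)}=\gamma^{(0)}-i\epsilon$ (initialized at the maximal value $\gamma^{(0)}$ obtained by placing the end UAVs above the Tx and the Rx, exactly as in Algorithm~\ref{alg:fulldist}), and that for each $\gamma^{(i)}$ it executes the \emph{backward} max-separation routine: it fixes the last gap $d^{max}$ via the closed form \eqref{dkClose}, recovers $d^*_{N},\ldots,d^*_{2}$ in closed form from \eqref{eq:num}, and finally tests the residual first-hop distance $d_1=D-\sum_{k=2}^{N+1}d^*_k$ against the first-link constraint $E(\textrm{SIR}_1)\geq\gamma^{(i)}$, i.e.\ against $d_1\in\mathcal{D}_1$ of \eqref{d1Stoc}. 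A short preliminary step establishes that solving $E(\textrm{SIR}_k)=\gamma^{(i)}$ with equality yields, for each $k$ and each fixed downstream position $P_k=\sum_{j\le k}d^*_j=D-\sum_{j>k}d^*_j$, the \emph{largest} admissible gap $d^*_k$; hence the routine genuinely realizes maximum separation, and ``feasibility of $\gamma^{(i)}$'' is the well-defined event that this configuration covers $[0,D]$ with the first-link constraint satisfied.

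The core of the argument is then a monotonicity-free scanning observation. Since the algorithm decrements $\gamma$ only while the max-separation configuration fails to span $[0,D]$ (the backward analogue of line~\ref{alg1line6} of Algorithm~\ref{alg:fulldist}) and halts at the first $\gamma^{(i)}$ for which coverage succeeds, the returned value is, by definition of a top-down scan over a finite strictly decreasing sequence, the largest $\gamma^{(i)}$ for which the backward max-separation scheme is feasible. Crucially, this needs \emph{no} assumption that feasibility be monotone in $\gamma$: any feasible value exceeding the returned one would have been tested, and rejected, at an earlier iteration, which is impossible. Termination in at most $\floor{\gamma^{(0)}/\epsilon}$ iterations follows because $\gamma^{(i)}$ drops by $\epsilon$ per step and the feasibility set is nonempty (for sufficiently small $\gamma$ the maximal gaps over-span $[0,D]$), so the downward scan must hit a feasible element.

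The step I expect to be the genuine obstacle — and the reason the conclusion is strictly weaker than the i.i.d.\ Proposition — is the loss of the contradiction argument of Proposition~\ref{prop:1}. There, max separation coincided with achievability of $\gamma_{max}$ because, for i.i.d.\ interference, $\Upsilon_x$ is constant and $E(\textrm{SIR}_i)=\Upsilon p_u/(\mu_{_{\textrm{LoS}}}d_i^2)$ is monotone decreasing in $d_i$, so ``push every UAV as far as possible'' is globally optimal. For non-i.i.d.\ interference $\Upsilon_x$ varies with $x$, and through the coupling $P_i=\sum_{j\le i}d_j$ the quantity $E(\textrm{SIR}_i)$ from \eqref{eqImp2} need not be monotone in $d_i$; consequently a configuration that does \emph{not} maximize every individual gap might cover $[0,D]$ at a target $\gamma$ the greedy scheme cannot reach. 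I would therefore deliberately refrain from claiming global optimality and state the guarantee precisely as in the proposition: convergence to the maximum element of $\{\gamma^{(i)}\}$ for which the backward construction of \eqref{eq:num} is feasible. The only residual care is to confirm the backward recursion is well posed — that each $P_k$ is fixed before $d^*_k$ is computed and that the radicand in \eqref{eq:num} stays nonnegative over the scanned range — which follows directly from the closed forms and the feasibility bounds already derived for the single-UAV case.
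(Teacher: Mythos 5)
Your argument is correct and is essentially the paper's own (very short) proof: the target SIR is decremented by $\epsilon$ each iteration, so the downward scan halts at the first --- hence largest --- element of $\{\gamma^{(i)}\}$ for which the backward max-separation construction covers $[0,D]$ with the first-link constraint satisfied. Your additional observations (that no monotonicity of feasibility in $\gamma$ is needed, and that non-i.i.d.\ interference is precisely what blocks the global-optimality contradiction of Proposition~\ref{prop:1}) are accurate elaborations of the same route rather than a different one.
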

\begin{proof}
The proof is trivial since the algorithm makes small changes according to the step size $\epsilon$ in the target SIR$_S$. Also, the algorithm stops at iteration $i$ when the current target value of SIR$_S$ ($\gamma^{(i)}$) is feasible and  the next target value for the SIR$_S$ ($\gamma^{(i+1)}$) is smaller than the current value of SIR$_S$.
\end{proof}
\begin{algorithm}[t]
 	\caption{Distributed position planning for multiple UAVs upon having stochastic interference}\label{alg:fulldistStochastic}
 	\SetKwFunction{Union}{Union}\SetKwFunction{FindCompress}{FindCompress}
 	\SetKwInOut{Input}{input}\SetKwInOut{Output}{output}
 	 	{\footnotesize
 	\Input{Step size $\epsilon$.}
  $i=0$, $d^{(i)}_{N+1}=0$.\\
  $\gamma^{(i)}= E(\textrm{SIR}_{N+1}(\mathbf{d}^{(i)},h))=\frac{p_u\eta^{-1}_{_{\textrm{NLoS}}}}{\left(d^{(i)}_{N+1}\right)^2+h^2}\Upsilon_D$.\label{lineinAlg1Stoc}\\
  Given $d^{(i)}_{N+1}$ and $\gamma^{(i)}$, obtain $d^{(i)}_{N}\rightarrow d^{(i)}_{N-1}\rightarrow \cdots \rightarrow d^{(i)}_{2}$ in order using backward propagation based on~\eqref{eq:num}.\label{lineinAlg3Stoc}\\
Send a message from Tx to $\textrm{\footnotesize UAV}_{1}$ and measure ${\footnotesize \textrm{SIR}_{1}}$.\label{lineinAlg5Stoc}\\
  \uIf{SIR$_{1}<\gamma^{(i)}$}{\label{lineinAlg6Stoc}
  $\gamma^{(i+1)}=\gamma^{(i)}-\epsilon$.\\
  $i=i+1$ and go to line~\ref{lineinAlg1Stoc}.\label{lineinAlg7Stoc}
  }\Else{
  Fix the UAVs at their current positions.
  }
  }
  \end{algorithm}
{\color{black} \begin{remark} For algorithm~\ref{alg:fulldist}, for a given step size $\epsilon$, we have $\textrm{SIR}_S^*-\textrm{SIR}_S^{a}\leq \epsilon$, where SIR$_S^*$ is the optimal solution of the problem and SIR$_S^{a}$ is the achieved SIR$_S$ of the algorithm. The same bound holds for Algorithms~\ref{alg:fulldistStochastic2} and~\ref{alg:fulldistStochastic} by replacing $\textrm{SIR}_S^*$ with $\overline{\textrm{SIR}_S}^*$ and $\textrm{SIR}_S^{a}$ with $\overline{\textrm{SIR}_S}^{a}$ upon having i.i.d interference through x-axis. These bounds can be used as a guideline to tune the step sizes of the algorithms considering the tolerable deviation from the optimal solution. Note that, in general, a smaller step size results in a longer convergence time and a smaller final error.\end{remark}}
\subsection{Discussion on the moment generating function of the interference}\label{sec:betaDistribution}
  \begin{figure}[t]
		\includegraphics[width=.92\linewidth, height=0.45\linewidth]{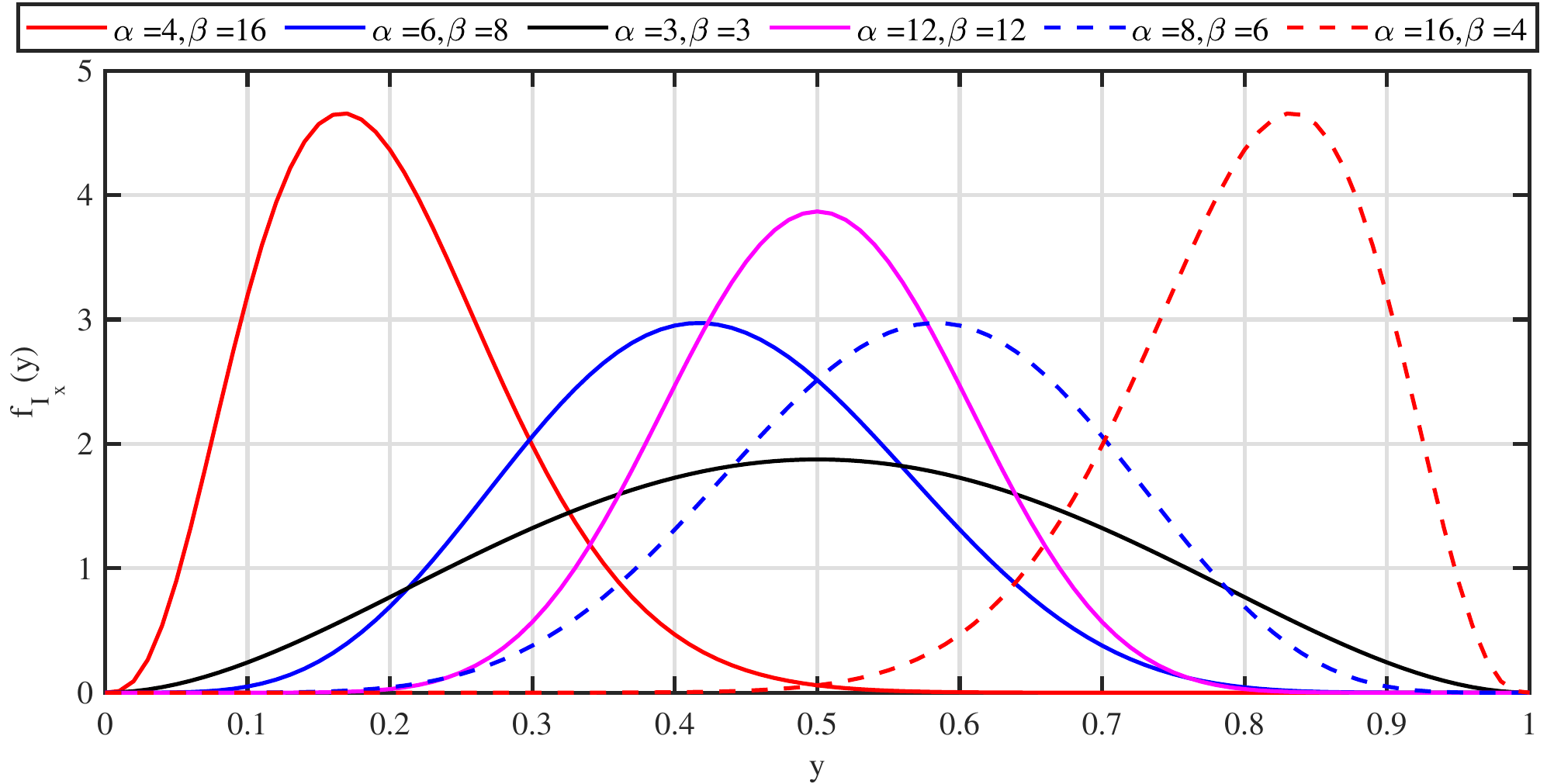}
		\caption{Examples of Beta distribution with different parameters.\label{fig:beta}}
		\vspace{-5mm}
			\end{figure}
			
So far, all the expressions derived in this section can be applied to any given distribution of the interference power by obtaining the moment generating function and its integral at each horizontal point (see \eqref{eq:expectedCalc}). {\color{black}To concertize the results,} in the following, we propose estimating the distribution of the interference power using the Beta distribution, which results in nice closed form expression for the expected value of the SIR expressions.

Any SIR expression given in this section can be written in the following form:
$\textrm{SIR}(x,h)=\frac{S(x,h)}{I_{x}}$, where $S$ is the power of the signal and $I$ is the power of the interference, which is a random variable. {\color{black} Consider $I^{max}$ as an upper bound on the power of interference in the environment: ${I}_{x} <I^{max}$, $\forall x\in[0,D]$}. In this case, the SIR expressions can be written as:
\begin{equation}
    \textrm{SIR}(x,h)=\frac{S(x,h)}{I_{x}}=\frac{S(x,h)/I^{max}}{\Bar{I}_{x}},
\end{equation}
where $\Bar{I}_{x}\triangleq \frac{{I}_{x}}{I^{max}}$ is the normalized interference power. We assume that $I_{x}\neq 0$, $\forall x$, to avoid undefined SIR expressions. Hence, to model $\Bar{I}_{x}$, we look for a family of distribution with support on $[0,1]$, which take the value of zero at $0$ and $1$. One of the good candidates to do such is the Beta distribution, which is known as a general distribution that can be used to approximate the exponential distribution, Rayleigh distribution,  Ricean distribution, and Gamma distribution. In this case, the pdf of $\Bar{I}_{x}$ is given by:
\begin{equation}
    f_{\Bar{I}_{x}}(y) = \frac{y^{\alpha_x-1} (1-y)^{\beta_x-1}}{B(\alpha_x,\beta_x)},
\end{equation}
where $\alpha_x$ and $\beta_x$ are the shaping parameters of the distribution, $B(\alpha_x,\beta_x)\triangleq \frac{\Gamma(\alpha_x) \Gamma (\beta_x)}{\Gamma(\alpha_x+\beta_x)}$, and $\Gamma(.)$ is the gamma function. It can be construed that each point $x\in[0,D]$ is associated with a tuple $(\alpha_x,\beta_x)$. Some examples of the pdf of the Beta distribution considering different shaping parameters are depicted in Fig.~\ref{fig:beta}. 
Using the Beta distribution, for $\alpha_x>1$, we get:
\begin{equation}
\begin{aligned}
E\left(\frac{1}{\Bar{I}_{x}}\right)&= \int_{0}^{1} \frac{1}{y} \frac{y^{\alpha_x-1} (1-y)^{\beta_x-1}}{B(\alpha_x,\beta_x)} dy\\
&= \int_{0}^{1} \frac{y^{\alpha_x-1 -1} (1-y)^{\beta_x-1}}{B(\alpha_x-1,\beta_x)} \frac{B(\alpha_x-1,\beta_x)}{B(\alpha_x,\beta_x)} dy\\
&=\frac{B(\alpha_x-1,\beta_x)}{B(\alpha_x,\beta_x)} \int_{0}^{1} \frac{y^{\alpha_x-1 -1} (1-y)^{\beta_x-1}}{B(\alpha_x-1,\beta_x)} dy \\
&=\frac{B(\alpha_x-1,\beta_x)}{B(\alpha_x,\beta_x)} =\frac{\frac{\Gamma(\alpha_x-1) \Gamma (\beta_x)}{\Gamma(\alpha_x+\beta_x-1)}}{\frac{\Gamma(\alpha_x) \Gamma  (\beta_x)}{\Gamma(\alpha_x+\beta_x)}}\\
&=\frac{\frac{\Gamma(\alpha_x-1) }{\Gamma(\alpha_x+\beta_x-1)}}{\frac{(\alpha_x-1)\Gamma(\alpha_x-1) }{(\alpha_x+\beta_x-1)\Gamma(\alpha_x+\beta_x-1)}} = \frac{\alpha_x+\beta_x-1}{\alpha_x-1},
\end{aligned}
\end{equation}
where the first equality on the last line is the result of the gamma function property $\Gamma(z+1)=z \Gamma (z)$. As compared to \eqref{eq:expectedCalc}, the above result eliminates the need for calculation of the moment generating function of the interference power and its integral at each point. As a result, a new set of nice closed form expressions can be obtained by putting $\Upsilon_{x}=\frac{\alpha_x+\beta_x-1}{\alpha_x-1}$ in \eqref{eq:fixedhStoc}, \eqref{eq:fixedhStoccc}, \eqref{d1Stoc}, \eqref{dkClose}, \eqref{eq:num}. Due to space limitations, we avoid writing the corresponding expressions.

\section{Numerical Results}
   \noindent {\color{black} The simulation parameters are summarized in Table~\ref{table:1}. The general simulation parameters shown in the first sub-table of Table~\ref{table:1} are chosen by following~\cite{chen2018multiple}, also we set $\eta_{_{\textrm{NLoS}}}=\mu_{_{\textrm{LoS}}}$. \begin{table}[h]
  \caption{Simulation parameters. The first sub-table describes the general parameters. For the dual-hop setting, the second and the third sub-tables describe the parameters used in Fig.~\ref{fig:locus} and Fig.~\ref{fig:rate}, respectively. For the multi-hop setting,  the fourth and the fifth sub-tables describe the parameters used in Figs.~\ref{fig:minpos},\ref{fig:minpower} and Figs.~\ref{fig:conv},\ref{fig:gainDist}, respectively.}\label{table:1}
\centering \resizebox{\columnwidth}{!}{
{\small
 \begin{tabular}{|c|c| c| c| c|c|c|c|} 
 \hline
 \multirow{2}{3.5em}{General:}& $f_c$ & $C_{_{\textrm{LoS}}}$& $C_{_{\textrm{NLoS}}} $& -& -& -& - \\ 
  \cline{2-8}
 &$2\textrm{GHz}$ & $10^{0.01}$ & $10^{2.1}$ &
 -& -& -& - \\ 
 \hline\hline
 \multirow{2}{2.5em}{Fig.~\ref{fig:locus}:}
  & $p_u$ & $D $& -& -& -& -& -\\
  \cline{2-8}
 &$1\textrm{W}$ & $35\textrm{m}$&-& -& -& -& -  \\
  \hline\hline
 \multirow{2}{2.5em}{Fig.~\ref{fig:rate}:}&
  $p_{_{\textrm{MSI}}}$& $Y_{_{MSI}}$&$X_{_{MSI}}$ &$D$& -& -& -\\
  \cline{2-8}
 &$20\textrm{W}$& $30\textrm{m}$ &$30\textrm{m}$& $35\textrm{m}$& - & - & -\\
  \hline\hline
   \multirow{2}{3em}{Figs.~\ref{fig:minpos},\ref{fig:minpower}:}&
  $p_t$ &$p_{_{\textrm{MSI}}}$& $D$ & $h$& -& -& -\\
  \cline{2-8}
 & $80\textrm{W}$ & $80\textrm{W}$ & $1000\textrm{m}$ & $20\textrm{m}$ & -& -& - \\
  \hline\hline
     \multirow{2}{3em}{Figs.~\ref{fig:conv},\ref{fig:gainDist}:}&
   $p_u$& $p_t$& $p_{_{\textrm{MSI}}}$& $h$&$D$ &$X_{_{\textrm{MSI}}}$&$Y_{_{\textrm{MSI}}}$\\
  \cline{2-8}
 & $1\textrm{W}$ & $80\textrm{W}$ & 80\textrm{W} & $20\textrm{m}$ & $1000\textrm{m}$& $500\textrm{m}$& $400\textrm{m}$\\
  \hline
 \end{tabular}
 }
   \vspace{-5mm}
   }
\end{table}

  \subsection{Dual-Hop Setting }
  Considering the transmission power of the UAV and the distance between the Tx and Rx as given in the second sub-table of Table~\ref{table:1},  Fig.~\ref{fig:locus} depicts the locus described in Lemma~\ref{th:main} for various parameters.}

  Considering the solid black line and the dotted red line as the references, as expected, increasing $p_t$ (the marked blue and dashed magenta lines) shifts the locus toward the Rx. Also, considering the dotted red line and the dashed magenta line as the references, bringing the MSI closer to the Tx/Rx (the solid black and marked blue lines) shifts the locus downward, which is equivalent to a decrease in the required UAV altitude. 
		 \begin{figure}[t]
		\minipage{1\linewidth}
		\includegraphics[width=3.in, height=1.74in]{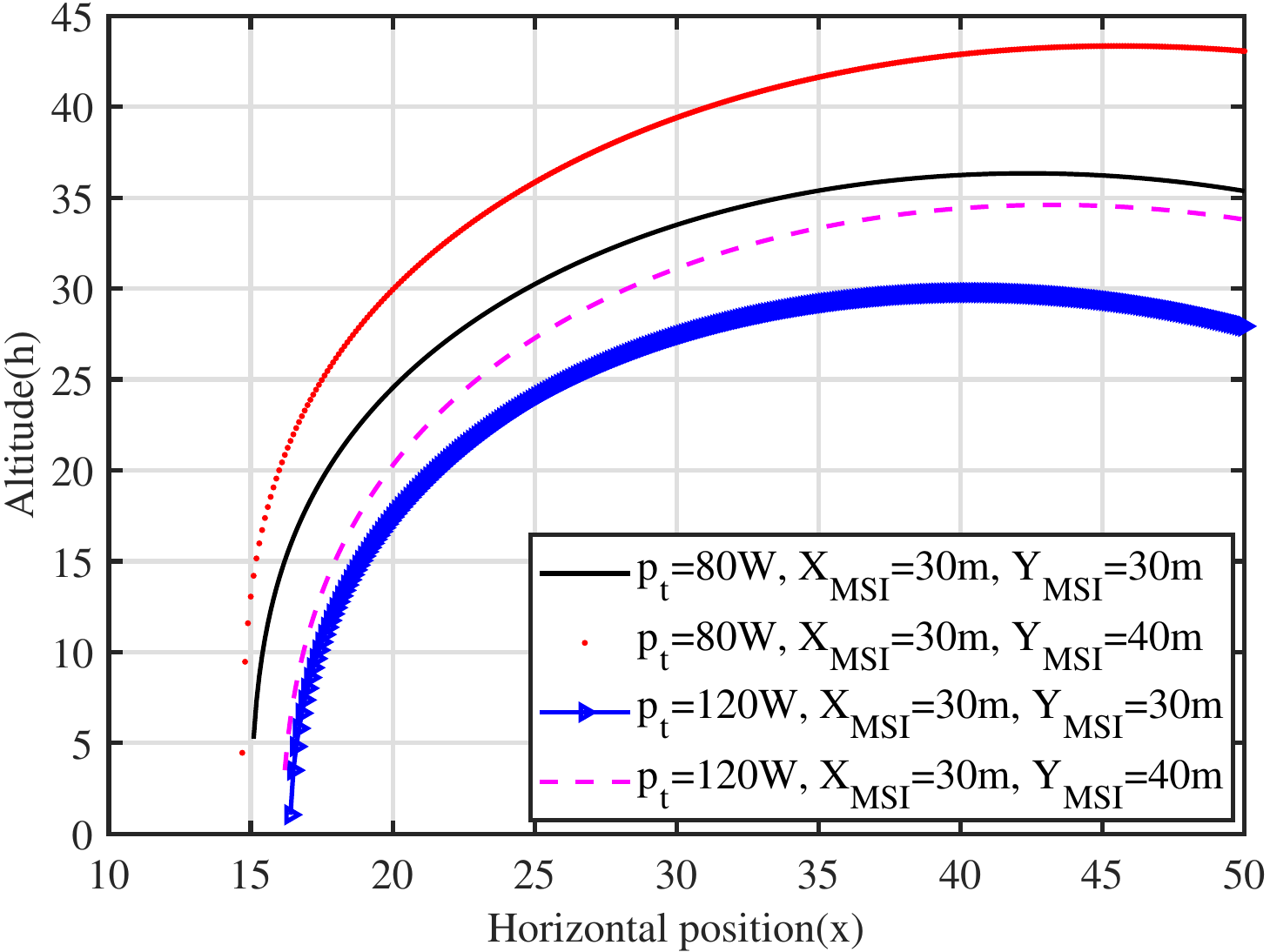}
		\caption{The locus of the points described in Lemma~\ref{th:main} for different parameters. \label{fig:locus}}
		\endminipage
		
		\minipage{1\linewidth}
		\includegraphics[width=3.in, height=1.74in]{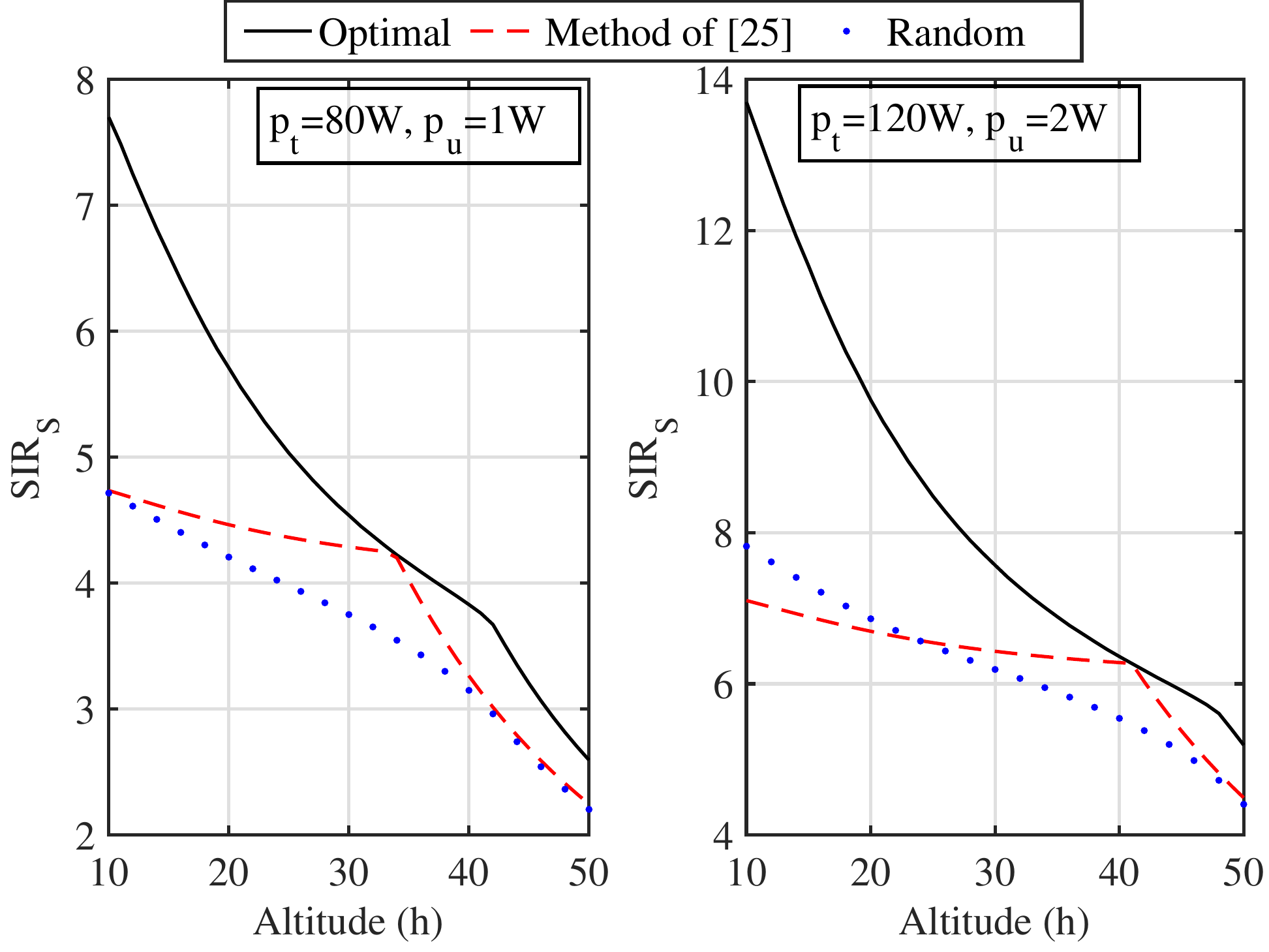}
		\caption{Comparison between the $\textrm{SIR}_S$ obtained using our optimal approach, the method in~\cite{chen2018multiple}, and the random placement for different parameters considering a single UAV.\label{fig:rate}}
		\endminipage
		
		\minipage{1\linewidth}
		\includegraphics[width=3.in, height=1.74in]{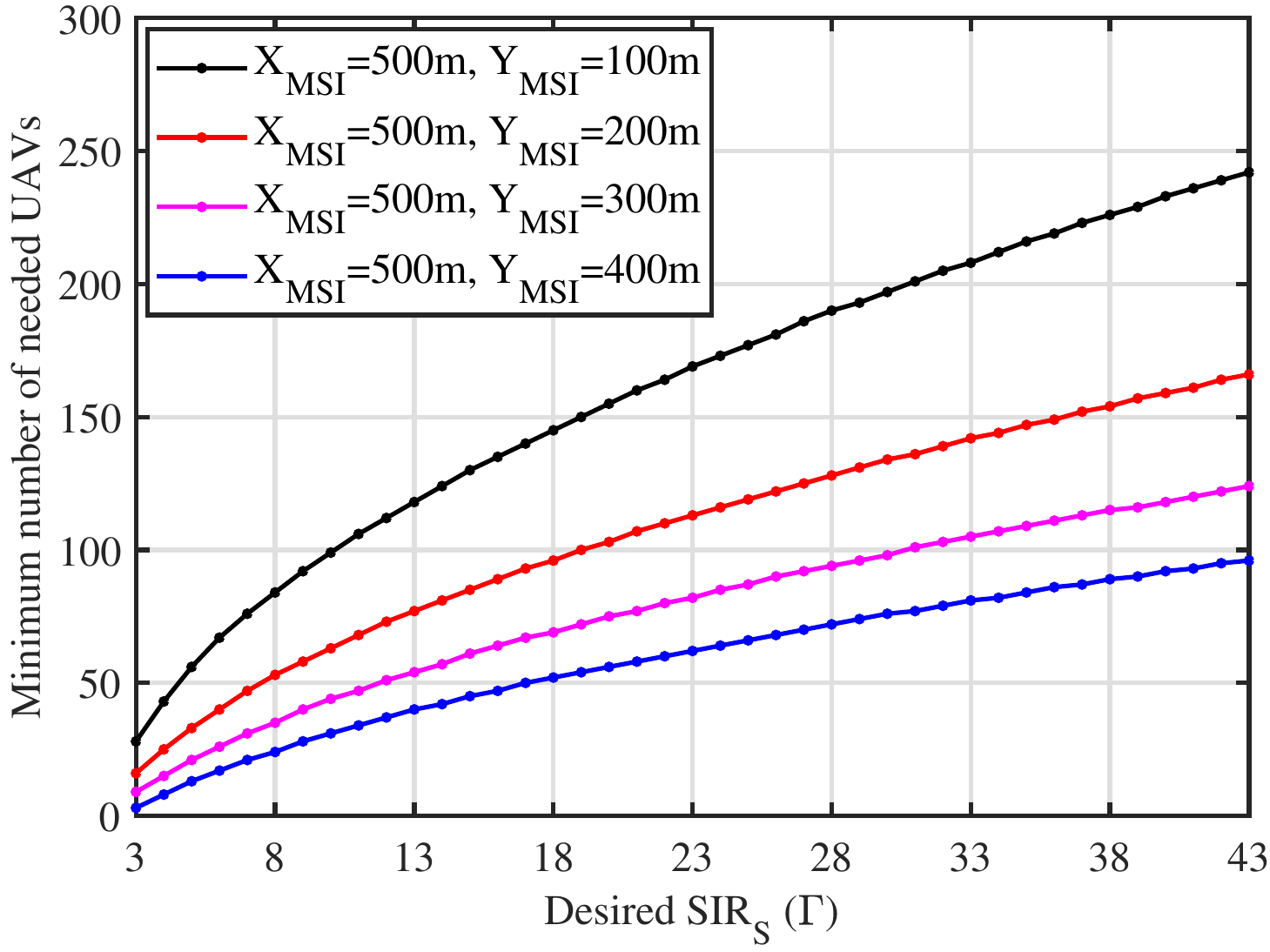}
		\caption{The minimum required number of UAVs to satisfy various values of $\textrm{SIR}_S$ for different positions of the MSI. \label{fig:minpos}}
		\endminipage
		
		\minipage{1\linewidth}
		\includegraphics[width=3.in, height=1.74in]{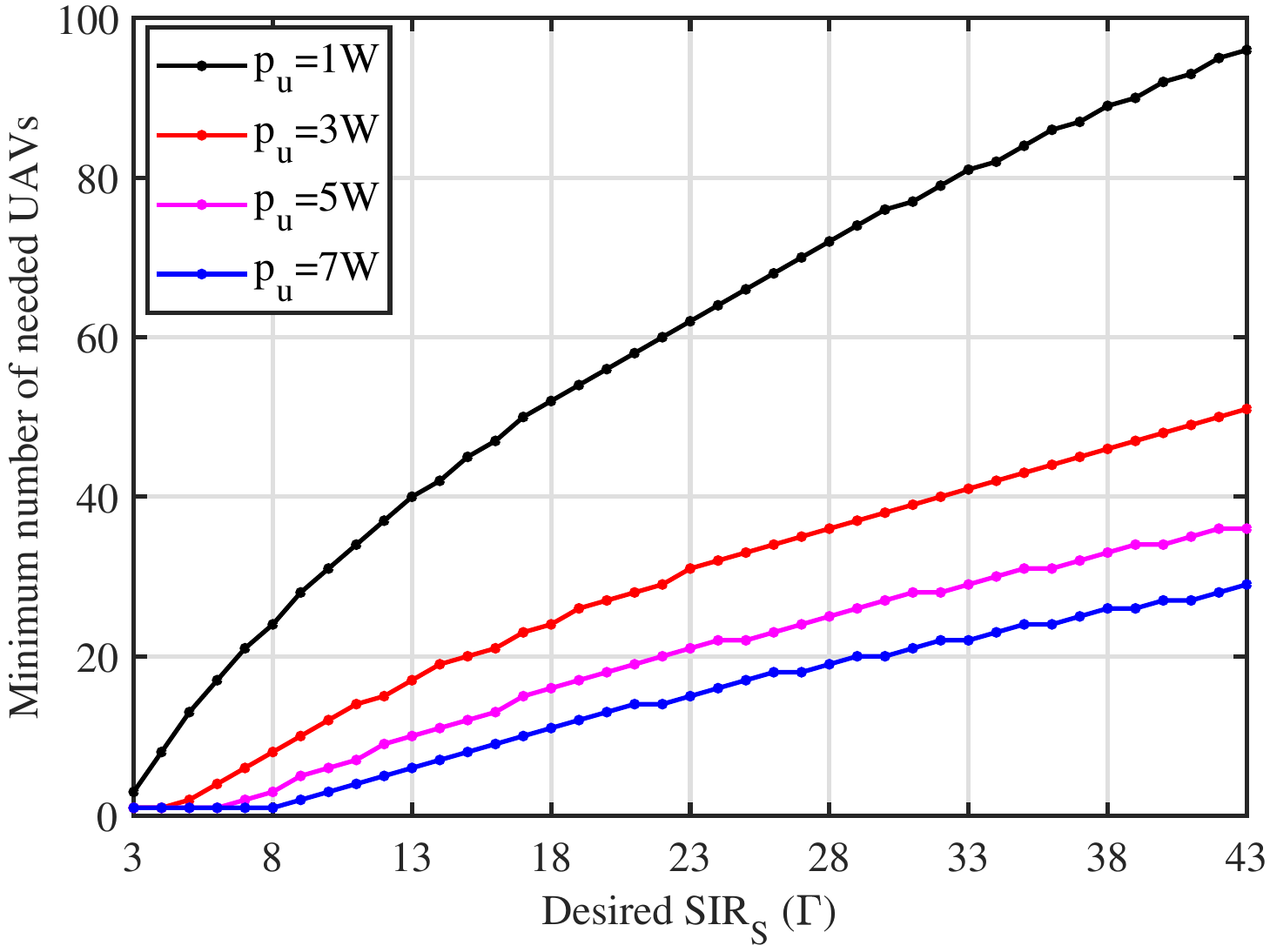}
		\caption{The minimum required number of UAVs to satisfy various values of $\textrm{SIR}_S$ for different UAV transmission powers.\label{fig:minpower}}
		\endminipage
		\vspace{-6.5mm}
	\end{figure}
		
Fig.~\ref{fig:rate} compares the $\textrm{SIR}$ of the system obtained using Theorem~\ref{th:givenh} to both the random placement, the performance of which is obtained by randomly placing the UAV in $1000$ Monte-Carlo iterations, and the method described in~\cite{chen2018multiple}, which does not capture the existence of the MSI (see Section~\ref{sec:intro}). {\color{black} In this simulation, it is assumed that the altitude of the UAV changes in the interval $[10\textrm{m},50\textrm{m}]$ and the simulation is conducted for two realizations of $p_t$ and $p_u$ described in the figure.  The rest of the parameters are described in the third sub-table of Table~\ref{table:1}}. As can be seen, the difference between the performance of our approach and the baselines is more prominent in low altitudes (up to $65\%$ increase in $\textrm{SIR}_S$). Also, on average, our method leads to $30.14\%$ and $25.73\%$ increase in $\textrm{SIR}_S$ as compared to the random placement and the method of~\cite{chen2018multiple}, respectively.


	

  \subsection{{\color{black} Multi-Hop Setting }}
  {\color{black} Considering the parameters described in the fourth sub-table of Table~\ref{table:1},} based on~\eqref{eq:Opt}, the minimum required number of UAVs to satisfy different values of $\textrm{SIR}_S$ when $p_u=1\textrm{W}$ for multiple MSI positions  (when $X_{_{\textrm{MSI}}}=500\textrm{m},Y_{_{\textrm{MSI}}}=400\textrm{m}$ for multiple $p_u$ values) is depicted in Fig.~\ref{fig:minpos} (Fig.~\ref{fig:minpower}). From Fig.~\ref{fig:minpos}, it can be observed that as the MSI gets closer to the Tx/Rx the required number of UAVs increases. Also, from Fig.~\ref{fig:minpower}, it can be seen that by increasing the $p_u$ the required number of UAVs decreases. 
   
   {\color{black} Considering the parameters described in the fifth sub-table of Table~\ref{table:1}}, $d_{min}=4\textrm{m}$, and $\epsilon=0.1$, Fig.~\ref{fig:conv} depicts the performance of our distributed algorithm  (Algorithm~\ref{alg:fulldist}) for various number of UAVs in the network. From Fig.~\ref{fig:conv}, it can be seen that as the number of UAVs increases our algorithm achieves larger values of $\textrm{SIR}_S$ with a faster convergence speed. The faster convergence is due to the larger coverage length when having a large number of UAVs. Fig.~\ref{fig:gainDist} reveals the significant increase in the $\textrm{SIR}_S$ obtained through comparing the achieved $\textrm{SIR}_S$ using our distributed algorithm with both the method described in~\cite{chen2018multiple} and the random placement, the performance of which is obtained by randomly placing the UAVs in $1000$ Monte-Carlo iterations.

			 \begin{figure}[t]
		
		\minipage{1\linewidth}
		\includegraphics[width=3.in, height=1.74in]{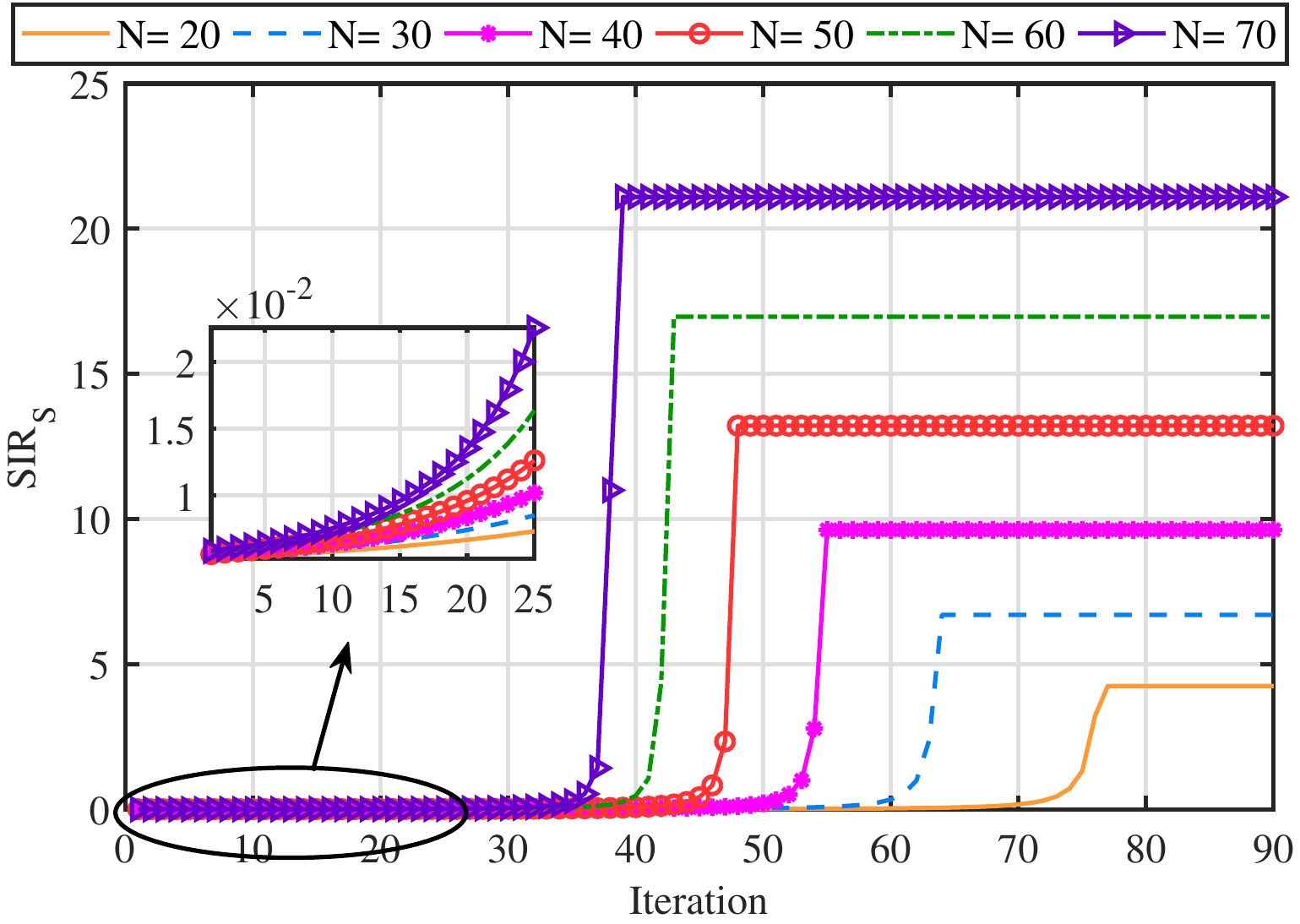}
		\caption{The $\textrm{SIR}_S$ w.r.t the iteration for our proposed distributed algorithm considering various number of UAVs in the system. \label{fig:conv}}
		\endminipage
		
		\minipage{1\linewidth}
		\includegraphics[width=3.in, height=1.74in]{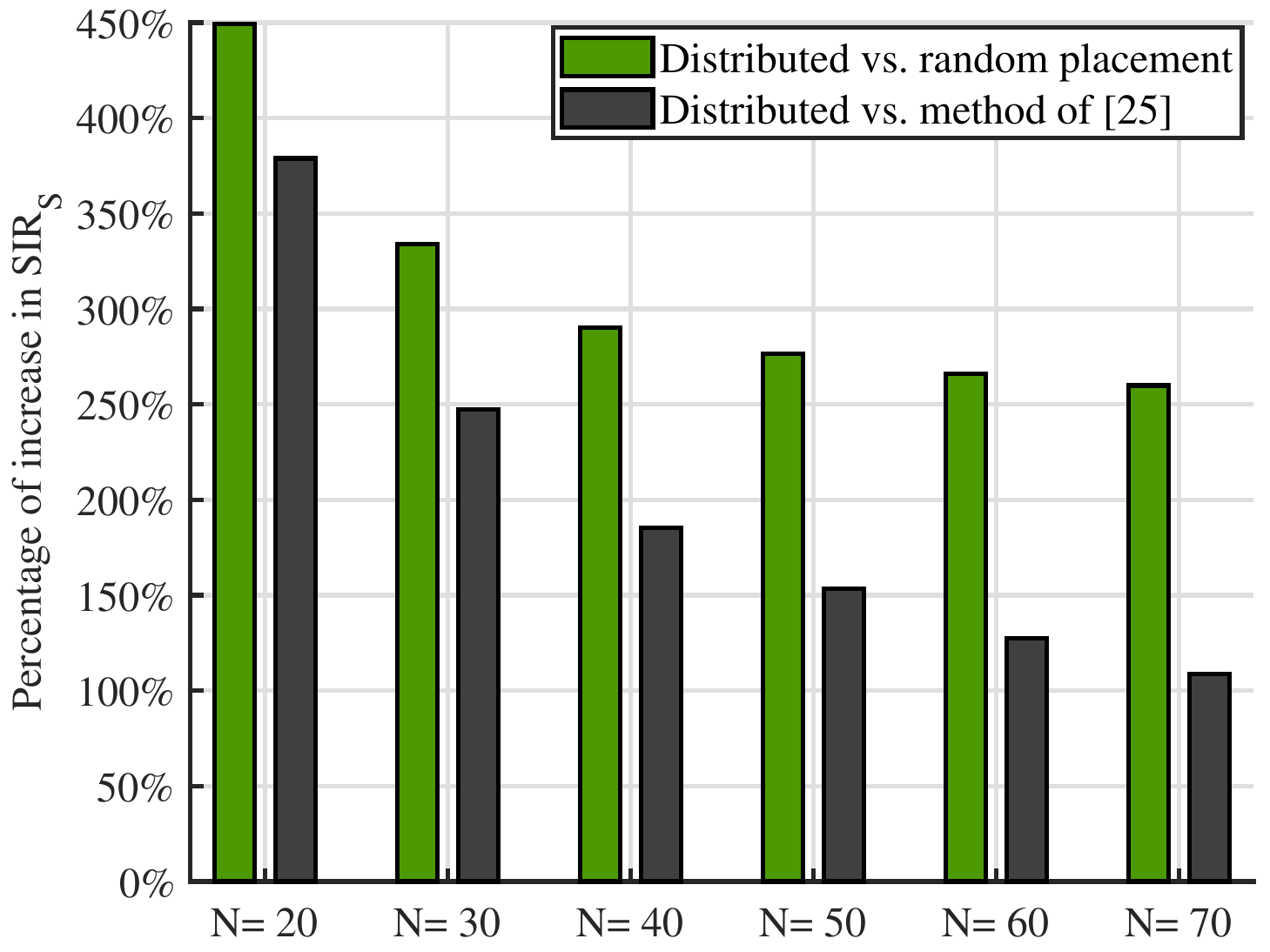}		\caption{Percentage of increase in $\textrm{SIR}_S$ by comparing our distributed algorithm with the random placement and the method in~\cite{chen2018multiple} for various number of UAVs ($N$).\label{fig:gainDist}}
		\endminipage
		
		\minipage{1\linewidth}
		\includegraphics[width=3.in, height=1.74in]{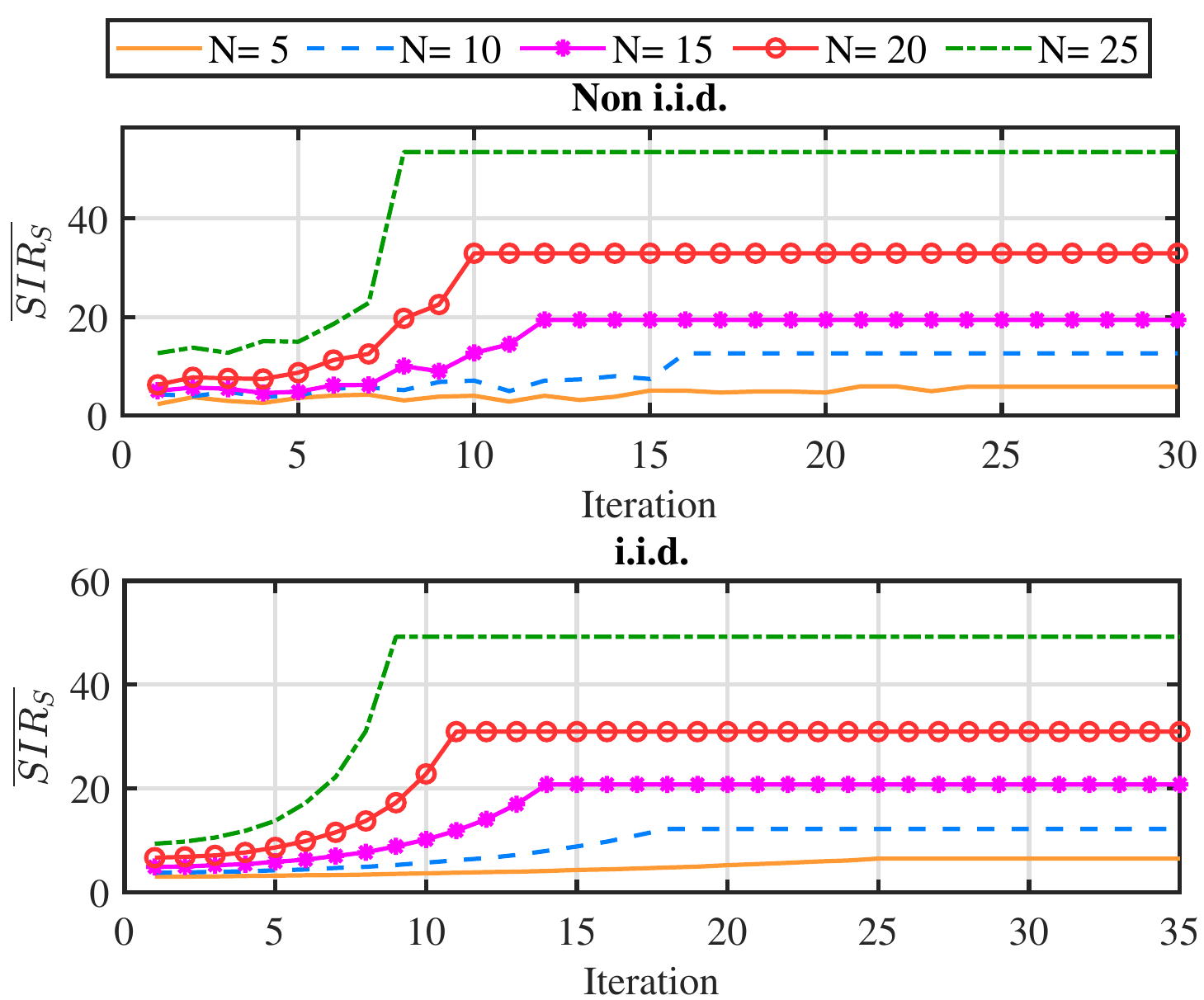}
		\caption{The $\textrm{SIR}_S$ w.r.t the iteration for our proposed distributed algorithm (Algorithm~\ref{alg:fulldistStochastic}) upon existence of stochastic interference, which is non-i.i.d (top subplot) and i.i.d. (bottom subplot) in through the $x$-axis. The number of UAVs in the system is denoted by $N$. \label{fig:conv2}}
		\endminipage
				\vspace{-6.5mm}
	\end{figure}

			Position planning upon having stochastic interference leads to similar results as compared to the previous performed simulations, especially upon having i.i.d. interference along $x$-axis. As an example, considering the parameters of the previous simulation, Fig.~\ref{fig:conv2} depicts the performance of our distributed algorithm (Algorithm~\ref{alg:fulldistStochastic}) for various number of UAVs in the network. We assume that the interference follows the Beta distribution (see Section~\ref{sec:betaDistribution}). We first consider the interference to be non-i.i.d. through the $x$-axis, where $\alpha_x$ and $\beta_x$ are chosen such that $\frac{\alpha_x+\beta_x}{\alpha_x}$ ranges from $0.6 \Lambda^{-1}$ to $\Lambda^{-1}$, $\forall x\in [0,D]$, where $\Lambda = \left(\eta_{_{\textrm{NLoS}}}h^2\right)^{-1}$ is the power of the signal received at the UAV if the UAV is located above the Tx (at $x=0$) for a normalized transmitter power. Also, the performance of the algorithm upon having i.i.d. interference with $\frac{\alpha_x+\beta_x}{\alpha_x}=0.8 \Lambda$, $\forall x\in [0,D]$, is depicted in the bottom subplot of Fig.~\ref{fig:conv2}. Comparing the two subplots, upon having the non-i.i.d. interference, moving from one iteration to the next may not lead to a better SIR\_S due to the unpredictable power of interference at different positions during the iterations (see \eqref{eq:num}); however, in both cases, the convergence is achieved through a few iterations. Also, comparing the performance of our algorithm with the baselines will result in similar results depicted in Fig.~\ref{fig:gainDist}, which is omitted to avoid redundancy.

		{\color{black}
		\subsection{Adjusting the altitudes and the horizontal positions of the UAVs }
				We first conduct a set of simulations to demonstrate the effect of the altitudes of the UAVs on the performance of Algorithm~\ref{alg:fulldist}, when the UAVs have the same altitude. The results are depicted in Fig.~\ref{fig:diff_alt} assuming the number of UAVs $N=50$.
		We use the same simulation parameters as in Figs.~\ref{fig:conv} and \ref{fig:gainDist} except the value for $Y_{_{\textrm{MSI}}}$, which is variable. As can be seen, as the MSI gets closer to the UAVs (smaller values of $Y_{_{\textrm{MSI}}}$), the altitude that results in the best performance increases. However, increasing the altitude after a certain value results in performance reduction. This is because increasing the altitude initially results in decreasing the effect of interference on the UAVs; however, after a certain point the link from the Tx to the first UAV and from the last UAV to the Rx become the bottleneck and increasing the altitude leads to a reduction in the SIR$_{S}$.
		
		To demonstrate the effect of adjustments of the altitudes and the horizontal positions of the UAVs on the SIR$_S$ using Algorithm~\ref{alg:obtainHinDist}, we fix the location of MSI at $X_{_{\textrm{MSI}}}=500\textrm{m},Y_{_{\textrm{MSI}}}=150\textrm{m}$ and choose the corresponding altitude that results in the best performance of Algorithm~\ref{alg:fulldist} from the red curve with circle markers in Fig.~\ref{fig:diff_alt}, i.e., $h=220\textrm{m}$, as the initial altitude of all the UAVs. Also, the initial horizontal positions of the UAVs are obtained using~Algorithm~\ref{alg:fulldist}. Fig.~\ref{fig:config_UAVs} depicts the location of the Tx, the MSI, the Rx, and the UAVs in the 2-D plane for different number of iterations of Algorithm~\ref{alg:obtainHinDist}. Two key observations from  Fig.~\ref{fig:config_UAVs} are as follows: i) the algorithm brings the first few UAVs closer to the Tx, the last few UAVs closer to the Rx, while moving the middle UAVs away from the MSI; ii) the algorithm leads to larger horizontal separations between the first few and the last few UAVs, while the middle UAVs are closer. The reason behind these observations is that the effect of interference on the middle UAVs is larger since they are closer to the MSI; and thus the algorithm brings them closer to each other and moves them away from the MSI. Note that these observations and the semicircle 2-D shape of the UAVs configuration in Fig.~\ref{fig:config_UAVs} are parameter specific and may vary for a different parameter setting, e.g., a different location for the MSI. Fig.~\ref{fig:SIR_greedy} depicts the increase in SIR$_S$ achieved using Algorithm~\ref{alg:obtainHinDist}, where $25\%$ increase in SIR$_S$ can be observed.
		}
		
			 \begin{figure}[t]
\centering
\minipage{1\linewidth}
		\includegraphics[width=3.in, height=1.74in]{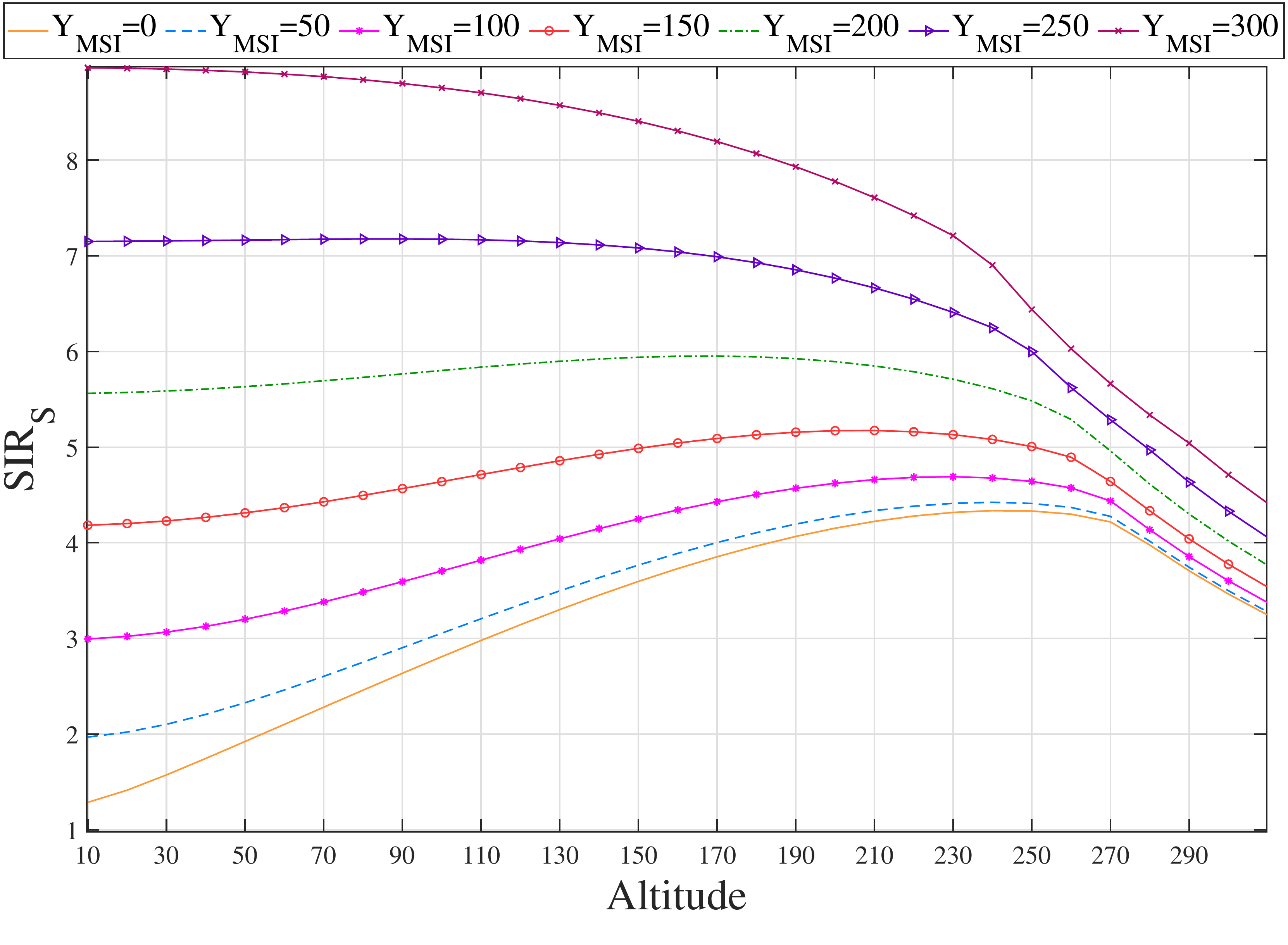}
		\caption{{\color{black}The final SIR$_S$ achieved through Algorithm~\ref{alg:fulldist}, for different altitudes, when all the UAVs have the same altitude specified in the x-axis.\label{fig:diff_alt}}}
		\endminipage

\minipage{1\linewidth}
		\includegraphics[width=3.in, height=1.74in]{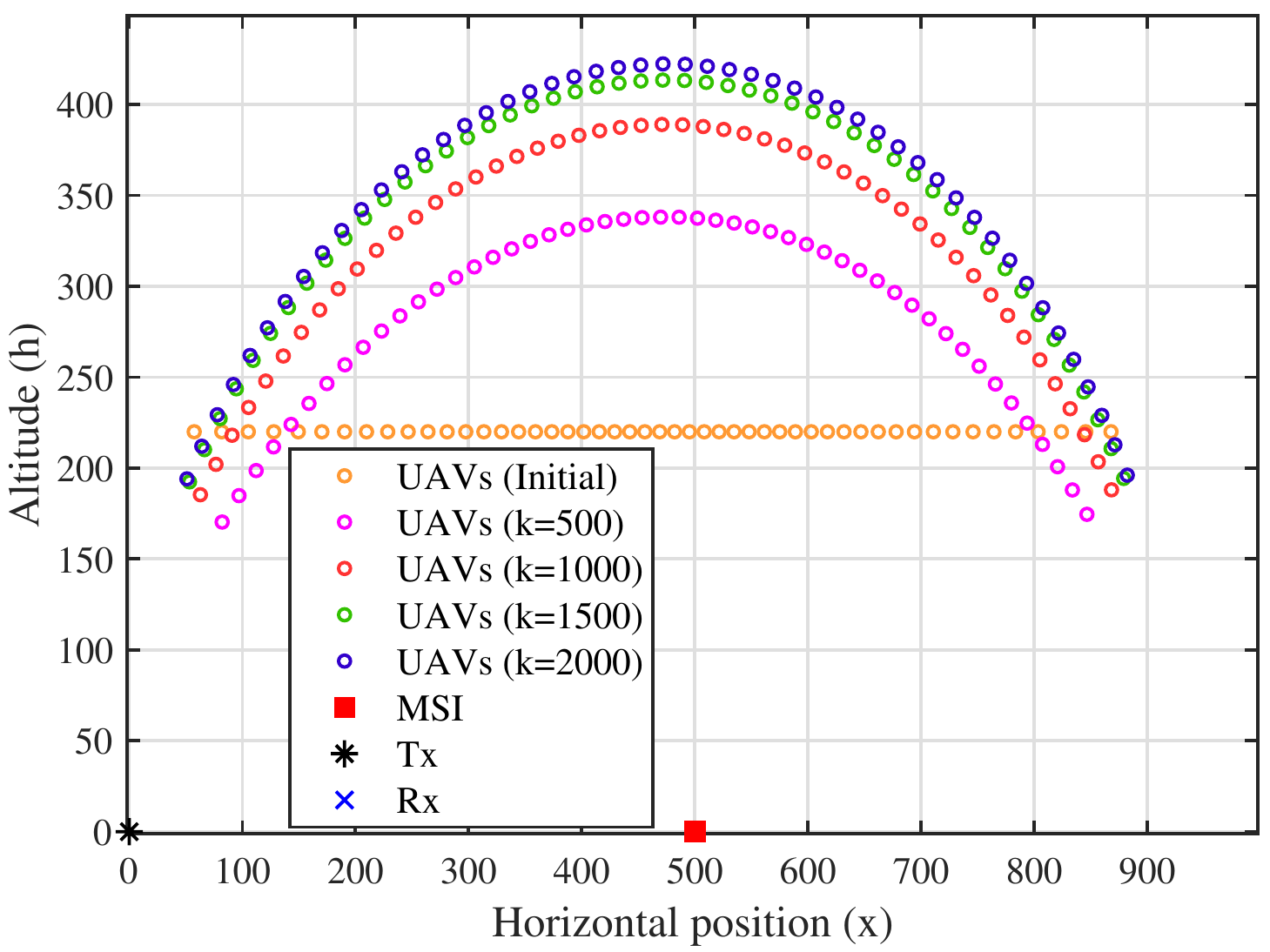}
		\caption{{\color{black}Locations of the UAVs in x-h plane for different number of iterations (k) of Algorithm~\ref{alg:obtainHinDist}. Positions of the Tx, the MSI, and the Rx are also depicted in the figure.\label{fig:config_UAVs}}}
			\endminipage

\minipage{1\linewidth}
		\includegraphics[width=3.in, height=1.74in]{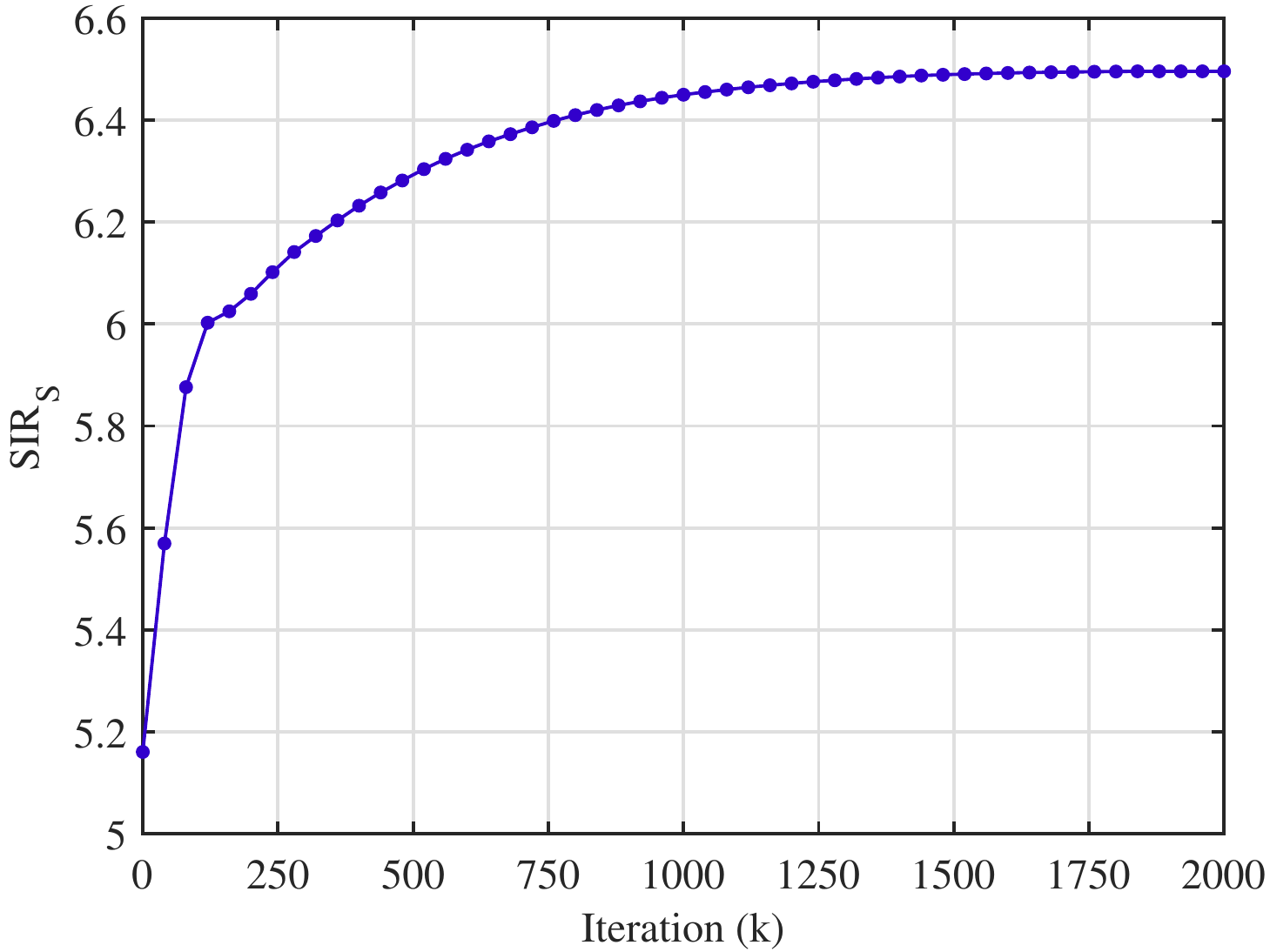}		\caption{{\color{black}The SIR$_S$ w.r.t the iteration number (k) of Algorithm~\ref{alg:obtainHinDist}. \label{fig:SIR_greedy}}}
		\endminipage
		\end{figure}
		
			\section{Conclusion}
\noindent
In this work, we studied the UAV-assisted relay wireless communication paradigm considering the presence of interference in the environment. We investigated the UAV(s) position planning considering two scenarios: i) existence of an MSI, and ii) existence of stochastic interference. For each scenario, we first endeavored to maximize the (average) SIR of the system considering a single UAV in the network. Afterward, for each scenario, we studied the position planning in multi-hop relay scheme, in which the utilization of multiple UAVs is feasible. To this end, we first proposed a theoretical approach, which simultaneously determines the minimum number of needed UAVs and their optimal positions so as to satisfy a desired (average) SIR of the system. Second, for a given number of UAVs in the network, we proposed a distributed algorithm along with its performance guarantee, which solely requires message exchange between the adjacent UAVs so as to maximize the (average) SIR of the system. Furthermore, we illustrated the performance of our methods through numerical simulations. The methodology of this work can inspire multiple future works revisiting the previously studied problems in the context of UAV-assisted relay wireless communications considering the existence of interference in the environment. {\color{black} Also, investigating the studied problems in this paper upon having multiple pairs of Tx and Rx is a promising future work.}

\bibliographystyle{IEEEtran}
\bibliography{ABSbib}
\end{document}